\newtheorem{theorem}{Theorem}[section]
\newtheorem{lemma}[theorem]{Lemma}
\newtheorem{definition}[theorem]{Definition}
\newtheorem{corollary}[theorem]{Corollary}
\newtheorem{condition}[theorem]{Condition}
\newcommand{\N}{\mathbb{N}}
\newcommand{\R}{\mathbb{R}}
\newcommand{\E}{\mathbb{E}}
\newcommand{\Ss}{\mathbb{S}}
\newcommand{\BO}{\mathcal{O}}
\newcommand{\sr}[2]{\stackrel{\eqref{#1}}{#2}}
\newcommand{\acc}{\emph{accept}\xspace}
\newcommand{\slp}{\emph{sleep}\xspace}
\newcommand{\rdy}{\emph{ready}\xspace}
\newcommand{\prop}{\emph{propose}\xspace}
\newcommand{\rec}{\emph{recover}\xspace}
\newcommand{\init}{\emph{init}\xspace}
\newcommand{\join}{\emph{join}\xspace}
\newcommand{\none}{\emph{none}\xspace}
\newcommand{\res}{\emph{resync}\xspace}
\newcommand{\supp}{\emph{supp}\xspace}
\newcommand{\dorm}{\emph{dormant}\xspace}
\newcommand{\srw}{\ensuremath{\emph{sleep}\rightarrow\emph{waking}}\xspace}
\newcommand{\srr}{\ensuremath{\emph{supp}\rightarrow\emph{resync}}\xspace}
\newcommand{\act}{\emph{active}\xspace}
\newcommand{\pass}{\emph{passive}\xspace}
\newcommand{\wake}{\emph{waking}\xspace}
\newcommand{\wait}{\emph{wait}\xspace}
\newcommand{\darts}{\mbox{\sc{darts}}\xspace}
\newcommand{\Next}[1][]{\mbox{\sc{next}\ensuremath{#1}}\xspace}
\newcommand{\accp}{\emph{accept\ensuremath{^+}}\xspace}
\newcommand{\rdyp}{\emph{ready\ensuremath{^+}}\xspace}
\newcommand{\propp}{\emph{propose\ensuremath{^+}}\xspace}
\DeclareMathOperator{\Mem}{Mem}
\DeclareMathOperator{\Time}{Time}
\newcommand{\namedref}[2]{\hyperref[#2]{#1~\ref*{#2}}}
\newcommand{\sectionref}[1]{\namedref{Section}{#1}}
\newcommand{\theoremref}[1]{\namedref{Theorem}{#1}}
\newcommand{\defref}[1]{\namedref{Definition}{#1}}
\newcommand{\figureref}[1]{\namedref{Figure}{#1}}
\newcommand{\figurerefs}[1]{\namedref{Figures}{#1}}
\newcommand{\lemmaref}[1]{\namedref{Lemma}{#1}}
\newcommand{\corollaryref}[1]{\namedref{Corollary}{#1}}
\newcommand{\conditionref}[1]{\namedref{Condition}{#1}}
\newcommand{\equalityref}[1]{\hyperref[#1]{Equality~\eqref{#1}}}
\newcommand{\inequalityref}[1]{\hyperref[#1]{Inequality~\eqref{#1}}}
\newcommand{\emn}[1]{{\em #1}\/}
\newcommand{\LRa}{\Leftrightarrow}
\newcommand{\bfno}[1]{{\noindent\bf #1}\/}
\newcommand{\theterm}{(R_1+4\Delta_g+T_1+(8\vartheta+16)d)}
\begin{document}

\title{FATAL$^+$: A Self-Stabilizing Byzantine Fault-tolerant Clocking Scheme
for SoCs}

\author{\IEEEauthorblockN{Danny Dolev\IEEEauthorrefmark{1},
Matthias F\"ugger\IEEEauthorrefmark{2},
Christoph Lenzen\IEEEauthorrefmark{3},
Markus Posch\IEEEauthorrefmark{2},
Ulrich Schmid\IEEEauthorrefmark{2},
and Andreas Steininger\IEEEauthorrefmark{2}}
\IEEEauthorblockA{\IEEEauthorrefmark{1}
Hebrew University of Jerusalem\\
Jerusalem, Israel\\
Email: dolev@cs.huji.ac.il}
\IEEEauthorblockA{\IEEEauthorrefmark{2}
Vienna University of Technology\\
Vienna, Austria\\
Email: \{fuegger,mposch,s,steininger\}@ecs.tuwien.ac.at}
\IEEEauthorblockA{\IEEEauthorrefmark{3}
Weizmann Institute of Science\\
Rehovot, Israel\\
Email: clenzen@cs.huji.ac.il}}

\maketitle

\begin{abstract}
We present concept and implementation of a self-stabilizing Byzantine
fault-tolerant distributed clock generation scheme for multi-synchronous GALS
architectures in critical applications. It combines a variant of a recently
introduced self-stabilizing algorithm for generating low-frequency, low-accuracy
synchronized pulses with a simple non-stabilizing high-frequency,
high-accuracy clock synchronization algorithm. We provide thorough correctness
proofs and a performance analysis, which use methods from fault-tolerant distributed
computing research but also addresses hardware-related issues like
metastability. The algorithm, which consists of several concurrent communicating
asynchronous state machines, has been implemented in VHDL using Petrify in
conjunction with some extensions, and synthetisized for an Altera Cyclone FPGA.
An experimental validation of this prototype has been carried out to confirm the
skew and clock frequency bounds predicted by the theoretical analysis, as well
as the very short stabilization times (required for recovering after excessively
many transient failures) achievable in practice.
\end{abstract}


\section{Introduction}
\label{sec:Intro}

To circumvent the cumbersome clock tree engineering
issue~\cite{BZMLCLD02,Fri01,MDM04,Resetal01}, \emph{systems-on-chip}
(SoC) are nowadays increasingly designed \emph{globally asynchronous locally
synchronous} (GALS) \cite{Cha84}. Using independent and hence unsynchronized
clock domains requires asynchronous cross-domain communication mechanisms or
synchronizers~\cite{DB99,KBY02,PM95}, however, which inevitably create the
potential for metastability \cite{Mar81}. This problem can be circumvented by means of
multi-synchronous clocking \cite{SG03,TGL07}, which guarantees a certain degree
of synchrony between clock domains. Multi-synchronous GALS is particularly
beneficial from a designer's point of view, since it combines the convenient
local synchrony of a GALS system with a global time base across the whole chip,
including the ability for metastability-free high-speed communication across
clock domains~\cite{PHS09:SSS}.

The decreasing feature sizes of deep submicron VLSI technology also resulted in an
increased likelihood of chip components failing during operation: Reduced
voltage swings and smaller critical charges make circuits more susceptible to
ionized particle hits, crosstalk, and electromagnetic
interference~\cite{Con03,KHP04,Bau05,GEBC06,MKB04,MA01}. \emph{Fault-tolerance}
hence becomes an increasingly pressing issue also for chip design.
Unfortunately, faulty components may behave non-benign in many ways: They
may perform signal transitions at arbitrary times and even convey inconsistent
information to their successor components if their outputs
are affected by a failure. Well-known theory on fault-tolerant agreement and
synchronization shows that this behaviour is the key feature of unrestricted,
i.e., \emph{Byzantine} faults \cite{PSL80}. This forces to model faulty
components as Byzantine if a high fault coverage is to be guaranteed.

Unfortunately, lower-bound results \cite{PSL80,DHS86} reveal that, in order to
cope with some maximum number $f$ of Byzantine faulty components (say,
processors) throughout an execution of a system, $n\geq 3f+1$ components are
required. Given the typically transient nature of failures in digital circuits,
these bounds reveal that even a Byzantine fault-tolerant system cannot be
expected to recover from a situation where more than $f$ components became
faulty transiently, since their state may be corrupted. Dealing with this
problem is in the realm of \emph{self-stabilizing algorithms} \cite{Dol00},
which are guaranteed to recover even if each and every component of the system
fails arbitrarily, but later on works according to its specification again:
in that case the system resumes correct operation after some \emph{stabilization
time} following the instant when no more failures occur.
\emph{Byzantine-tolerant self-stabilizing
algorithms}~\cite{BDH08:podc,DD06,DolWelSSBYZCS04,HDD06:SSS,Mal06:SSS,DH07:SSS,DFLS11:sss}
combine the best of both worlds, by guaranteeing both correct operation and
self-stabilization in the presence of up to $f$ Byzantine faulty components in
the system.

This paper presents concept and prototype implementation of a novel approach,
termed FATAL$^+$, for multi-synchronous clocking in GALS systems. It relies on a
self-stabilizing and Byzantine fault-tolerant distributed algorithm, consisting
of $n$ identical instances (called \emn{nodes}), which generate $n$ local clock
signals (one for each clock domain) with the following properties: \emph{Bounded
skew}, i.e., bounded maximum time between the $k$-th clock transitions of any
two clock signals of correct nodes, and \emph{bounded accuracy (i.e.,
frequency)}, i.e., bounded minimum and maximum time between the occurence of any
two successive clock transitions of the clock signal at any correct node. At
most $f< n/3$ nodes may behave Byzantine faulty, in which case their clock
signals may be arbitrary. The whole algorithm can be directly implemented in
hardware, without quartz oscillators, using standard asynchronous logic gates
only.

FATAL$^+$ self-stabilizes within $\BO(kn)$ time with probability $1-2^{-k(n-f)}$
(with constant expectation in typical settings), and is metastability-free by
construction after stabilization in failure-free runs.\footnote{It is easy to see that,
metastable upsets cannot be ruled out in executions involving Byzantine faults.
However, they can be made as unlikely as desired by using synchronizers or elastic
pipelines acting as metastability filters \cite{FFS09:ASYNC09}.} If the number
of faults is not overwhelming, i.e., a majority of at least $n-f$ nodes
continues to execute the protocol in an orderly fashion, recovering nodes and
late joiners (re)synchronize deterministically in constant time.

{\bf Detailed contributions:} (1) In
Sections~\ref{sec:model}--\ref{sec:application}, we present concept and
theoretical analysis of FATAL$^+$, which is based on a variant of the randomized
self-stabilizing Byzantine-tolerant pulse synchronization algorithm
\cite{DFLS11:sss} we recently proposed. It eventually generates synchronized
periodic pulses with moderate skew and low frequency, and improves upon the
results from \cite{DFLS11:sss} in that it tolerates arbitrarily large clock
drifts and allows late joiners or nodes recovering from transient faults to
deterministically resynchronize within constant time. The formal proof of these
properties builds upon and extends the analysis in \cite{DFLS11:TR}.
In~\sectionref{sec:application}, this algorithm is integrated with a
Byzantine-tolerant but non-self-stabilizing tick generation algorithm based on
Srikanth \& Touegs clock synchronization algorithm \cite{ST87}, operating in a
control loop: The latter, referred to as the \emph{quick cycle} algorithm,
generates clock ticks with high frequency and small skew, which also (weakly)
affect pulse generation. On the other hand, quick cycle uses pulses to monitor
its ticks in order to detect the need for stabilization.

(2) In \sectionref{sec:implementation}, we present the major ingredients of an
Altera Cyclone IV FPGA protoype implementation of FATAL$^+$. It primarily
consists of multiple hybrid (asynchronous + synchronous) state machines, which
have been generated semi-automatically from the specification of the algorithms
using Petrify \cite{CKKLY02}. Non-standard extensions were needed for ensuring
deadlock-free communication despite arbitrarily many desynchronized nodes, some
of which could be Byzantine faulty, which e.g.\ forced us to use state-based
communication instead of handshake-based communication. Special care had also to
be exercised for ensuring self-stabilizing elementary building blocks and
metastability-freedom in normal operation (after stabilization).

(3) In \sectionref{sec:experiments}, we provide some results of the experimental
evaluation of our prototype implementation. They demonstrate the feasibility of
FATAL$^+$ and confirm the results of our theoretical analysis, in particular,
a tight skew bound, in the presence of Byzantine faulty nodes.
Special emphasis has been put on experiments validating the predictions related to
stabilization time, which revealed that the system indeed stabilizes in very
short time from any initial/error state.

\sectionref{sec:conclusions} eventually concludes our paper.

{\bf Related work:} 
The work \cite{FM05,MA03,MA04,Fai04} on distributed clock generation in VLSI
circuits is essentially based on (distributed) ring oscillators, formed by
regular structures (rings, meshes) of multiple inverter loops. Since clock
synchronization theory \cite{DHS86} reveals that high connectivity is required
for bounded synchronization tightness in the presence of failures, these
approaches are fundamentally restricted in that they can overcome at most a
small constant number of Byzantine failures.

The only exception we are aware of is the \darts\ fault-tolerant clock
generation approach~\cite{FSFK06:edcc,FDS10:edcc}, which also adresses
multi-synchronous clocking in GALS systems. Like FATAL$^+$, \darts\ is based on
a fault-tolerant distributed algorithm~\cite{WS09:DC} implemented in
asynchronous digital logic. Although it shares many features with FATAL$^+$,
including Byzantine fault-tolerance, it is not self-stabilizing: If more than
$f$ nodes ever become faulty, the system will not recover even if all nodes work
correctly thereafter. Moreover, in \darts, simple transient faults such as
radiation- or crosstalk-induced additional (or omitted) clock ticks accumulate
over time to arbitrarily large skews in an otherwise benign execution. Despite
not suffering from these drawbacks, FATAL$^+$ offers similar
guarantees in terms of area consumption, clock skew, and amortized
frequency as \darts.

Furthermore, a number of Byzantine-tolerant self-stabili\-zing clock
synchronization
protocols~\cite{BDH08:podc,DD06,DolWelSSBYZCS04,HDD06:SSS,Mal06:SSS,DH07:SSS}
have been devised by the distributed systems community. Beyond optimal
resilience, an attractive feature of most of these protocols is a small
stabilization time. However, all of them exhibit deficiencies rendering them
unsuitable in the VLSI context. This motivated to devise the algorithm
from~\cite{DFLS11:sss,DFLS11:TR}, an improved variant of which forms the basis
of FATAL$^+$.

\section{Model}\label{sec:model}

In this section we introduce our system model. Our formal framework will be tied
to the peculiarities of hardware designs, which consist of modules that
\emph{continuously}\footnote{In sharp contrast to classic distributed computing
models, there is no computationally complex discrete zero-time state-transition
here.} compute their output signals based on their input signals.

\subsection*{Signals}

Following \cite{Fue10:diss,FS10:TR}, we define (the trace of) a
\emn{signal} to be a timed event trace over a finite alphabet $\Ss$ of possible
signal states: Formally, signal $\sigma \subseteq \Ss \times \R_0^+$. All times
and time intervals refer to a global \emph{reference time} taken from $\R_0^+$,
that is, signals reflect the system's state from time~0 on. The elements of
$\sigma$ are called \emn{events}, and for each event $(s,t)$ we call $s$ the
\emn{state of event} $(s,t)$ and $t$ the \emn{time of event} $(s,t)$. In
general, a signal $\sigma$ is required to fulfill the following conditions: (i)
for each time interval $[t^-,t^+]\subseteq \R_0^+$ of finite length, the number
of events in $\sigma$ with times within $[t^-,t^+]$ is finite, (ii) from $(s,t)
\in \sigma$ and $(s',t)\in \sigma$ follows that $s=s'$, and (iii) there exists
an event at time~$0$ in $\sigma$.

Note that our definition allows for events $(s,t)$ and $(s,t')\in \sigma$, where
$t < t'$, without having an event $(s',t'')\in \sigma$ with $s'\neq s$ and $t <
t'' < t'$. In this case, we call event $(s,t')$ \emph{idempotent}. Two signals
$\sigma$ and $\sigma'$ are \emph{equivalent}, iff they differ in idempotent
events only. We identify all signals of an equivalence class, as they describe
the same physical signal. Each equivalence class $[\sigma]$ of signals contains
a unique signal $\sigma_0$ having no idempotent events. We say that \emn{signal
$\sigma$ switches to} $s$ at time~$t$ iff event $(s,t)\in \sigma_0$.

The \emn{state of signal}~$\sigma$ at time $t \in \R_0^+$, denoted by
$\sigma(t)$, is given by the state of the event with the maximum time not
greater than $t$.\footnote{To facilitate intuition, we here slightly abuse
notation, as this way $\sigma$ denotes both a function of time and the signal
(trace), which is a subset of $\Ss \times \R_0^+$. Whenever referring to
$\sigma$, we will talk of the signal, not the state function.} Because of (i),
(ii) and (iii), $\sigma(t)$ is well defined for each time~$t\in \R_0^+$. Note
that $\sigma$'s state function in fact depends on $[\sigma]$ only, i.e., we may
add or remove idempotent events at will without changing the state function.

\subsection*{Distributed System}

On the topmost level of abstraction, we see the system as a set of
$V=\{1,\ldots,n\}$ physically remote \emph{nodes} that communicate by means of
\emph{channels}. In the context of a VLSI circuit, ``physically remote''
actually refers to quite small distances (centimeters or even less). However, at
gigahertz frequencies, a local state transition will not be observed remotely
within a time that is negligible compared to clock speeds. We stress this point,
since it is crucial that different clocks (and their attached logic) are not placed too
close to each other, as otherwise they might fail due to the same event such as
a particle hit. This would render it pointless to devise a system that is
resilient to a certain fraction of the nodes failing.

Each node~$i$ comprises a number of \emn{input ports}, namely $S_{i,j}$ for each
node~$j$, an \emn{output port} $S_i$, and a set of \emn{local ports}, introduced
later on. An \emn{execution} of the distributed system assigns to each port of
each node a signal. For convenience of notation, for any port $p$, we refer to
the signal assigned to port $p$ simply by signal~$p$. We say that \emn{node~$i$
is in state~$s$} at time~$t$ iff $S_i(t)=s$. We further say that \emn{node~$i$
switches to state~$s$} at time~$t$ iff signal~$S_i$ switches to~$s$ at time~$t$.

Nodes exchange their states via the channels between them: for each pair of
nodes $i,j$, output port $S_i$ is connected to input port $S_{j,i}$ by a FIFO
channel from $i$ to $j$. Note that this includes a channel from $i$ to $i$
itself. Intuitively, $S_i$ being connected to $S_{j,i}$ by a (non-faulty)
channel means that $S_{j,i}(\cdot)$ should mimic $S_i(\cdot)$, however, with a
slight delay accounting for the time it takes the channel to propagate events.
In contrast to an asynchronous system, this delay is bounded by the \emn{maximum
delay} $d > 0$.\footnote{With respect to $\BO$-notation, we normalize $d\in
\BO(1)$, as all time bounds simply depend linearly on $d$.}

Formally we define: The \emn{channel} from node $i$ to $j$ is said to
     be \emn{correct} during $[t^-,t^+]$ iff there exists a function
     $\tau_{i,j}: \R_0^+ \to \R_0^+$, called the channel's \emn{delay
     function}, such that: (i) $\tau_{i,j}$ is continuous and strictly
     increasing, (ii) $\forall t\in [\max(t^-,\tau_{i,j}(0)),t^+]:
     0 < t-\tau^{-1}_{i,j}(t) < d$, and (iii) for each $t \in
     [\max(t^-,\tau_{i,j}(0)),t^+]$, $(s,t) \in S_{j,i} \LRa
     (s,\tau^{-1}_{i,j}(t)) \in S_{i}$, and for each $t \in
     [t^-,\tau_{i,j}(0))$, $(s,t) \in S_{j,i} \Rightarrow s=S_{i}(0)$.
Note that because of (i), $\tau_{i,j}^{-1}$ exists in the domain
     $[\tau_{i,j}(0),\infty)$, and thus (ii) and (iii) are well
     defined.
We say that node $i$ \emph{observes node~$j$ in state~$s$} at time~$t$
     if $S_{i,j}(t)=s$.

\subsection*{Clocks and Timeouts}

Nodes are never aware of the current reference time and we also do not require
the reference time to resemble Newtonian ``real'' time. Rather we allow for
physical clocks that run arbitrarily fast or slow,\footnote{Note that the
formal definition excludes trivial solutions by requiring clocks' progress
to be in a linear envelope of the reference time, see below.} as long as their
speeds are close to each other in comparison. One may hence think of the
reference time as progressing at the speed of the currently slowest correct
clock. In this framework, nodes essentially make use of bounded clocks with
bounded drift.

Formally, clock rates are within $[1,\vartheta]$ (with respect to reference
time), where $\vartheta>1$ is constant and $\vartheta-1$ is the \emph{(maximum)
clock drift}. A \emph{clock} $C$ is a continuous, strictly increasing function
$C:\R^+_0\to \R^+_0$ mapping reference time to some local time. Clock $C$ is
said to be \emn{correct} during $[t^-,t^+]\subseteq \R^+_0$ iff we have for any
$t,t'\in [t^-,t^+]$, $t<t'$, that $t'-t\leq C(t')-C(t)\leq \vartheta (t'-t)$.
Each node comprises a set of clocks assigned to it, which allow the node to
estimate the progress of reference time.

Instead of directly accessing the value of their clocks, nodes have access to
so-called \emn{timeout ports} of watchdog timers. A \emph{timeout} is a triple
$(T,s,C)$, where $T\in \R^+$ is a duration, $s\in \Ss$ is a state, and $C$ is
some local clock (there may be several), say of node~$i$. Each timeout $(T,s,C)$
has a corresponding timeout port $\Time_{T,s,C}$, being part of node $i$'s local
ports. Signal $\Time_{T,s,C}$ is Boolean, that is, its possible states are from
the set $\{0,1\}$. We say that timeout $(T,s,C)$ is \emn{correct} during
$[t^-,t^+]\subseteq \R^+_0$ iff clock~$C$ is correct during $[t^-,t^+]$ and the
following holds:

\begin{enumerate}
  \item For each time $t_s\in [t^-,t^+]$ when node~$i$ switches to state~$s$,
  there is a time $t\in[t_s,\tau_{i,i}(t_s)]$ such that $(T,s,C)$ is \emph{reset},
  i.e., $(0,t)\in \Time_{T,s,C}$. This is a one-to-one correspondence, i.e.,
  $(T,s,C)$ is not reset at any other times.
  
  \item For a time $t\in [t^-,t^+]$, denote by $t_0$ the supremum of all times
  from $[t^-,t]$ when $(T,s,C)$ is reset. Then it holds that $(1,t)\in
  \Time_{T,s,C}$ iff $C(t)-C(t_0) = T$. Again, this is a one-to-one correspondence.
\end{enumerate}

We say that timeout $(T,s,C)$ \emph{expires} at time~$t$ iff
$\Time_{T,s,C}$ switches to~$1$ at time~$t$, and it \emph{is
expired} at time $t$ iff $\Time_{T,s,C}(t)=1$.
For notational convenience, we will omit the clock $C$ and simply write $(T,s)$
for both the timeout and its signal.

A \emn{randomized timeout} is a triple $({\cal D},s,C)$, where ${\cal D}$ is a
bounded random distribution on $\R^+_0$, $s\in \Ss$ is a state, and $C$ is a
clock. Its corresponding timeout port $\Time_{{\cal D},s,C}$ behaves very similar
to the one of an ordinary timeout, except that whenever it is reset, the local
time that passes until it expires next---provided that it is not reset again
before that happens---follows the distribution $\cal D$. Formally, $({\cal
D},s,C)$ is correct during $[t^-,t^+]\subseteq \R^+_0$, if $C$ is correct during
$[t^-,t^+]$ and the following holds:

\begin{enumerate}
  \item For each time $t_s\in [t^-,t^+]$ when node~$i$ switches to state~$s$,
  there is a time $t\in[t_s,\tau_{i,i}(t_s)]$ such that $({\cal D},s,C)$ is
  \emph{reset}, i.e., $(0,t)\in \Time_{{\cal D},s,C}$. This is a one-to-one
  correspondence, i.e., $({\cal D},s,C)$ is not reset at any other times.

  \item For a time $t\in [t^-,t^+]$, denote by $t_0$ the supremum of all times
  from $[t^-,t]$ when $({\cal D},s,C)$ is reset. Let $\mu:\R_0^+\to \R_0^+$
  denote the density of $\cal D$. Then $(1,t)\in \Time_{{\cal D},s,C}$ ``with
  probability $\mu(C(t)-C(t_0))$'' and we require that the probability of $(1,t)\in
  \Time_{{\cal D},s,C}$---conditional to $t_0$ and $C$ on $[t_0,t]$ being
  given---is independent of the system's state at times smaller than $t$. More
  precisely, if superscript $\cal E$ identifies variables in execution $\cal E$
  and $t_0'$ is the infimum of all times from $(t_0,t^+]$ when node~$i$ switches
  to state~$s$, then we demand for any $[\tau^-,\tau^+]\subseteq [t_0,t_0']$ that
  \begin{eqnarray*}
&&P\left[\exists t'\in [\tau^-,\tau^+]:(1,t')\in \Time_{{\cal D},s,C}
\phantom{\Big|}\right.\\
&&\phantom{P}\left.~\Big|\,
t_0^{\cal E}=t_0 \wedge C\big|_{[t_0,t']}^{\cal
E}=C\big|_{[t_0,t']}\right]\\
&=&\int_{\tau^-}^{\tau^+}\mu(C(\tau)-C(t_0))~d\tau,
\end{eqnarray*}
independently of ${\cal E}\big|_{[0,\tau^-)}$.
\end{enumerate}

We will apply the same notational conventions to randomized timeouts as we do
for regular timeouts.

Note that, strictly speaking, this definition does not induce a random variable
describing the time $t'\in [t_0,t_0')$ satisfying that $(1,t')\in \Time_{{\cal
D},s,C}$. However, for the state of the timeout port, we get the
meaningful statement that for any $t'\in [t_0,t_0')$,
\begin{eqnarray*}
&&P[\Time_{{\cal D},s,C}\mbox{ switches to
$1$ during }[t_0,t']]\\
&=&\int_{t_0}^{t'} \mu(C(\tau)-C(t_0))~d\tau.
\end{eqnarray*}
The reason for phrasing the definition in the above more cumbersome way is that
we want to guarantee that an adversary knowing the full present state of the
system and memorizing its whole history cannot reliably predict when the timeout
will expire.\footnote{This is a non-trivial property. For instance nodes could
just determine, by drawing from the desired random distribution at time $t_0$,
at which local clock value the timeout shall expire next. This would, however,
essentially reveal when the timeout will expire prematurely, greatly reducing
the power of randomization!}

We remark that these definitions allow for different timeouts to be driven by
the same clock, implying that an adversary may derive some information on the
state of a randomized timeout before it expires from the node's behavior, even
if it cannot directly access the values of the clock driving the timeout. This
is crucial for implementability, as it might be very difficult to guarantee
that the behavior of a dedicated clock that drives a randomized timeout
is indeed independent of the execution of the algorithm.

\subsection*{Memory Flags}

Besides timeout and randomized timeout ports, another kind of node~$i$'s local
ports are \emn{memory flags}. For each state $s\in \Ss$ and each node $j \in V$,
$\Mem_{i,j,s}$ is a local port of node~$i$. It is used to memorize whether
node~$i$ has observed node~$j$ in state $s$ since the last reset of the flag. We
say that node $i$ \emph{memorizes node $j$ in state $s$} at time $t$ if
$\Mem_{i,j,s}(t)=1$. Formally, we require that signal $\Mem_{i,j,s}$ switches
to~$1$ at time~$t$ iff node~$i$ observes node~$j$ in state~$s$ at time~$t$ and
$\Mem_{i,j,s}$ is not already in state~$1$. The times $t$ when $\Mem_{i,j,s}$ is
\emn{reset}, i.e., $(0,t)\in\Mem_{i,j,s}$, are specified by node~$i$'s state
machine, which is introduced next.

\subsection*{State Machine}

It remains to specify how nodes switch states and when they reset memory flags.
We do this by means of state machines that may attain states from the finite
alphabet $\Ss$. A node's state machine is specified by (i) the set $\Ss$, (ii) a
function $tr$, called the \emn{transition function}, from ${\cal T}\subseteq
\Ss^2$ to the set of Boolean predicates on the alphabet consisting of
expressions ``$p = s$'' (used for expressing guards), where $p$ is from the
node's input and local ports and $s$ is from the set of possible states of
signal~$p$, and (iii) a function $re$, called the \emn{reset function}, from
$\cal T$ to the power set of the node's memory flags.

Intuitively, the transition function specifies the conditions (guards) under
which a node switches states, and the reset function determines
which memory flags to reset upon the state change.
Formally, let $P$ be a predicate on node~$i$'s input and local ports.
We define $P$ \emn{holds at time $t$} by structural induction: If $P$
is equal to $p = s$, where $p$ is one of node $i$'s input and
local ports and $s$ is one of the states signal $p$ can obtain,
then $P$ \emn{holds at time $t$} iff $p(t)=s$.
Otherwise, if $P$ is of the form $\neg P_1$, $P_1 \wedge P_2$, or $P_1
\vee P_2$, we define $P$ \emn{holds at time $t$} in the
straightforward manner.

We say node~$i$ \emn{follows its state machine during} $[t^-,t^+]$ iff
the following holds: Assume node~$i$ observes itself in
state~$s\in \Ss$ at time~$t\in [t^-,t^+]$, i.e., $S_{i,i}(t)=s$.
Then, for each $(s,s')\in \cal T$, both:
\begin{enumerate}
  \item Node~$i$ switches to state~$s'$ at time~$t$ iff $tr(s,s')$ holds
  at time~$t$ and $i$ is not already in state~$s'$.\footnote{Recall that a
  node may still observe itself in state $s$ albeit already having switched to
  $s'$.} (In case more than one guard $tr(s,s')$ can be true at the same time,
  we assume that an arbitrary tie-breaking ordering exists among the transition
  guards that specifies to which state to switch.)
  
  \item Node $i$ resets memory flag $m$ at some time in the interval
  $[t,\tau_{i,i}(t)]$ iff $m\in re(s,s')$ and $i$ switches from state $s$ to
  state~$s'$ at time~$t$. This correspondence is one-to-one.
\end{enumerate}

A node is defined to be \emn{non-faulty} during $[t^-,t^+]$ iff during
$[t^-,t^+]$ all its timeouts and randomized timeouts are
correct and it follows its state machine. If it employs multiple state machines
(see below), it needs to follow all of them.

In contrast, a faulty node may change states arbitrarily.
Note that while a faulty node may be forced to send consistent output
state signals to all other nodes if its channels remain correct,
there is no way to guarantee that this still holds true if
channels are faulty.\footnote{A single physical fault may cause
this behavior, as at some point a node's output port must be
connected to remote nodes' input ports.
Even if one places bifurcations at different physical locations
striving to mitigate this effect, if the voltage at the output
port drops below specifications, the values of corresponding
input channels may deviate in unpredictable ways.}

\subsection*{Metastability}

While the presented model does not fully capture \emn{propagation} and
\emn{decay} of metastable upsets, i.e., the propagation of intermediate values
through combinational circuit elements, and the probability distributions on the
decay of metastable upsets, it allows to capture its \emn{generation}. An
algorithm is inherently susceptible to metastability by the lacking capability
of state machines to instantaneously take on new states: Node~$i$ decides on
state transitions based on the delayed status of port~$S_{i,i}$ instead of its
``true'' current state~$S_i$. Consider the following example: Node~$i$ is in
state $s$ at some time~$t$, but since it switched to $s$ only very recently, it
still observes itself in state $s' \neq s$ at time $t$. A metastable upset might
occur at time~$t$ (i) if the guard $tr(s',s)$ falls back to false at time~$t$,
or (ii) if there is another transition $(s',s'')$ in $T$ whose guard becomes
true at time~$t$. The treatment of scenario (i) is postponed to
Section~\ref{sec:implementation} where it is discussed together with the
implementation of a node's components. Scenario (ii) is accounted for in the
following definition:

\begin{definition}[Metastability-Freedom]\label{def:metastability}
We denote state machine~$M$ of node $i$ as being \emn{metastability-free during
$[t^-,t^+]$}, iff for each time $t\in [t^-,t^+]$ when $M$ switches from some
state $s$ to another state $s'$, it holds that $\tau_{i,i}(t) < t'$, where $t'$
is the infimum of all times in $(t,t^+]$ when $M$ switches to some state~$s''$.
\end{definition}

\subsection*{Multiple State Machines}
In some situations the previous definitions are too stringent, as there might be
different ``components'' of a node's state machine that act concurrently and
independently, mostly relying on signals from disjoint input ports or orthogonal
components of a signal. We model this by permitting that nodes run several state
machines in parallel. All these state machines share the input and local ports
of the respective node and are required to have disjoint state spaces. If node
$i$ runs state machines $M_1,\ldots,M_k$, node $i$'s output signal is the
product of the output signals of the individual machines. Formally we define:
Each of the state machines $M_j$, $1 \leq j \leq k$, has an additional own
output port~$s_j$. The state of node $i$'s output port $S_i$ at any time $t$ is
given by $S_i(t):=(s_1(t),\ldots,s_k(t))$, where the signals of ports
$s_1,\ldots,s_k$ are defined analogously to the signals of the output ports of
state machines in the single state machine case. Note that by this
definition, the only (local) means for node $i$'s state machines to interact
with each other is by reading the delayed state signal $S_{i,i}$.

We say that \emn{node $i$'s state machine $M_j$ is in state $s$ at time $t$} iff
$s_j(t) = s$, where $S_i(t) = (s_1(t),\ldots,s_k(t))$, and that \emn{node $i$'s
state machine $M_j$ switches to state $s$ at time $t$} iff signal $s_j$ switches
to $s$ at time $t$. Since the state spaces of the machines $M_j$ are disjoint,
we will omit the phrase ``state machine $M_j$'' from the notation, i.e., we
write ``node $i$ is in state $s$'' or ``node $i$ switched to state $s$'',
respectively.

Recall that the various state machines of node $i$ are as loosely coupled as
remote nodes, namely via the delayed status signal on channel $S_{i,i}$ only.
Therefore, it makes sense to consider them independently also when it comes to
metastability.

\begin{definition}[Metastability-Freedom---Multiple SM's]\ \\ We denote state
machine $M$ of node $i\in V$ as \emn{metastability-free during $[t^-,t^+]$}, iff
for each time $t\in [t^-,t^+]$ when $M$ switches from some state $s\in \Ss$ to
another state $s'\in \Ss$, it holds that $\tau_{i,i}(t) < t'$, where $t'$ is the
infimum of all times in $(t,t^+]$ when $M$ switches to some state $s'' \in \Ss$.
\end{definition}

Note that by this definition the different state machines may switch
     states concurrently without suffering from
     metastability.\footnote{However, care has to be taken when
     implementing the inter-node communication of the state components
     in a metastability-free manner,
     cf.~\sectionref{sec:implementation}.} It is even possible that
     some state machine suffers metastability, while another is not
     affected by this at all.\footnote{This is crucial for the
     algorithm we are going to present.
For stabilization purposes, nodes comprise a state machine that is
     prone to metastability.
However, the state machine generating pulses (i.e., having the state
     \acc, cf.~\defref{def:pulse}) does not take its output signal
     into account once stabilization is achieved.
Thus, the algorithm is metastability-free after stabilization in the
     sense that we guarantee a metastability-free signal indicating
     when pulses occur.}

\subsection*{Problem Statement}

The purpose of the pulse synchronization protocol is that nodes generate
synchronized, well-separated pulses by switching to a distinguished state \acc.
Self-stabilization requires that it starts to do so within a bounded time, for any
possible initial state. However, as our protocol makes use of randomization,
there are executions where this does not happen at all; instead, we will show
that the protocol stabilizes with probability one in finite time. To give a
precise meaning to this statement, we need to define appropriate probability
spaces.

\begin{definition}[Adversarial Spaces] Denote for $i\in V$ by ${\cal
C}_i= (C_{i,1}, \dots, C_{i,c_i})$ the tuple of clocks of node $i$.
An \emph{adversarial space} is a probabilistic space that is defined
     by subsets of nodes and channels $W\subseteq V$ and $E\subseteq
     V^2$, a time interval $[t^-,t^+]$, a protocol ${\cal P}$ (nodes'
     ports, state machines, etc.) as previously defined, tuple of all
     clocks $({\cal C}_1,\dots,{\cal C}_n)$, a function $\Theta$
     assigning each $(i,j) \in V^2$ a delay $\tau_{i,j}: \R^+_0\to
     \R^+_0$, an initial state ${\cal E}_0$ of all ports, and an
     \emph{adversarial function ${\cal A}$}.
Here ${\cal A}$ is a function that maps a partial execution ${\cal
     E}|_{[0,t]}$ until time $t$ (i.e., all ports' values until time
     $t$), $W$, $E$, $[t^-,t^+]$, ${\cal P}$, ${\cal C}$, and $\Theta$
     to the states of all faulty ports during the time interval
     $(t,t']$, where $t'$ is the infimum of all times greater than $t$
     when a non-faulty node or channel switches states.

The adversarial space ${\cal AS}(W,E,[t^-,t^+],{\cal P},{\cal
     C},\Theta,{\cal E}_0,{\cal A})$ is now defined on the set of all
     executions ${\cal E}$ satisfying that $(i)$ the initial state of
     all ports is given by ${\cal E}|_{[0,0]}={\cal E}_0$, $(ii)$ for
     all $i\in V$ and $k\in \{1,\ldots,c_i\}:$ $C_{i,k}^{\cal
     E}=C_{i,k}$, $(iii)$ for all $(i,j)\in V^2$, $\tau_{i,j}^{\cal
     E}=\tau_{i,j}$, $(iv)$ nodes in $W$ are non-faulty during
     $[t^-,t^+]$ with respect to the protocol ${\cal P}$, $(v)$ all
     channels in $E$ are correct during $[t^-,t^+]$, and $(vi)$ given
     ${\cal E}|_{[0,t]}$ for any time $t$, ${\cal E}|_{(t,t']}$ is
     given by ${\cal A}$, where $t'$ is the infimum of times greater
     than $t$ when a non-faulty node switches states.
Thus, except for when randomized timeouts expire, ${\cal E}$ is fully
     predetermined by the parameters of ${\cal AS}$.\footnote{This
     follows by induction starting from the initial configuration
     ${\cal E}_0$.
Using ${\cal A}$, we can always extend ${\cal E}$ to the next time
     when a correct node switches states, and when non-faulty nodes
     switch states is fully determined by the parameters of ${\cal
     AS}$ except for when randomized timeouts expire.
Note that the induction reaches any finite time within a finite number
     of steps, as signals switch states finitely often in finite
     time.} The probability measure on ${\cal AS}$ is induced by the
     random distributions of the randomized timeouts specified by
     ${\cal P}$.
\end{definition}

To avoid confusion, observe that if the clock functions and delays do not follow
the model constraints during $[t^-,t^+]$, the respective adversarial space is
empty and thus of no concern. This cumbersome definition provides the means to
formalize a notion of stabilization that accounts for worst-case drifts and
delays and an adversary that knows the full state of the system up to the
current time.

We are now in the position to formally state the pulse synchronization
     problem in our framework.
Intuitively, the goal is that after transient faults cease, nodes
     should with probability one eventually start to issue
     well-separated, synchronized pulses by switching to a dedicated
     state \acc.
Thus, as the initial state of the system is arbitrary, specifying an
     algorithm\footnote{We use the terms ``algorithm'' and
     ``protocol'' interchangeably throughout this work.} is equivalent
     to defining the state machines that run at each node, one of
     which has a state \acc.

\begin{definition}[Self-Stabilizing Pulse Synchronization]\label{def:pulse}\ \\
Given a set of nodes $W \subseteq V$ and a set $E\subseteq V\times V$
     of channels, we say that protocol ${\cal P}$ is a
     \emn{$(W,E)$-stabilizing pulse synchronization protocol with skew
     $\Sigma$ and accuracy bounds $T^-> \Sigma$ and $T^+$ that stabilizes
     within time $T$ with probability $p$} iff the following holds.
Choose any time interval $[t^-,t^+]\supseteq [t^-,t^- +T+\Sigma]$ and
     any adversarial space ${\cal AS}(W,E,[t^-,t^+],{\cal
     P},\cdot,\cdot,\cdot,\cdot)$ (i.e., ${\cal C}$, $\Theta$, ${\cal
     E}_0$, and ${\cal A}$ are arbitrary).
Then executions from ${\cal AS}$ satisfy with probability at least $p$
     that there exists a time $t_s \in [t^-,t^- +T]$ so that, denoting
     by $t_i(k)$ the time when node~$i\in W$ switches to a distinguished
     state \acc\ for the $k^\text{th}$ time after $t_s$
     ($t_i(k)=\infty$ if no such time exists), $(i)$ $t_i(1)\in
     (t_s,t_s+\Sigma)$, $(ii)$ $|t_i(k)-t_j(k)| \le \Sigma$ if
     $\max\{t_i(k),t_j(k)\} \le t^+$, and $(iii)$ $T^-\le
     |t_i(k+1)-t_i(k)| \le T^+$ if $t_i(k)+T^+\le t^+$.
\end{definition}

Note that the fact that ${\cal A}$ is a deterministic function and, more
generally, that we consider each space ${\cal AS}$ individually, is no
restriction: As ${\cal P}$ succeeds for any adversarial space with probability
at least $p$ in achieving stabilization, the same holds true for randomized
adversarial strategies ${\cal A}$ and worst-case drifts and delays.

\section{The FATAL Pulse Synchronization Protocol}\label{sec:algo}

In this section, we present our self-stabilizing pulse generation
     algorithm.
In order to be suitable for implementation in hardware, it needs to
     utilize very simple rules only.
It is stated in terms of state machines as introduced in the previous
     section.

Since the ultimate goal of the pulse generation algorithm is to
     interact with an application layer, we introduce a possibility
     for a coupling with such a layer in the pulse generation
     algorithm itself: for each node~$i$, we add a further port
     $\Next_i$, which can be driven by node $i$'s application layer.
As for other state signals, its output raises flag $\Mem_{i,\Next}$,
     to which for simplicity we refer to as $\Next_i$ as well.
The purpose of the port is to allow the application layer to influence
     the time between two of node $i$'s successively generated pulses
     within a range that does not prevent the pulse generation
     algorithm to stabilize correctly.

In \sectionref{sec:application} we give an example for an application
     layer: The quick cycle completing the FATAL$^+$ is a
     non-self-stabilizing clock synchonization routine which relies on the pulse
     generation algorithm for self-stabilization.
Since we will show that the pulse algorithm stabilizes independently
     of the behavior of the \Next signal, and the clock synchronization
     routine presented \sectionref{sec:application} is designed such
     that it will stabilize once the pulse generation algorithm did
     so, we can partition the analysis of the compound algorithm into
     two parts.
When proving the correctness of the pulse generation algorithm in
     \sectionref{sec:analysis}, we thus assume that for each node~$i$,
     $\Next_i$ is arbitrary.

\subsection{Basic Cycle}
\label{sec:basic_cycle}


\begin{figure*}[t!]
  \centering
  \begin{tikzpicture}
      \draw (8,0) node[circle,minimum width=1.8cm,draw,align=center] (ac) {$\acc$};
      \draw (8,3) node[circle,minimum width=1.8cm,draw,align=center] (sl) {$\slp$};
      \draw (8,6) node[circle,minimum width=1.8cm,draw,align=center] (srw) {$\slp$\\$\rightarrow$\\$\wake$};
      \draw (4,6) node[circle,minimum width=1.8cm,draw] (wa) {$\wake$};
      \draw (0,6) node[circle,minimum width=1.8cm,draw] (rd) {$\rdy$};
      \draw (0,0) node[circle,minimum width=1.8cm,draw] (pr) {$\prop$};

      \path[->] (ac) edge node[right,align=center] {$T_1$ \textbf{and}\\$\ge n-f$\\$\acc$} (sl);
      \path[->] (sl) edge node[right,align=center] {$(2\vartheta+1)T_1$} (srw);
      \path[->] (srw) edge node[above=0.3cm,align=center] {\emph{true}} (wa);
      \path[->] (wa) edge node[above=0.4cm,align=center] {$(T_2,\acc)$} (rd);
      \path[->] (rd) edge node[right,align=center] {$(T_3 \textbf{ and}$\\$\Next_i=1)$\\\textbf{or} $T_4$ \textbf{or}\\$\ge f+1$\\$\prop$\\or $\acc$} (pr);
      \path[->] (pr) edge node[above=0.3cm,align=center] {$\ge n-f$ $\prop \text{ or } \acc$\\\textbf{or} $\ge f+1$ $\acc$} (ac);

      \draw (srw)+(-2,0) node[fill=white,draw=black] {$\acc$};
      \draw (wa)+(-2,0) node[fill=white,draw=black,align=center] {$\prop$,\\$\Next_i$};
      \draw (pr)+(4,0) node[fill=white,draw=black] {$\acc$};
  \end{tikzpicture}
  \caption{Basic cycle of node~$i$ once the algorithm has stabilized.}\label{fig:main_simple}
\end{figure*}
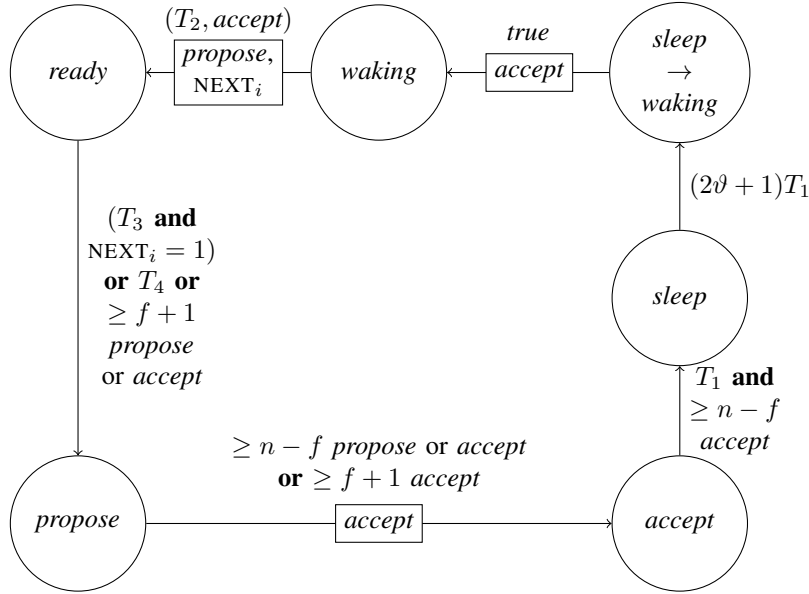

The full pulse generation algorithm makes use of a rather involved interplay between conditions
on timeouts, states, and thresholds to converge to a safe state despite a
limited number of faulty components. As our approach is thus complicated to
present in bulk, we break it down into pieces. Moreover, to facilitate giving
intuition about the key ideas of the algorithm, in this subsection we assume that
there are never more than $f<n/3$ faulty nodes, i.e., the remaining $n-f$ nodes
are non-faulty within $[0,\infty)$. We further assume that channels between non-faulty nodes (including
loopback channels) are correct within $[0,\infty)$. We start by presenting the
basic cycle that is repeated every pulse once a safe configuration is reached
(see \figureref{fig:main_simple}).

We employ graphical representations of the state machine of each node~$i \in V$.
States are represented by circles containing their names, while transition
$(s,s') \in \cal T$ is depicted as an arrow from $s$ to $s'$. The guard
$tr(s,s')$ is written as a label next to the arrow, and the reset function's
value $re(s,s')$ is depicted in a rectangular box on the arrow. To keep labels
more simple we make use of some abbreviations. Recall that in the notation of
timeouts $(T,s,C)$ the driving clock $C$ is omitted. We write $T$ instead of
$(T,s)$ if $s$ is the same state which node~$i$ leaves if the condition
involving $(T,s)$ is satisfied. Threshold conditions like ``\,$\ge f+1$ $s$\,'',
where $s \in \Ss$, abbreviate Boolean predicates that reach over all of
node~$i$'s memory flags $\Mem_{i,j,s}$, where $j \in V$, and are defined in a
straightforward manner. If in such an expression we connect two states by
``or'', e.g., ``\,$\ge n-f$ $s$ or $s'$\,'' for $s,s'\in \Ss$, the summation
considers flags of both types $s$ and $s'$. Thus, such an expression is
equivalent to $\sum_{j\in V}\max\{\Mem_{i,j,s},\Mem_{i,j,s'}\}\geq f+1$. For any
state $s \in \Ss$, the condition $S_{i,i} = s$, (respectively, $\neg(S_{i,i} =
s)$) is written in short as ``in~$s$'' (respectively, ``not in~$s$''). We write
``true'' instead of a condition that is always true (like e.g.\ ``(in $s$) or
(not in $s$)'' for an arbitrary state $s \in \Ss$). Finally, $re(\cdot,\cdot)$
always requires to reset all memory flags of certain types, hence we write e.g.\
\prop\ if all flags $\Mem_{i,j,\prop}$ are to be reset.

We now briefly introduce the basic flow of the algorithm once it
     stabilizes, i.e., once all $n-f$ non-faulty nodes are
     well-synchronized.
Recall that the remaining up to $f< n/3$ faulty nodes may produce
     arbitrary signals on their outgoing channels.
A pulse is locally triggered by switching to state \acc.
Thus, assume that at some time all non-faulty nodes switch to state
     \acc\ within a time window of $2d$, i.e., a pulses are generated
     by non-faulty nodes within a time interval of size $2d$.
Supposing that $T_1\geq 3\vartheta d$, these nodes will observe, and
     thus memorize, each other and themselves in state \acc\ within a
     time interval of size $3d$ and thus before $T_1$ expires at any
     non-faulty node.
This makes timeout $T_1$ the critical condition for switching to state
     \slp.
From state \slp, they will switch to states \srw, \wake, and finally
     \rdy, where the timeout $(T_2,\acc)$ is determining the time this
     takes, as it is considerably larger than
     $\vartheta(2\vartheta+2)T_1$.
The intermediate states serve the purpose of achieving stabilization,
     hence we leave them out for the moment.

Note that upon
switching to state \rdy, nodes reset their \prop\ flags and $\Next_i$. Thus,
they essentially ignore these signals between the most recent time they switched
to \prop\ before switching to \acc\ and the subsequent time when they switch to
\rdy. Since nodes already reset their \acc flags upon switching to \wake, this
ensures that nodes do not take into account outdated information for the
decision when to switch to state \prop. 

Hence, it is guaranteed that the first node switching from state \rdy\
     to state \prop\ again does so because $T_4$ expired or because
     $T_3$ expired and its \Next memory flag is true.
The constraint $\min\{T_3,T_4\}\geq \vartheta (T_2+4d)$ ensures that
     all non-faulty nodes observe themselves in state \rdy\ before the
     first one switches to \prop.
Hence, no node deletes information about nodes that switch to \prop\
     again after the previous pulse.

The first non-faulty node that switches to state \acc\ again cannot do
     so before it memorizes at least $n-f$ nodes in state \prop, as
     the \acc\ flags have been reset upon switching to state \wake.
Therefore, at this time at least $n-2f\geq f+1$ non-faulty nodes are
     in state \prop.
Hence, the rule that nodes switch to \prop\ if they memorize $f+1$
     nodes in states \prop\ will take effect, i.e., the remaining
     non-faulty nodes in state \rdy\ switch to \prop\ after less than
     $d$ time.
Another $d$ time later all non-faulty nodes in state \prop\ will have
     become aware of this and switch to state \acc\ as well, as the
     threshold of $n-f$ nodes in states \prop\ or \acc\ is reached.
Thus the cycle is complete and the reasoning can be repeated
     inductively.

Clearly, for this line of argumentation to be valid, the algorithm
     could be simpler than stated in \figureref{fig:main_simple}.
We already mentioned that the motivation of having three intermediate
     states between \acc\ and \rdy\ is to facilitate stabilization.
Similarly, there is no need to make use of the \acc\ flags in the
     basic cycle at all; in fact, it adversely affects the constraints
     the timeouts need to satisfy for the above reasoning to be valid.
However, the \acc\ flags are much better suited for diagnostic
     purposes than the \prop\ flags, since nodes are expected to
     switch to \acc\ in a small time window and remain in state \acc\
     for a small period of time only (for all our results, it is
     sufficient if $T_1=4\vartheta d$).
Moreover, two different timeout conditions for switching from \rdy\ to
     \prop\ are unnecessary for correct operation of the pulse
     synchronization routine.
As discussed before, they are introduced in order to allow for a
     seamless coupling to the application layer.

\subsection{Main Algorithm}


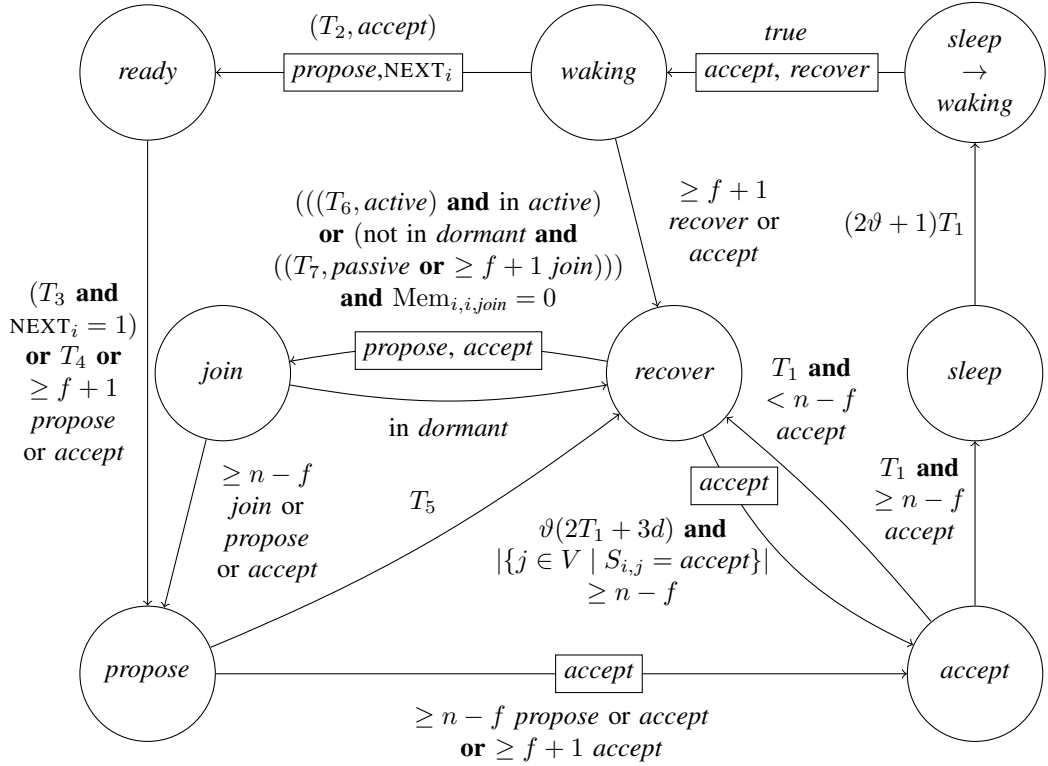
\begin{figure*}[t!]
  \centering
  \begin{tikzpicture}
      \draw (11,0) node[circle,minimum width=1.8cm,draw,align=center] (ac) {$\acc$};
      \draw (11,4) node[circle,minimum width=1.8cm,draw,align=center] (sl) {$\slp$};
      \draw (11,8) node[circle,minimum width=1.8cm,draw,align=center] (srw) {$\slp$\\$\rightarrow$\\$\wake$};
      \draw (6,8) node[circle,minimum width=1.8cm,draw] (wa) {$\wake$};
      \draw (0,8) node[circle,minimum width=1.8cm,draw] (rd) {$\rdy$};
      \draw (0,0) node[circle,minimum width=1.8cm,draw] (pr) {$\prop$};

      \draw (sl)+(-4,0) node[circle,minimum width=1.8cm,draw] (re) {$\rec$};
      \draw (re)+(-6,0) node[circle,minimum width=1.8cm,draw] (jo) {$\join$};

      \path[->] (ac) edge node[left,align=center,yshift=0.3cm] {$T_1$ \textbf{and}\\$\ge n-f$\\$\acc$} (sl);
      \path[->] (sl) edge node[left,align=center] {$(2\vartheta+1)T_1$} (srw);
      \path[->] (srw) edge node[above=0.3cm,align=center] {\emph{true}} (wa);
      \path[->] (wa) edge node[above=0.3cm,align=center] {$(T_2,\acc)$} (rd);
      \path[->] (rd) edge node[left,align=center] {$(T_3 \textbf{ and}$\\$\Next_i=1)$\\\textbf{or} $T_4$ \textbf{or}\\$\ge f+1$\\$\prop$\\or $\acc$} (pr);
      \path[->] (pr) edge node[below=0.3cm,align=center] {$\ge n-f$ $\prop \text{ or } \acc$\\\textbf{or} $\ge f+1$ $\acc$} (ac);

      \path[->] (ac) edge[bend right=5] node[above,xshift=-0.3cm,yshift=0.8cm,align=center] {$T_1$ \textbf{and}\\$< n-f$\\$\acc$} (re);
\path[->] (re) edge[bend right=20] node[left=0.1cm,align=center] {$\vartheta(2T_1+3d)$ \textbf{and}\\$\lvert\{j\in V \mid S_{i,j}=\acc\}\rvert$\\$\ge n-f$} (ac);

      \path[->] (re) edge[bend right=10] node[above=0.3cm,align=center]
      {$(((T_6,\act)$ \textbf{and} in $\act)$\\
      \textbf{or} $($not in $\dorm$ \textbf{and}\\
      $((T_7,\pass$ \textbf{or} $\ge f+1$ $\join)))$\\
      \textbf{and} $\Mem_{i,i,\join}=0$} (jo); \path[->] (jo) 
      edge[bend right=10] node[below=0.1cm,align=center] {in $\dorm$} (re);
\path[->] (pr) edge[bend right=7] node[above=0.3cm,align=center] {$T_5$} (re);

      \path[->] (jo) edge node[right=0.3cm,align=center] {$\ge n-f$\\$\join$ or\\$\prop$\\ or $\acc$} (pr);
      \path[->] (wa) edge node[right=0.3cm,align=center] {$\ge f+1$\\$\rec$ or\\$\acc$} (re);

      \draw (srw)+(-2.5,0) node[fill=white,draw=black] {$\acc$, $\rec$};
      \draw (wa)+(-3,0) node[fill=white,draw=black,align=center] {$\prop$,$\Next_i$};
      \draw (pr)+(6,0) node[fill=white,draw=black] {$\acc$};
      \draw (re)+(0.8,-1.5) node[fill=white,draw=black] {$\acc$};
      \draw (re)+(-3,0.3) node[fill=white,draw=black] {$\prop$, $\acc$};

  \end{tikzpicture}
  \caption{Overview of the core routine of node $i$'s self-stabilizing pulse
algorithm.}\label{fig:main}
\end{figure*}

We proceed by describing the main routine of the pulse algorithm in full.
Alongside the main routine, several other state machines run concurrently and
provide additional information to be used during recovery, as we detail later.

The main routine is graphically presented in \figureref{fig:main}. Except for
the states \rec\ and \join\ and additional resets of memory flags, the main
routine is identical to the basic cycle. The purpose of the two additional
states is the following: Nodes switch to state \rec once they detect that
something is wrong, that is, non-faulty nodes do not execute the basic cycle as
outlined in \sectionref{sec:basic_cycle}. This way, non-faulty nodes will not
continue to confuse others by sending for example state signals \prop or \acc
despite clearly being out-of-sync. There are various consistency checks that
nodes perform during each execution of the basic cycle. The first one is that in
order to switch from state \acc to state \slp, non-faulty nodes need to memorize
at least $n-f$ nodes in state \acc. If this does not happen within $4d\leq
T_1/\vartheta$ time after switching to state \acc, by the arguments given in
\sectionref{sec:basic_cycle}, the nodes could not have entered state \acc\
within $2d$ of each other. Therefore, something must be wrong and it is feasible
to switch to state \rec. Next, whenever a non-faulty node is in state \wake,
there should be no non-faulty nodes in states \acc or \rec. Considering that
the node resets its \acc and \rec flags upon switching to \wake, it should not
memorize $f+1$ or more nodes in states \acc or \rec at a time when it observes
itself in state \wake. If it does, however, it again switches to state \rec.
Last but not least, during a synchronized execution of the basic cycle, no
non-faulty node may be in state \prop\ for more than a certain amount of time
before switching to state \acc. Therefore, nodes will switch from \prop\ to
\rec\ when timeout $T_5$ expires.

There are two different ways for nodes in \rec to switch back to the basic
cycle, corresponding to two different mechanisms for stabilization. The
transition from \rec to \acc requires to (directly) observe $n-f$ nodes in state
\acc. This enables nodes to resynchronize provided that at least $n-f$ nodes are
already executing the basic cycle in synchrony. While this method is easily
implemented, clearly it is insufficient to ensure stabilization from arbitrary
initial configurations. Hence, nodes can also join the basic cycle again via the
second new state, called \join. Since the Byzantine nodes may ``play nice''
towards $n-2f$ or more nodes still executing the basic cycle, making them
believe that system operation continues as usual, it must be possible to do so
without having a majority of nodes in state \rec. On the other hand, it is
crucial that this happens in a sufficiently well-synchronized manner, as
otherwise nodes could drop out of the basic cycle again because the various
checks of consistency detect an erroneous execution of the basic cycle.

In part, this issue is solved by an additional agreement step. In order to enter
the basic cycle again, nodes need to memorize $n-f$ nodes in states \join (the
respective nodes detected an inconsistency), \prop (these nodes continued to
execute the basic cycle), or \acc (there are executions where nodes reset their
\prop flags because of switching to \join\ when other nodes already switched to
\acc). The threshold conditions of $f+1$ nodes memorized in state \join or $f+1$
nodes memorized in state \prop for leaving state \rec, all nodes will follow the
first one switching from \join to \prop quickly, just as with the switch from
\prop to \acc in an ordinary execution of the basic cycle. However, it is
decisive that all nodes are in states that permit to participate in this
agreement step in order to guarantee success of this approach.

As a result, still a certain degree of synchronization needs to be established
beforehand,\footnote{This is the reason for the complicated transition condition
involving additional states and timeouts. The detailed interplay between these
conditions is delicate and beyond the scope of a high-level description of the
algorithm; the interested reader is referred to the analysis section.} both
among nodes that still execute the basic cycle and those that do not. For
instance, if at the point in time when a majority of nodes and channels become
non-faulty, some nodes already memorize nodes in \join that are not, they may
switch to state \join and subsequently \prop prematurely, causing others to
have inconsistent memory flags as well. Byzantine faults may sustain such
amiss configuration of the system indefinitely.

So why did we put so much effort in ``shifting'' the focus to this part of the
algorithm? The key advantage is that nodes outside the basic cycle may take into
account less reliable information for stabilization purposes. They may take the
risk of metastable upsets (as we know it is impossible to avoid these during the
stabilization process, anyway) and make use of randomization.

In fact, to make the above scheme work, it is sufficient that all non-faulty
nodes agree on a so-called \emph{resynchronization point} (cf.~Definitions
\ref{def:res} and~\ref{def:gres}), that is, a point in time at which nodes reset
the memory flags for states \join and \srw as well as certain timeouts, while
guaranteeing that no node is in these states close to the respective reset
times. Except for state \srw, all of these timeouts, memory flags, etc.\ are not
part of the basic cycle at all, thus nodes may enforce consistent values for
them easily when agreeing on such a resynchronization point.

Conveniently, the use of randomization also ensures that it is quite unlikely
that nodes are in state \srw close to a resynchronization point, as the
consistency check of having to memorize $n-f$ nodes in state \acc in order to
switch to state \slp guarantees that the time windows during which non-faulty
nodes may switch to \slp make up a small fraction of all times only.

Consequently, the remaining components of the algorithm deal with agreeing on
resynchronization points and utilizing this information in an appropriate way to
ensure stabilization of the main routine. We describe this connection to the
main routine first. It is done by another, quite simple state machine, which
runs in parallel alongside the core routine. It is depicted in
\figureref{fig:extended}.


\begin{figure}[t!]
  \centering
  \begin{tikzpicture}
      \draw (0,0) node[circle,minimum width=1.8cm,draw,align=center] (do) {$\dorm$};
      \draw (6,0) node[circle,minimum width=1.8cm,draw,align=center] (pa) {$\pass$};
      \draw (3,-3) node[circle,minimum width=1.8cm,draw,align=center] (ac) {$\act$};

      \path[->] (do) edge node[below=0.3cm,align=center] {in $\res$} (pa);
      \path[->] (pa) edge[bend right] node[above,align=center] {not in $\res$} (do);
      \path[->] (ac) edge node[below=0.2cm,left=0.1cm,align=center] {not in\\$\res$} (do);
      \path[->] (pa) edge node[below=0.1cm,right=0.1cm,align=center] {$\ge f+1$\\$\srw$} (ac);

      \draw (do)+(3,0) node[fill=white,draw=black] {$\join$, $\srw$};
  \end{tikzpicture}
  \caption{Extension of node~$i$'s core routine.}\label{fig:extended}
\end{figure}
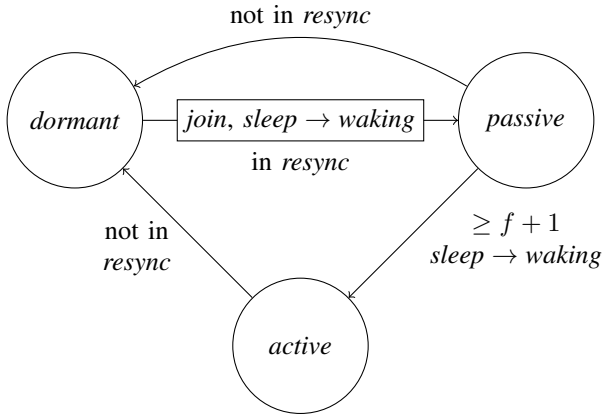

Its purpose is to reset memory flags in a consistent way and to determine when a
node is permitted to switch to \join. In general, a resynchronization point
(locally observed by switching to state \res, which is introduced later)
triggers the reset of the \join and \srw flags. If there are still nodes
executing the basic cycle, a node may become aware of it by observing $f+1$
nodes in state \srw at some time. In this case it switches from the state
\pass, which it entered at the point in time when it locally observed the
resynchronization point, to the state \act. Subsequently, once timeout $T_8$
expires, the node will switch to state, in which it is more susceptive to
switching to state \join. This is expressed by the rather involved transition
rule $tr(\rec,\join)$ (in \figureref{fig:main}). $T_6$ is much smaller than $T_7$, but $T_6$ is of no
concern until the node switches to state \act\ and resets $T_6$.\footnote{The
conditions ``in \act'' and ``not in \dorm'', respectively, here ensure that the
transition is not performed because the node has been in state \res\ a long time
ago, but there was no recent switching to \res.} The condition that
$\Mem_{i,i,\join}=0$ simply means that nodes should not already have attempted
to stabilize by switching to \join since the most recent transition to \pass.
This avoids interfering too much with the second stabilization mechanism
(switching from \rec to \acc), as it might take significantly longer than
the time required for this ``immediate'' recovery to stabilize by means of
agreeing on a resynchronization point.

It remains to explain how resynchronization points are generated.

\subsection{Resynchronization Algorithm}

The resynchronization routine is specified in \figureref{fig:resync}. Similarly
to the extension of the core routine, it is a lower layer that the core routine
uses for stabilization purposes only. It provides some synchronization that is
akin to that of a pulse, except that such ``weak pulses'' occur at random times,
and may be generated inconsistently even after the algorithm as a whole has
stabilized. Since the main routine operates independently of the
resynchronization routine once the system has stabilized, we can afford the
weaker guarantees of the routine: If it succeeds in generating a ``good''
resynchronization point merely once, the main routine will stabilize
deterministically.

\begin{definition}[Resynchronization Points]\label{def:res}
Given $W\subseteq V$, time $t$ is a \emph{$W$-resynchro\-nization point} iff
each node in $W$ switches to state \srr\ in the time interval $(t,t+2d)$.
\end{definition}

\begin{definition}[Good Resynchronization Points]\label{def:gres}\ \\
A $W$-resynchronization point is called \emph{good} iff no node from $W$
switches to state \slp\ during $(t-\Delta_g,t)$, where
$\Delta_g:=(2\vartheta+3)T_1$, and no node is in state \join\ during
$[t-T_1-d,t+4d)$.
\end{definition}

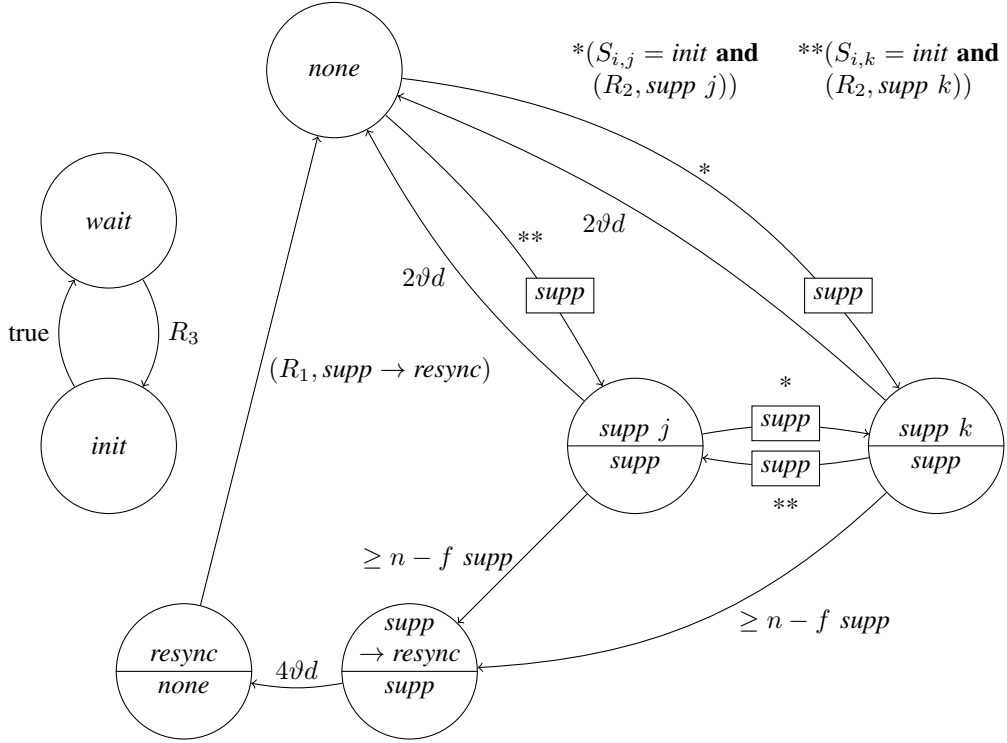
\begin{figure*}[t!]
  \centering
  \begin{tikzpicture}
      \draw (0,0) node[circle,minimum width=1.8cm,draw,align=center] (in) {$\init$};
      \draw (0,3) node[circle,minimum width=1.8cm,draw,align=center] (wa) {$\wait$};

      \path[->] (in) edge[bend left] node[left,align=center] {true} (wa);
      \path[->] (wa) edge[bend left] node[right,align=center] {$R_3$} (in);

      \draw (3,5) node[circle,minimum width=1.8cm,draw,align=center] (no) {$\none$};
      \draw (1,-3) node[circle,minimum width=1.8cm,draw,align=center] (rn) {$\res$\\$\none$};
      \draw (4,-3) node[minimum width=1.8cm,align=center] {$\supp$\\$\rightarrow\res$\\$\supp$\\~};
      \draw (4,-3) node[circle,minimum width=1.8cm,draw,align=center] (sr) {};
      \draw (7,0) node[circle,minimum width=1.8cm,draw,align=center] (sj) {$\supp$ $j$\\$\supp$};
      \draw (11,0) node[circle,minimum width=1.8cm,draw,align=center] (sk) {$\supp$ $k$\\$\supp$};

      \path[->] (rn) edge node[right,align=center] {$(R_1,\srr)$} (no);
      \path[->] (sr) edge[bend left=10] node[above,align=center] {$4\vartheta d$} (rn);
      \path[->] (sj) edge node[left,align=center] {$\ge n-f$ $\supp$} (sr);
      \path[->] (sk) edge[bend left=20] node[right=0.2cm,xshift=0.2cm,align=center] {$\ge n-f$ $\supp$} (sr);

      \path[->] (sj) edge[bend left=10] node[left,align=center,xshift=-0.1cm] {$2\vartheta d$} (no);
      \path[->] (no) edge[bend left=10] node[right,align=center] {**} (sj);
      \draw (no)+(3,-3) node[fill=white,draw=black] {$\supp$};

      \path[->] (sk) edge[bend right=10] node[left,align=center,xshift=-0.3cm] {$2\vartheta d$} (no);
      \path[->] (no) edge[bend left=25] node[right,align=center] {*} (sk);
      \draw (no)+(6.7,-3) node[fill=white,draw=black] {$\supp$};

      \path[->] (sj) edge[bend left=10] node[above=0.3cm,align=center] {*} (sk);
      \path[->] (sk) edge[bend left=10] node[below=0.3cm,align=center] {**} (sj);
      \draw (sj)+(2,0.3) node[fill=white,draw=black] {$\supp$};
      \draw (sj)+(2,-0.3) node[fill=white,draw=black] {$\supp$};

      \draw (sj.west)--(sj.east);
      \draw (sk.west)--(sk.east);
      \draw (sr.west)--(sr.east);
      \draw (rn.west)--(rn.east);

      \draw (no)+(4.4,0) node[align=center] {*$(S_{i,j} = \init$ \textbf{and}\\$(R_2,\supp\ j))$};
      \draw (no)+(7.5,0) node[align=center] {**$(S_{i,k} = \init$ \textbf{and}\\$(R_2,\supp\ k))$};
  \end{tikzpicture}
  \caption{Resynchronization algorithm, comprising two state machines
executed in parallel at node~$i$.}\label{fig:resync}
\end{figure*}

In order to clarify that despite having a linear number of states
($\supp_1,\ldots,\supp_n$), this part of the algorithm can be implemented using
$2$-bit communication channels between state machines only, we generalize our
description of state machines as follows. If a state is depicted as a circle
separated into an upper and a lower part, the upper part denotes the local
state, while the lower part indicates the signal state to which it is mapped. A
node's memory flags then store the respective signal states only, i.e., remote
nodes do not distinguish between states that share the same signal. Clearly,
such a machine can be simulated by a machine as introduced in the model section.
The advantage is that such a mapping can be used to reduce the number of
transmitted state bits; for the resynchronization routine given in
\figureref{fig:resync}, we merely need two bits (\init/\emph{wait} and
\none/\supp) instead of $\lceil \log (n+3)\rceil+1$ bits.

The basic idea behind the resynchronization algorithm is the following: Every
now and then, nodes will try to initiate agreement on a resynchronization point.
This is the purpose of the small state machine on the left in
\figureref{fig:resync}. Recalling that the transition condition ``true'' simply
means that the node switches to state \emph{wait} again as soon as it observes
itself in state \init, it is easy to see that it does nothing else than creating
an \init\ signal as soon as $R_3$ expires and resetting $R_3$ again as quickly
as possible. As the time when a node switches to \init\ is determined by the
randomized timeout $R_3$ distributed over a large interval (cf.\
\equalityref{eq:R_3}) only, therefore it is impossible to predict when it will expire,
even with full knowledge of the execution up to the current point in time. Note
that the complete independence of this part of node $i$'s state from the
remaining protocol implies that faulty nodes are not able to influence the
respective times by any means.

Consider now the state machine displayed on the right of \figureref{fig:resync}.
To illustrate how the routine is intended to work, assume that at the time $t$
when a non-faulty node $i$ switches to state \init, all non-faulty nodes are not
in any of the states \srr, \res, or \supp~$i$, and at all non-faulty nodes the
timeout $(R_2,\supp~i)$ has expired. Then, no matter what the signals from
faulty nodes or on faulty channels are, each non-faulty node will be in one of
the states \supp~$j$, $j\in V$, or \srr\ at time $t+d$. Hence, they will observe
each other (and themselves) in one of these states at some time smaller than
$t+2d$. These statements follow from the various timeout conditions of at least
$2\vartheta d$ and the fact that observing node $i$ in state \init\ will make
nodes switch to state \supp~$i$ if in \none\ or \supp~$j$, $j\neq i$. Hence, all
of them will switch to state \srr\ during $(t,t+2d)$, i.e., $t$ is a
resynchronization point. Since $t$ follows a random distribution that is
independent of the remaining algorithm and, as mentioned earlier, most of the
times nodes do not switch to state \slp\ and it is easy to deal with the
condition on \join\ states, there is a large probability that $t$ is a good
resynchronization point. Note that timeout $R_1$ makes sure that no non-faulty
node will switch to \srr\ again anytime soon, leaving sufficient time for the
main routine to stabilize.

The scenario we just described relies on the fact that at time $t$ no node is in
state \srr\ or state \res. We will choose $R_2\gg R_1$, implying that $R_2+3d$
time after a node switched to state \init\ all nodes have ``forgotten'' about
this, i.e., $(R_2,\supp~i)$ is expired and they switched back to state \none\
(unless other \init\ signals interfered). Thus, in the absence of Byzantine
faults, the above requirement is easily achieved with a large probability by
choosing $R_3$ as a uniform distribution over some interval
$[R_2+3d,R_2+\Theta(n R_1)]$: Other nodes will switch to \init\ $\BO(n)$ times
during this interval, each time ``blocking'' other nodes for at most $\BO(R_1)$
time. If the random choice picks any other point in time during this interval, a
resynchronization point occurs. Even if the clock speed of the clock driving
$R_3$ is manipulated in a worst-case manner (affecting the density of the
probability distribution with respect to real time by a factor of at most
$\vartheta$), we can just increase the size of the interval to account for this.

However, what happens if only \emph{some} of the nodes receive an \init\ signal
due to faulty channels or nodes? If the same holds for some of the subsequent
\supp\ signals, it might happen that only a fraction of the nodes reaches the
threshold for switching to state \srr, resulting in an inconsistent reset of
flags and timeouts across the system. Until the respective nodes switch to
state \none\ again, they will not support a resynchronization point again, i.e.,
about $R_1$ time is ``lost''. This issue is the reason for the agreement step
and the timeouts $(R_2,\supp~j)$. In order for any node to switch to state \srr,
there must be at least $n-2f\geq f+1$ non-faulty nodes supporting this. Hence,
all of these nodes recently switched to a state \supp~$j$ for some $j\in V$,
resetting $(R_2,\supp~j)$. Until these timeouts expire, $f+1\in \Omega(n)$
non-faulty nodes will ignore \init\ signals on the respective channels. Since
there are $\BO(n^2)$ channels, it is possible to choose $R_2\in \BO(n R_1)$
such that this may happen at most $\BO(n)$ times in $\BO(n)$ time. Playing with
constants, we can pick $R_3\in \BO(n)$ maintaining that still a constant
fraction of the times are ``good'' in the sense that $R_3$ expiring at a
non-faulty node will result in a good resynchronization point.

\subsection{Timeout Constraints}

\conditionref{cond:timeout_bounds} summarizes the constraints we
require on the timeouts for the core routine and the
resynchronization algorithm to act and interact as intended.

\begin{condition}[Timeout Constraints]\label{cond:timeout_bounds}

Recall that $\vartheta>1$ and  $\Delta_g:=(2\vartheta+3)T_1$.
Define
\begin{equation}\label{eq:def_lambda}
\lambda:=\sqrt{\frac{25\vartheta-9}{25\vartheta}}\in\left(\frac{4}{5},1\right).
\end{equation}
The timeouts need to satisfy the constraints
\begin{eqnarray}
T_1 &\geq & 4\vartheta d\label{eq:T_1}\\
T_2 &\geq & 3\vartheta\Delta_g+7\vartheta d\label{eq:T_2}\\
&\sr{eq:T_1}{>}&
(2\vartheta^2+4\vartheta)T_1 +9\vartheta d\nonumber\\
T_3 &\geq & (2\vartheta^2+4\vartheta)T_1-T_2+\vartheta T_6+7\vartheta
d\label{eq:T_3}\\
&\stackrel{(\ref{eq:T_1},\ref{eq:T_6})}{>}&
(\vartheta-1)T_2+6\vartheta d\nonumber\\
T_4 &\geq & T_3\label{eq:T_4}\\
T_5 &\geq & \max\{(\vartheta-1)T_2-T_3+\vartheta T_4+7\vartheta
d,\label{eq:T_5}\\
&&(\vartheta-1)T_1+\vartheta(T_2+T_4)-T_6\}\nonumber\\
T_6 &\geq & \vartheta T_2-2\vartheta T_1-2\vartheta d\label{eq:T_6}\\
&\sr{eq:T_2}{>}&
(2\vartheta^2+3\vartheta-1)T_1+7\vartheta d \nonumber\\
T_7 &\geq &(2\vartheta-1)T_1+\vartheta(T_2+T_4+T_5)+T_6\label{eq:T_7}\\
&\sr{eq:T_6}{>}&(2\vartheta^2+3\vartheta-2)T_1+\vartheta(T_2+T_4+T_5+
3d)\nonumber\\
R_1 &\geq & \max\{\vartheta T_7+(4\vartheta^2+8\vartheta)d,\nonumber\\
&&\vartheta(2T_2+2T_4+T_5+7d)-2T_1\}\label{eq:R_1}\\
R_2 &\geq & \frac{2\vartheta \theterm(n-f)}{1-\lambda}~~~~~
\label{eq:R_2}\\
R_3 &= &\mbox{uniformly distributed random variable on}\notag\\
&&\left[\vartheta (R_2+3d), \vartheta(R_2+3d)
+8(1-\lambda)R_2\right]\label{eq:R_3}\\
\lambda&\leq &\frac{T_2-2\vartheta \Delta_g-(\vartheta-1)T_1-4\vartheta d}
{T_2-(\vartheta-1)T_1-\vartheta d}.\label{eq:lambda}
\end{eqnarray}
\end{condition}

We need to show that this system can always be solved.
Furthermore, we would like to allow to couple the pulse generation
     algorithm to an application algorithm with any possible drift.
To this end, we would like to be able to make the ratio
     $(T_2+T_4)/(\vartheta(T_2+T_3+4d))$ arbitrarily large: Thereby,
     $(T_2+T_4)$ is the minimal gap between successive pulses
     generated at each node, provided that the states of all the \Next
     signals are constantly zero, and $\vartheta(T_2+T_3+4d)$ is the
     maximal time it takes nodes to observe themselves in state \rdy\
     with $T_3$ expired after the last generated pulse (as then they
     will respond to $\Next_i$ switching to one).
\begin{lemma}\label{lemma:constraints}
For any $d,\vartheta \in \BO(1)$, \conditionref{cond:timeout_bounds} can be
satisfied with $T_1,\ldots,T_7,R_1\in \BO(1)$ and $R_2\in \BO(n)$, where the
ratio
\begin{equation*}
\alpha:=\frac{(T_2+T_4)/\vartheta}{T_2+T_3+4d}
\end{equation*}
maybe chosen to be an arbitrarily large constant.
\end{lemma}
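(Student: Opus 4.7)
The plan is to exhibit explicit choices of $T_1,\ldots,T_7,R_1,R_2$ by processing the inequalities in \conditionref{cond:timeout_bounds} in a topological order of their dependencies. Inspecting the right-hand sides shows that they involve only linear combinations of previously fixed timeouts plus $d$, so long as each $T_i$ is set to its (finite) lower bound, it is $\BO(1)$; the only parameter that forces super-constant scaling is the factor $(n-f)$ appearing in \eqref{eq:R_2}, which yields $R_2\in\BO(n)$.

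Concretely, I would first fix $T_1:=4\vartheta d$, which satisfies \eqref{eq:T_1} and gives $\Delta_g=(2\vartheta+3)T_1\in\BO(1)$. Next I would choose $T_2$ large enough (but still $\BO(1)$) so that simultaneously \eqref{eq:T_2} and \eqref{eq:lambda} hold: this is possible because by \eqref{eq:def_lambda} $\lambda$ is a fixed constant strictly less than $1$, while the right-hand side of \eqref{eq:lambda} is $1-(2\vartheta\Delta_g+3\vartheta d)/(T_2-(\vartheta-1)T_1-\vartheta d)$, which tends to $1$ from below as $T_2\to\infty$; hence for all $T_2$ beyond some threshold depending only on $\vartheta,d$ the inequality holds. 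Set $T_6$ to meet \eqref{eq:T_6} with equality, then $T_3$ to meet \eqref{eq:T_3}, then $T_4:=T_3$ (later I will revisit this choice to boost $\alpha$), then $T_5$ to meet \eqref{eq:T_5}, $T_7$ to meet \eqref{eq:T_7}, and finally $R_1$ to meet \eqref{eq:R_1}. All right-hand sides are linear combinations of constants, so $T_1,\ldots,T_7,R_1\in\BO(1)$. Then $R_2\in\BO(n)$ follows directly from \eqref{eq:R_2}, since the numerator is $\BO(n)$ and $1/(1-\lambda)$ is a constant by \eqref{eq:def_lambda}. Defining $R_3$ by \eqref{eq:R_3} completes the choice.

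For the final claim, I would then revisit $T_4$: none of the constraints upper-bounds $T_4$, and only \eqref{eq:T_5}, \eqref{eq:T_7}, and \eqref{eq:R_1} (and hence \eqref{eq:R_2}) depend on $T_4$, each linearly. Hence, keeping $T_1,T_2,T_3,T_6$ at the minimal values selected above and replacing $T_4$ by $c\cdot T_3$ for an arbitrary constant $c\ge 1$, the new $T_5,T_7,R_1$ grow only by constant factors and remain $\BO(1)$, while $R_2$ remains $\BO(n)$. Since $T_3$ and $T_2+T_3+4d$ are fixed constants once $T_1,T_2$ are fixed,
\[
\alpha=\frac{(T_2+T_4)/\vartheta}{T_2+T_3+4d}\ge \frac{c\,T_3/\vartheta}{T_2+T_3+4d}
\]
can be made as large as desired by taking $c$ sufficiently large.

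The main obstacle is verifying the compatibility of \eqref{eq:lambda} with the lower bound \eqref{eq:T_2}; all the other inequalities are monotone lower bounds that can be satisfied by picking each timeout just large enough. Once one observes that \eqref{eq:lambda} is asymptotically tight as $T_2\to\infty$ and that $\lambda<1$ is a fixed constant, the rest of the argument is a routine bookkeeping of the linear dependencies in the order $T_1\to T_2\to T_6\to T_3\to T_4\to T_5\to T_7\to R_1\to R_2\to R_3$.
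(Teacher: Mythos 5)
Your proposal is correct and takes essentially the same route as the paper's proof: both resolve \eqref{eq:lambda} as merely a constant lower bound on $T_2$, assign the timeouts in the same topological order $T_1,T_2,T_6,T_3,T_4,T_5,T_7,R_1,R_2$ (each right-hand side then being a constant, except for the factor $n-f$ in \eqref{eq:R_2}), and obtain arbitrarily large $\alpha$ by inflating $T_4$, which appears only in lower bounds for later quantities. The paper differs only in presentation---it folds the $\alpha$-requirement into a strengthened constraint on $T_4$ and writes out explicit closed-form values---which your monotonicity/bookkeeping argument replaces without any loss.
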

\begin{proof}
First, observe that if \inequalityref{eq:T_2} holds, the denominator in the
right hand side of \inequalityref{eq:lambda} is positive. Thus, we can
equivalently state \inequalityref{eq:lambda} as
\begin{equation}\label{eq:T_2_merged}
T_2\geq \frac{2\vartheta \Delta_g+(1-\lambda)(\vartheta-1)T_1+(4-\lambda)d}
{1-\lambda}.
\end{equation}
Since $\lambda\in (4/5,1)$, this inequality clearly imposes a stronger
constraint than \inequalityref{eq:T_2}, hence we can replace
Inequalities~\eqref{eq:T_2} and~\eqref{eq:lambda} with this one and obtain an
equivalent system. The requirement of $(T_2+T_4)/(\vartheta(T_2+T_3+4d))=\alpha$
can be rephrased as
\begin{equation}\label{eq:T_4_ratio}
T_4\geq (\alpha \vartheta-1)T_2+\alpha \vartheta(T_3+4d).
\end{equation}
Again, clearly this constraint is stronger than \inequalityref{eq:T_4}, hence
we drop \inequalityref{eq:T_4} in favor of \inequalityref{eq:T_4_ratio}.

We satisfy the inequalities by iteratively defining the values of the left hand
sides in accordance with the respective constraint, in the order \eqref{eq:T_1},
\eqref{eq:T_2_merged}, \eqref{eq:T_6}, \eqref{eq:T_3}, \eqref{eq:T_4_ratio},
\eqref{eq:T_5}, \eqref{eq:T_7}, \eqref{eq:R_1}, and finally \eqref{eq:R_2}. Note
that this is feasible, as in any step the right hand side of the current
inequality is an expression in $d$, $\vartheta$, $\alpha$, and, in case of
\inequalityref{eq:R_2}, $n-f$.\footnote{For simplicity, we refrain from
demanding equality and drop terms in order to get more condensed expressions.
For \ $\vartheta\leq 1.2$, for example, the increase in the bounds is not significant.}
We obtain the solution
\begin{eqnarray*}
T_1 &:=& 4\vartheta d\\
T_2 &:=& \frac{46\vartheta^3 d}{1-\lambda}\\
T_6 &:=& \frac{46\vartheta^4 d}{1-\lambda}\\
T_3 &:=& \frac{(\vartheta^2-1)46\vartheta^3 d}{1-\lambda}+31\vartheta^3d\\
T_4 &:=& \frac{46\vartheta^3(\alpha \vartheta^3-1)d}{1-\lambda}+
35\alpha\vartheta^4d\\
T_5 &:=& \frac{46\vartheta^4(\alpha \vartheta^3-1)d}{1-\lambda}+
39\alpha\vartheta^5 d\\
T_7 &:=& \frac{92\alpha \vartheta^8 d}{1-\lambda}+
78\alpha\vartheta^5 d\\
R_1 &:=& \frac{46\vartheta^6(3\alpha \vartheta^3-1)d}{1-\lambda}+
109\alpha\vartheta^6 d\\
R_2 &:= & \frac{(92\vartheta^7(3\alpha \vartheta^3-1)(n-f)d}{(1-\lambda)^2}\\
&&+\frac{(218\alpha\vartheta^7+108\vartheta^2))(n-f)d}
{1-\lambda}\,.
\end{eqnarray*}
As $\alpha\in \BO(1)$ was arbitrary, $d$ and $\vartheta$ are constants, and
$\lambda\in (4/5,1)$ depends on $\vartheta$ only and is thus a constant as well,
these values satisfy the asymptotic bounds stated in the lemma, concluding the
proof.
\end{proof}

\section{Analysis}\label{sec:analysis}

In this section we derive skew bounds $\Sigma$, as well as accuracy bounds
$T^-,T^+$, such that the presented protocol is a $(W,E)$-stabilizing pulse
synchronization protocol, for proper choices of the set of nodes $W$ and the set
of channels $E$, with skew $\Sigma$ and accuracy bounds $T^-,T^+$ that
stabilizes within time $T(k) \in \BO(kn)$ with probability $1-2^{-k(n-f)}$, for
any $k\in \N$. This analysis follows the lines of \cite{DFLS11:TR}, with minor
adjustments due to the changes made to the FATAL protocol. Moreover, we show
that if a set of at least $n-f$ nodes fires pulses regularly, then other
non-faulty nodes synchronize within $\BO(R_1)$ time deterministically. This
stabilization mechanism is much simpler; the main challenge here is to avoid
interference with the other approach.

\subsection{Basic Statements}

To start our analysis, we need to define the basic requirements for
stabilization. Essentially, we need that a majority of nodes is non-faulty and
the channels between them are correct. However, the first part of the
stabilization process is simply that nodes ``forget'' about past events that are
captured by their timeouts. Therefore, we demand that these nodes indeed have
been non-faulty for a time period that is sufficiently large to ensure that all
timeouts have been reset at least once after the considered set of nodes became
non-faulty.
\begin{definition}[Coherent Nodes]
A subset of nodes $W\subseteq V$ is called \emph{coherent} during the time
interval $[t^-,t^+]$, iff during $[t^--(\vartheta(R_2+3d)
+8(1-\lambda)R_2)-d,t^+]$ all nodes
$i\in W$ are non-faulty, and all channels $S_{i,j}$, $i,j\in W$, are correct.
\end{definition}
We will show that if a coherent set of at least $n-f$ nodes fires a pulse, i.e.,
switches to \acc\ in a tight synchrony, this set will generate pulses
deterministically and with controlled frequency, as long the set remains
coherent. This motivates the following definitions.
\begin{definition}[Stabilization Points]
We call time $t$ a \emph{$W$-stabilization point (quasi-stabi\-lization point)}
iff all nodes $i\in W$ switch to \acc\ during $[t,t+2d)$ $([t,t+3d))$.
\end{definition}

\textit{Throughout this section, we assume the set of coherent nodes $W$ with
$|W|\geq n-f$ to be fixed and consider all nodes in and channels originating
from $V\setminus W$ as (potentially) faulty.}  As
all our statements refer to nodes in $W$, we will typically omit the word
``non-faulty'' when referring to the behavior or states of nodes in $W$, and
``all nodes'' is short for ``all nodes in $W$''. Note, however, that we will
still clearly distinguish between channels originating at faulty and non-faulty
nodes, respectively, to nodes in $W$.

As a first step, we observe that at times when $W$ is coherent, indeed all nodes
reset their timeouts, basing the respective state transition on proper
perception of nodes in~$W$.
\begin{lemma}\label{lemma:counters}
If $W$ is coherent during the time interval $[t^-,t^+]$, with $t^- \ge \vartheta(R_2+3d)
+8(1-\lambda)R_2+d$, any (randomized)
timeout $(T,s)$ of any node $i\in W$ expiring at a time $t\in[t^-,t^+]$ has been
reset at least once since time $t^--(\vartheta(R_2+3d)
+8(1-\lambda)R_2)$. If $t'$ denotes the time
when such a reset occurred, for any $j\in W$ it holds that
$S_{i,j}(t')=S_j(\tau_{j,i}^{-1}(t'))$, i.e., at time $t'$, $i$ observes $j$ in a
state $j$ attained when it was non-faulty.
\end{lemma}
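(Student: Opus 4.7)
The plan is to prove both parts of the statement by direct appeal to the model axioms, with the only nontrivial step being a uniform upper bound on the local-time duration of every timeout in the algorithm. First, I would verify from \conditionref{cond:timeout_bounds} that every deterministic timeout $T_1,\dots,T_7,R_1,R_2$ is at most $R_2$: by inspection of \eqref{eq:T_7}, $T_7$ dominates each of $T_1,\dots,T_6$, and \eqref{eq:R_1}, \eqref{eq:R_2} yield $T_7 \le R_1 \le R_2$ (using $n-f\ge 1$). The randomized timeout $R_3$ takes values in $[\vartheta(R_2+3d),\,\vartheta(R_2+3d)+8(1-\lambda)R_2]$ by \eqref{eq:R_3}. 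Hence no timeout ever waits more than $\vartheta(R_2+3d)+8(1-\lambda)R_2$ units of local time between a reset and the subsequent expiration.

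Converting this to reference time via the clock-rate lower bound of $1$: if $(T,s)$ was last reset at reference time $t'$ and expires at reference time $t$, then $t-t' \le C(t)-C(t') \le \vartheta(R_2+3d)+8(1-\lambda)R_2$. Combined with $t \ge t^-$, this gives $t' \ge t^- - (\vartheta(R_2+3d)+8(1-\lambda)R_2)$, establishing the first claim. The hypothesis $t^- \ge \vartheta(R_2+3d)+8(1-\lambda)R_2+d$ then ensures that $t'$ lies inside the coherence window $[t^- - (\vartheta(R_2+3d)+8(1-\lambda)R_2)-d,\,t^+]$ with a margin of at least $d$; in particular node $i$ is non-faulty and, for every $j\in W$, the channel $S_{j,i}$ is correct at time $t'$.

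For the second claim, applying the definition of a correct channel at time $t'$ directly yields $S_{i,j}(t') = S_j(\tau_{j,i}^{-1}(t'))$. The model bound $t - \tau_{j,i}^{-1}(t) < d$ further gives $\tau_{j,i}^{-1}(t') > t'-d \ge t^- - (\vartheta(R_2+3d)+8(1-\lambda)R_2) - d$, so the signal $j$ transmitted was issued while $j$ itself was non-faulty, which is the interpretation asserted after the equality. The only step that requires any real attention is the first: a careful pass through \conditionref{cond:timeout_bounds} to be sure that no timeout's local duration can exceed the claimed bound. Everything else is a routine unpacking of the clock-rate and channel-correctness definitions.
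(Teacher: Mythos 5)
Your proof is correct and follows essentially the same route as the paper's: bound every timeout's local duration by the maximum of the $R_3$ support, use the clock-rate lower bound of $1$ to transfer this to reference time, and then invoke the coherence window (whose extra $-d$ slack exists exactly for the channel-delay step) together with channel correctness to get $S_{i,j}(t')=S_j(\tau_{j,i}^{-1}(t'))$ with the sender non-faulty at transmission time. The only difference is that you spell out the domination chain $T_1,\dots,T_7\le R_1\le R_2$ that the paper merely asserts; for completeness your sweep should also cover the composite timeouts appearing in the state machines (e.g.\ $(2\vartheta+1)T_1$, $\vartheta(2T_1+3d)$, $4\vartheta d$), but these are dominated just as trivially, so nothing breaks.
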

\begin{proof}
According to \conditionref{cond:timeout_bounds}, the largest possible value of
any (randomized) timeout is $\vartheta(R_2+3d)
+8(1-\lambda)R_2$. Hence, any timeout that is in state~$1$ at
a time smaller than $t^--(\vartheta(R_2+3d)
+8(1-\lambda)R_2) \ge 0$ expires before time $t^-$ or is reset at least
once. As by the definition of coherency all nodes in $W$ are non-faulty and all
channels between such nodes are correct during $[t^--(\vartheta(R_2+3d)
+8(1-\lambda)R_2),t^+]$, this
implies the statement of the lemma.
\end{proof}

Phrased informally, any corruption of timeout and channel states eventually
ceases, as correct timeouts expire and correct links remember no events that lie
$d$ or more time in the past. Proper cleaning of the memory flags is more
complicated and will be explained further down the road. 

\textit{Throughout this
section, we will assume for the sake of simplicity that the set $W$ is coherent
at all times} and use this lemma implicitly, e.g.\ we will always assume that
nodes from $W$ will observe all other nodes from $W$ in states that they indeed
had less than $d$ time ago, expiring of randomized timeouts at non-faulty nodes
cannot be predicted accurately, etc. We will discuss more general settings in
\sectionref{sec:generalizations}.

We proceed by showing that once all nodes in $W$ switch to \acc\ in a short
period of time, i.e., a $W$-quasi-stabilization point is reached, the algorithm
guarantees that synchronized pulses are generated deterministically with a
frequency that is bounded both from above and below.

\begin{theorem}\label{theorem:stability}
Suppose $t$ is a $W$-quasi-stabilization point. Then
\begin{itemize}
  \item [(i)] all nodes in $W$ switch to \acc\ exactly once within
  $[t,t+3d)$, and do not leave \acc\ until $t+4d$; and
  \item [(ii)] there will be a $W$-stabilization point
  $t'\in(t+(T_2+T_3)/\vartheta,t+T_2+T_4+5d)$ satisfying that no
  node in $W$ switches to \acc\ in the time interval $[t+3d,t')$; and that
  \item [(iii)] each node $i$'s, $i\in W$, core state machine
  (\figureref{fig:main_simple}) is metastability-free during $[t+3d,t'+3d]$.
\end{itemize}
\end{theorem}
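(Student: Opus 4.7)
The strategy is to trace every node in $W$ through one full iteration of the basic cycle through states \acc, \slp, \srw, \wake, \rdy, \prop, and back to \acc. For each state $s$ on the cycle I maintain an interval $[a,b]$ (depending on $s$) during which every node in $W$ enters $s$, and show inductively that the window width $b-a$ stays bounded in terms of $d$, $\vartheta$, and $T_1,\ldots,T_4$. The initial \acc window is contained in $[t, t+3d)$ by hypothesis, and the timeout constraints \eqref{eq:T_1}--\eqref{eq:T_5} are tailored so that the induced windows stay controlled and no node of $W$ drops off the cycle into state \rec. The principal obstacle is the bookkeeping: at every stage I must verify both that the timeout and threshold conditions for the intended transition fire on time, and that the conditions for the spurious transitions into \rec (from \acc via ``$<n-f$ \acc'', from \wake via ``$\ge f+1$ \rec or \acc'', and from \prop via $T_5$) all fail.

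\textbf{Part (i).} Each $i \in W$ switches to \acc at some $t_0(i) \in [t, t+3d)$ by hypothesis, which resets $T_1$ within $[t_0(i), \tau_{i,i}(t_0(i))]$. Both outgoing transitions from \acc require $T_1$ to have expired, and by \eqref{eq:T_1} this cannot happen before $t_0(i) + T_1/\vartheta \geq t + 4d$. Hence $i$ stays in \acc throughout $[t_0(i), t+4d]$, which gives~(i); ``exactly once'' follows because $i$ cannot re-enter \acc without first leaving. As a by-product, by time $t + 4d$ every node in $W$ memorizes $\ge n-f$ peers in \acc, so the ``$<n-f$ \acc'' precondition for the \acc-to-\rec transition fails, and each node takes the \acc-to-\slp transition at some moment in $[t_0(i) + T_1/\vartheta,\, t_0(i) + T_1 + d]$.

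\textbf{Part (ii).} I propagate the window forward along the cycle. The \slp window is thus contained in $[t + T_1/\vartheta,\, t + 3d + T_1 + d]$; the fixed-timeout transition from \slp to \srw (after $(2\vartheta+1)T_1$) and the trivial transition from \srw to \wake (within $d$) preserve boundedness, widening the window only by the usual $(\vartheta-1)$-drift factors. Upon entering \wake the \acc and \rec flags are reset, and since no node of $W$ is in \acc or \rec throughout the \wake phase, the adversary can contribute at most $f$ such flags at any coherent observer; the threshold $f+1$ is not reached, so the \wake-to-\rec transition does not fire and each node switches to \rdy when $(T_2,\acc)$ expires. From \rdy, the first node of $W$ switches to \prop at some time $t_p^{\min}$ triggered either by ``$T_3$ and $\Next_i=1$'' or by $T_4$; within a further $d$, every remaining node of $W$ memorizes at least $f+1$ peers in \prop and follows via the ``$\ge f+1$ \prop'' rule, capping the \prop window at $2d$. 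An analogous burst via the ``$\ge f+1$ \acc'' and ``$\ge n-f$ \prop or \acc'' thresholds then carries the next \acc window to width $\leq 2d$ within another $2d$; this is the claimed $W$-stabilization point $t'$. A direct computation using \eqref{eq:T_3}--\eqref{eq:T_5} pins $t'-t$ inside $((T_2+T_3)/\vartheta,\, T_2+T_4+5d)$ and simultaneously shows that $T_5$ does not expire at any node while it is still in \prop, so the \prop-to-\rec transition is ruled out as well.

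\textbf{Part (iii).} Metastability-freedom of the core machine on $[t+3d, t'+3d]$ reduces to showing that the real-time gap between any two successive switches of a node $i$ strictly exceeds the self-channel delay $\tau_{i,i}(t)-t<d$. For every transition on the cycle except the one out of \srw, this gap is bounded below by a timeout divided by $\vartheta$, and \eqref{eq:T_1}--\eqref{eq:T_5} make each such value strictly larger than $d$. The remaining transition from \srw to \wake uses the trivial guard and has no competing outgoing transition from \srw, so neither of the two metastability scenarios described in Section~\ref{sec:model} can apply to it.
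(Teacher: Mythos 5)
Your part (i) and the general window-propagation strategy for (ii) coincide with the paper's proof, but two steps are genuinely broken. In (ii) you claim that once the first node of $W$ switches to \prop, ``within a further $d$, every remaining node of $W$ memorizes at least $f+1$ peers in \prop'' and hence the \prop window is capped at $2d$. This is false: a single node of $W$ in \prop supplies only one reliable \prop flag, and the up to $f$ faulty nodes need not supply any, so the $f{+}1$ threshold is not reached. In reality the \prop switches of nodes in $W$ may spread over the whole interval from the first timeout-triggered switch up to the $T_4$ deadline near $t+T_2+T_4+5d$. The tightness of the new pulse has a different anchor: the \emph{first switch to \acc} at $t'$ presupposes $n-f$ memorized \prop flags, hence at least $n-2f\ge f+1$ nodes of $W$ already in \prop; only then does the $f{+}1$ rule pull the stragglers out of \rdy within $d$, and the $n{-}f$ rule pushes everyone into \acc within a further $d$. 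One also needs the complementary case in which no node switches to \acc before $t+T_2+T_4+4d$ and the expiring $T_4$ forces the burst; without this case split the upper bound on $t'$ is not established. Likewise, the assertion that the first \prop switch is timeout-triggered, and that no node of $W$ is in (or observed in) \acc or \rec during the \wake/\rdy phase, needs an argument (the paper gets it by considering the node with the minimal first switching time to \prop or \rec); you state it without justification.

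Part (iii) does not go through as written. The gap between successive switches is \emph{not} bounded below by a timeout divided by $\vartheta$ for all non-\srw transitions: the \wake$\to$\rdy step is gated by $(T_2,\acc)$, which was reset back when the node entered \acc, not on entering \wake; the \rdy$\to$\prop step at all but the first node and the \prop$\to$\acc step are threshold-triggered; and for \prop$\to$\acc the switch can occur at the very instant a node first observes itself in \prop (the other $n-f$ flags may already be set), so the gap can shrink to the loopback delay and your strict inequality fails. What actually must be shown---and what the paper's proof of (iii) establishes---is that the competing guards into \rec remain disabled from each basic-cycle switch until the node observes itself in its new state: $tr(\acc,\slp)$ is satisfied before $T_1$ can enable $tr(\acc,\rec)$; $tr(\wake,\rec)$ never holds during the relevant window; and $T_5$ does not expire at a node switching to \acc during $(t',t'+2d)$ before time $t'+3d$, which is precisely the extra $d$ of slack built into \eqref{eq:T_5}. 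Your part (ii) only claims that $T_5$ does not expire ``while the node is still in \prop,'' which is the weaker statement and does not cover the observation window after the switch to \acc.
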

\begin{proof}
Proof of (i): Due to \inequalityref{eq:T_1}, a node does not leave the state
\acc\ earlier than $T_1/\vartheta \geq 4d$ time after switching to it.
Thus, no node can switch to \acc\ twice during $[t,t+3d)$. By definition of a
quasi-stabilization point, every node does switch to \acc\ in the interval
$[t,t+3d)\subset [t,t+T_1/\vartheta)$. This proves Statement~(i).

\medskip

Proof of (ii): For each $i\in W$, let $t_i \in [t,t+3d)$ be the time when $i$
switches to \acc. By (i) $t_i$ is well-defined. Further let $t'_i$ be the
infimum of times in $(t_i,\infty)$ when $i$ switches to \rec or
\prop.\footnote{Note that we follow the convention that $\inf\emptyset = \infty$
if the infimum is taken with respect to a (from above) unbounded subset of
$\R^+_0$.} In the following, denote by $i\in W$ a node with minimal $t'_i$.

We will show that all nodes switch to \prop\ via states \slp, \srw, \wake, and
\rdy\ in the presented order. By (i) nodes do not leave \acc\ before $t+4d$.
Thus at time $t+4d$, each node in $W$ is in state \acc\ and observes each other
node in $W$ in \acc. Hence, each node in $W$ memorizes each other node in $W$ in
\acc\ at time $t+4d$. For each node $j\in W$, let $t_{j,s}$ be the time node
$j$'s timeout $T_1$ expires first after $t_j$. Then $t_{j,s} \in
(t_j+T_1/\vartheta,t_j+T_1+d)$.\footnote{The upper bound comprises an additive
term of $d$ since $T_1$ is reset at some time from $(t_j,t_j+d)$.} Since  $|W|
\geq n-f$, each node $j$ switches to state \slp\ at time $t_{j,s}$. Hence, by
time $t+T_1+4d$, no node will be observed in state \acc\ anymore (until the time
when it switches to \acc\ again).

When a node $j\in W$ switches to state \wake\ at the minimal time $t_w$ larger
than $t_j$, it does not do so earlier than at time $t+T_1/\vartheta +
(2+1/\vartheta)T_1 = t+(2+2/\vartheta)T_1 > t+T_1+4d$. This implies that all
nodes in $W$ have already left \acc\ at least $d$ time ago, since they switched
to it at their respective times $t_j<t+T_1+3d$. Moreover, they cannot switch to
\acc\ again until $t_i'$ as it is minimal and nodes need to switch to \prop\
or \rec before switching to \acc. Hence, nodes in $W$ are not
observed in state \acc\ during $(t+T_1+4d,t_i']$, in particular not by node
$j$. Furthermore, nodes in $W$ are not observed in state \rec\ during
$(t_w-d,t_i']$. As it resets its \acc\ and \rec\ flags upon switching to \wake,
$j$ will hence neither switch from \wake\ to \rec\ nor from \rdy to
\prop during $(t_w,t_i')$.

Now consider node $i$. By the previous observation, it will not switch from
\wake\ to \rec, but to \rdy, following the basic cycle. Consequently, it must
wait for timeout $T_2$ to expire, i.e., cannot switch to \rdy\ earlier than at
time $t+T_2/\vartheta$. By definition of $t_i'$, node $i$ thus switches to \prop
at time $t_i'$. As it is the first node that does so, this cannot happen before
timeouts $T_3$ or $T_4$ expire, i.e., before time
\begin{equation}
t+\frac{T_2}{\vartheta} + \frac{\min\{T_3,T_4\}}{\vartheta} \sr{eq:T_4}{=}
t+\frac{T_2+T_3}{\vartheta} \sr{eq:T_3}{>} t+T_2+5d.\label{eq:low}
\end{equation}

All other nodes in $W$ will switch to \wake, and for the first time after $t_j$,
observe themselves in state \wake\ at a time within
$(t+T_1+4d,t+T_1(2+\vartheta)+7d)$. Recall that unless they memorize at least
$f+1$ nodes in \acc\ or \rec\ while being in state \wake, they will all switch
to state \rdy\ by time
\begin{equation}
\max\{t+T_2+4d,t+(2\vartheta+2)T_1+7d\}\sr{eq:T_2}{=}t+T_2+4d.\label{eq:up}
\end{equation}
As we just showed that $t_i'>t+T_2+5d$, this implies that at time $t+T_2+5d$ all
nodes are observed in state \rdy, and none of them leaves before time $t_i'$.

Now choose $t'$ to be the infimum of times from
$(t+(T_2+T_3)/\vartheta,t+T_2+T_4+4d]$ when a node in $W$ switches to state
\acc.\footnote{Note that since we take the infimum on
$(t+(T_2+T_3)/\vartheta,t+T_2+T_4+4d]$, we have that $t'\leq t+T_2+T_4+4d$.}
Because of \inequalityref{eq:low}, node $j$ cannot switch to \prop\ within
$[t_j,t+(T_2+T_3)/\vartheta)$. Thus, (after time $t+3d$) node $j$ does not switch to
\acc again earlier than time $t'$, and timeout $T_5$ cannot expire at $j$ until
time
\begin{equation}
t+\frac{T_2+T_3+T_5}{\vartheta}\sr{eq:T_5}{\geq}t+T_2+T_4+7d
\geq t'+3d,\label{eq:acc_on_time}
\end{equation}
making it impossible for $j$ to switch from \prop\ to \rec\ at a time
within $[t_j,t'+3d]$. What is more, a node from $W$ that switches to \acc\ must
stay there for at least $T_1/\vartheta>3d$ time. Thus, by definition of $t'$, no
node $j\in W$ can switch from \acc\ to \rec\ at a time within $[t_j,t'+3d]$.
Hence, no node $j\in W$ can switch to state \rec\ after $t_j$, but earlier than
time $t'+2d$. It follows that no node in $W$ can switch to other states than
\prop or \acc during $[t+T_2+4d,t'+2d]$. In particular, no node in $W$ resets
its \prop flags during $[t+T_2+5d,t'+2d]\supset [t_i',t'+2d]$.

If at time $t'$ a node in $W$ switches to state \acc, $n-2f\geq f+1$ of its
\prop\ flags corresponding to nodes in $W$ are true, i.e., all correspond to a
flag holding $1$. As the node reset its \prop\ flags at the most recent time
when it switched to \rdy\ and no nodes from $W$ have been observed in \prop\
between this time and $t_i'$, it holds that $f+1$ nodes in $W$ switched to state
\prop\ during $[t_i',t')$. Since we established that no node resets its \prop\
flags during $[t_i',t'+2d]$, it follows that all nodes are in state \prop\ by
time $t'+d$. Consequently, all nodes in $W$ will observe all nodes in $W$ in
state \prop\ before time $t'+2d$ and switch to \acc, i.e., $t'\in
(t+(T_2+T_3)/\vartheta,t+T_2+T_4+4d)$ is a W-stabilization point. Statement (ii)
follows.

On the other hand, if at time $t'$ no node in $W$ switches to state \acc, it
follows that $t'=t+T_2+T_4+4d$. As all nodes observe themselves in state \rdy\
by time $t+T_2+5d$, they switch to \prop\ before time $t+T_2+T_4+5d=t'+d$
because $T_4$ expired. By the same reasoning as in the previous case, they
switch to \acc\ before time $t'+2d$, i.e., Statement (ii) holds as well.

\medskip

Proof of (iii): We have shown that within $[t_j,t'+3d]$, any node $j\in W$
switches to states along the basic cycle only. Note that Condition~(ii) in the
definition of metastability-freedom is satisfied by definition for state
transitions along the basic cycle, as the conditions involve memory flags and
timeouts (that are not associated with the states the nodes switch to) only.
To show the correctness of Statement~(iii), it is thus sufficient to prove that,
whenever $j$ switches from state $s$ of the basic cycle to $s'$ of the basic
cycle during time $[t_j,t'+3d]$, the transition from $s$ to \rec\ is disabled
from the time it switches to $s'$ until it observes itself in this state. We
consider transitions $tr(\acc,\rec)$, $tr(\wake,\rec)$, and $tr(\prop,\rec)$ one
after the other:

\begin{enumerate}
  \item $tr(\acc,\rec)$: We showed that node $j$'s condition $tr(\acc,\slp)$ is
  satisfied before time $t+4d \leq t+T_1/\vartheta$, i.e., before
  $tr(\acc,\rec)$ can hold, and no node resets its \acc\ flags less
  than $d$ time after switching to state \slp.
  When $j$ switches to state \acc\ again at or after time $t'$, $T_1$ will
  not expire earlier than time $t'+4d$.
  
  \item $tr(\wake,\rec)$: As part of the reasoning about Statement~(ii), we
  derived that $tr(\wake,\rec)$ does not hold at nodes from $W$ observing
  themselves in state \wake.
  
  \item $tr(\prop,\rec)$: The additional slack of $d$ in  
  \inequalityref{eq:acc_on_time} ensures that $T_5$ does not expire at any node
  in $W$ switching to state \acc\ during $(t',t'+2d)$ earlier than time $t'+3d$.
\end{enumerate}
Since $[t_j,t'+3d] \supset [t+3d,t'+3d]$, Statement~(iii) follows.
\end{proof}

Inductive application of \theoremref{theorem:stability} shows that by
construction of our algorithm, nodes in $W$ provably do not suffer from
metastability upsets once a $W$-quasi-stabilization point is reached, as long as
all nodes in $W$ remain non-faulty and the channels connecting them correct.
Unfortunately, it can be shown that it is impossible to ensure this property
during the stabilization period, thus rendering a formal treatment infeasible.
This is not a peculiarity of our system model, but a threat to any model that
allows for the possibility of metastable upsets as encountered in physical chip
designs. However, it was shown that, by proper chip design, the probability of
metastable upsets can be made arbitrarily
small~\cite{FFS09:ASYNC09}.\footnote{Note that it is feasible to incorporate
this issue into the model by means of the probability space, as it is beyond
control of ``reasonable'' adversaries to control signals on faulty channels (or
ones that originate at non-faulty nodes) precisely enough to ensure more than a
small probability of a metastable upsets. However, since it is (at best)
impractical to consider metastable states of the system in our theoretical
framework, essentially this approach boils down to counting the number of state
transitions during stabilization where a non-faulty node is in danger of
becoming metastable and control this risk by means of the union bound.}
\textit{In the remainder of this work, we will therefore assume that all
non-faulty nodes are metastability-free in all executions.}

The next lemma reveals a very basic property of the main algorithm that is
satisfied if no nodes may switch to state \join\ in a given period of time. It
states that if a non-faulty node switches to state \slp, other non-faulty
nodes cannot remain too far ahead or behind in their execution of the basic cycle.
\begin{lemma}\label{lemma:sleep_one}
Assume that at time $t_\slp$, some node from $W$ switches to \slp\ and no node from
$W$ is in state \join\ during $[t_\slp-T_1-d,t_\slp+2T_1+3d]$. Then
\begin{itemize}
  \item [(i)] at time $t_\slp+2T_1+3d$, any node is in one of the states
  \slp, \srw, \wake, or \rec;
  \item [(ii)] any node in states \slp, \srw, or \wake reset its timeout $T_2$
  at some time from $(t_\slp-\Delta_g-4d,t_\slp+(2-1/\vartheta)T_1+3d)$; and
  \item [(iii)] no node switches from \rec to \acc during $[t_\slp+T_1+2d,t_a]$,
  where $t_a>t_\slp+2T_1+3d$ denotes the infimum of times larger than
  $t_\slp+T_1+2d$ when a node switches to state \acc.
\end{itemize}
\end{lemma}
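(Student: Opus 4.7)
The plan is to leverage the fact that $i^*$'s transition at $t_\slp$ is triggered by $tr(\acc,\slp)$, so $T_1$ has expired and $\geq n-f$ nodes are memorized by $i^*$ in \acc. Since $|V\setminus W|\leq f$, at least $n-2f\geq f+1$ of those memorizations correspond to a set $U\subseteq W$ of nodes that were in \acc\ within a bounded window ending at $t_\slp$. From this anchor I would propagate forward through the main state machine, using \conditionref{cond:timeout_bounds} (especially $T_1\geq 4\vartheta d$ and the minimum \slp\ dwell-time $(2+1/\vartheta)T_1$) to bound all induced transitions, and invoking the no-\join\ assumption repeatedly to forbid the \rec-\join-\prop-\acc\ recovery path throughout the target interval.

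For (i), each $j\in U$ has its own $T_1$ expire within $T_1+d$ of entering \acc, so by time $t_\slp+T_1+2d$ every such $j$ has left \acc\ either for \slp\ (mimicking $i^*$) or for \rec. Since the minimum reference-time dwell in \slp\ is $(2+1/\vartheta)T_1>2T_1+3d$, such nodes cannot yet have reached \srw\ at $t_\slp+2T_1+3d$. For any $j\in W\setminus U$, I would case-split on $j$'s state at $t_\slp$: if $j$ is in \rdy\ or \prop, the $\geq f+1$ \acc\ threshold in $tr(\rdy,\prop)$ or $tr(\prop,\acc)$ fires on observing $U$'s synchronized \acc\ states, dragging $j$ along the basic cycle into \slp\ or \rec\ within the window; if $j$ is already in $\{\slp,\srw,\wake,\rec\}$ at $t_\slp$, the only exits back into $\{\rdy,\prop,\acc\}$ are via \wake-to-\rdy\ (whence the previous bullet applies) or via the \rec-\join\ route, which is blocked by hypothesis. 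The guard $tr(\wake,\rec)$ and the \prop-timeout $T_5$ then absorb any remaining stragglers.

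For (ii), recall that the timeout $(T_2,\acc)$ is reset only on a switch to \acc, so for any node currently in \slp, \srw, or \wake\ it suffices to bound when it last entered \acc. The upper bound $t_\slp+(2-1/\vartheta)T_1+3d$ comes from the minimum \acc\ dwell of $T_1/\vartheta$ plus channel slack: to already be in \slp\ at $t_\slp+2T_1+3d$ requires an \acc\ entry at least that much earlier. The lower bound $t_\slp-\Delta_g-4d=t_\slp-(2\vartheta+3)T_1-4d$ is obtained by noting that a node currently in \srw\ or \wake\ must have entered \slp\ no more than $(2\vartheta+1)T_1$ reference time ago, and its preceding \acc\ entry was at most $T_1+d$ further back.

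For (iii), $tr(\rec,\acc)$ requires observing $\geq n-f$ nodes in state \acc. By (i) every node in $W$ is out of \acc\ by $t_\slp+T_1+2d$, and by definition of $t_a$ no node from $W$ re-enters \acc\ during $[t_\slp+T_1+2d,t_a)$; the at most $f$ faulty nodes/channels contribute at most $f<n-f$ spurious \acc\ observations, so the threshold cannot fire. The principal obstacle is the exhaustive case analysis in (i) for $W\setminus U$: one must verify that the tight timeout constraints in \conditionref{cond:timeout_bounds} interact precisely with the wave of observations originating at $U$ so as to force convergence into $\{\slp,\srw,\wake,\rec\}$ within $2T_1+3d$, and the no-\join\ hypothesis is the essential lynchpin that shuts down the otherwise-valid loop back into the basic cycle.
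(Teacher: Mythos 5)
Your overall strategy is the paper's: anchor on a set $A$ of at least $n-2f$ nodes of $W$ that were in \acc during $(t_\slp-T_1-d,t_\slp)$, use the no-\join hypothesis to close the recovery path, and push everything through the timeout constraints. But at the delicate points the proposal has genuine gaps. For (i), the dangerous nodes are those that can reach \rdy \emph{afresh} during the window: a node sitting in \wake whose timeout $(T_2,\acc)$ expires at, say, $t_\slp+T_1+2d$ switches to \rdy, and if it holds $f+1$ memorized \acc flags from your set $U$ it cascades through \prop into \acc around $t_\slp+T_1+4d$ and could still be in \acc (or \prop) at $t_\slp+2T_1+3d$. Your ``the previous bullet applies'' for \wake$\to$\rdy exits does not bound \emph{when} such exits happen, and ``$tr(\wake,\rec)$ and the \prop-timeout $T_5$ absorb any remaining stragglers'' is not an argument ($T_5$ is far too large to matter on this time scale, and $tr(\wake,\rec)$ only fires once the $f+1$ flags are argued to be set). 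The paper excludes this scenario by exactly the dichotomy you skip: nodes in \slp, \srw, or only freshly in \wake around $t_\slp-T_1-d$ (or switching to \slp during the window) have reset $T_2$ within $(t_\slp-\Delta_g-4d,t_\slp+(2-1/\vartheta)T_1+3d)$ --- this is statement (ii), used as an \emph{ingredient} of (i), since by \eqref{eq:T_2} such a node cannot reach \rdy before $t_\slp+2T_1+3d$ --- whereas nodes already observing themselves in \wake, \rdy, or \prop at $t_\slp-T_1-d$ are forced, via their \emph{memorized} (not simultaneously observed; the \acc intervals of $A$ are spread over a window of length about $T_1+d$) \acc flags, into \rec or \acc by $t_\slp+3d$, hence out of \acc by about $t_\slp+T_1+4d$. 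Your case split also omits the direct exit \rec$\to$\acc from the ``already in $\{\slp,\srw,\wake,\rec\}$'' case.

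The arguments for (ii) and (iii) as written do not go through either. In (ii), ``a node currently in \srw\ or \wake\ must have entered \slp\ no more than $(2\vartheta+1)T_1$ ago'' is false for \wake: the dwell time there is bounded only by $(T_2,\acc)$ expiring, so a purely backward argument cannot place the last $T_2$ reset after $t_\slp-\Delta_g-4d$; the statement holds only because the forward argument flushes stale \wake-dwellers to \rec (or sends them through \acc again, freshly resetting $T_2$) early in the window. In (iii), ``by (i) every node in $W$ is out of \acc by $t_\slp+T_1+2d$'' is neither what (i) asserts nor true --- nodes dragged along as above may enter \acc as late as $t_\slp+3d$ and remain for up to $T_1+d$ --- so bounding the spurious \acc observations by $f$ is unsound; the paper instead argues only that no node of $A$ can be observed in \acc during the window, whence at most $n-|A|\le 2f<n-f$ nodes can ever be observed in \acc and $tr(\rec,\acc)$ cannot fire. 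Finally, the bound $t_a>t_\slp+2T_1+3d$, which is part of statement (iii) and is what your proof of (i) implicitly leans on (late switches to \acc must be excluded), is never proved; since (i) and (iii) are interdependent, the paper orders the proof so that all return routes to \acc (basic cycle needs $T_2/\vartheta$ after the last \acc, \join is excluded by hypothesis, \rec$\to$\acc is blocked by the counting over $A$) are closed off before (i) and (ii) are concluded.
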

\begin{proof}
We claim that there is a subset $A\subseteq W$ of at least $n-2f$ nodes such
that each node from $A$ has been in state \acc\ at some time in the interval
$(t_\slp-T_1-d,t_\slp)$. To see this, observe that if a node switches to state \slp
at time $t_\slp$, it must have observed $n-2f$ non-faulty nodes in state \acc\ at
times from $(t_\slp-T_1,t_\slp]$, since it resets its \acc\ flags at the time
$t_a\geq t_\slp-T_1$ (that is minimal with this property) when it switched to
state \acc. Each of these nodes must have been in state \acc\ at some time
from $(t_\slp-T_1-d,t_\slp)$, showing the existence of a set $A\subseteq W$
as claimed.

During
\begin{eqnarray*}
&&\left[t_\slp+T_1+2d,\phantom{\frac{\vartheta(2T_1+3d)}{\vartheta}}\right.\\
&&\left.~t_\slp-T_1-d+\min\left\{
\frac{\vartheta(2T_1+3d)}{\vartheta},
\frac{T_2}{\vartheta}\right\}\right]\\
&\sr{eq:T_2}{=}&\left[t_\slp+T_1+2d,t_\slp+T_1+2d\right],
\end{eqnarray*}
no node from $A$ is observed in state \acc, as following the basic cycle
requires $T_2$ to expire, no node switches to \join, and in order to switch
directly from \rec to \acc, a timeout of $\vartheta(2T_1+3d)$ needs to expire
first. Since this also applies to the nodes from $A$ and no node is in state
\join until time $t_\slp+2T_1+3d$, the only way to do so is by following the
basic cycle via states \slp, \srw, \wake, \rdy, and \prop. However, this takes
at least until time
\begin{equation*}
t_\slp+\frac{T_2}{\vartheta}\sr{eq:T_2}{>}t_\slp+2T_1+3d
\end{equation*}
as well. This shows Statement~(iii) of the lemma.

Now consider any node that observes itself in one of the states \wake, \rdy, or
\prop at time $t_\slp-T_1-d$. By time $t_\slp+d$, it will memorize all nodes
from $A$ in \acc (provided that it did not switch to \acc in the meantime).
Hence, it satisfies $tr(\wake,\rec)$, $tr(\rdy,\prop)$, and $tr(\prop,\acc)$
until it switches to either \rec or \acc. It follows that any such node must
have switched to \rec or \acc by time $t_s+3d<t_s+T_1+2d$. On the other hand,
nodes that do not observe themselves in state \wake at time $t_\slp-T_1-d$ but
are in one of the states \slp, \srw, or \wake at this time or switch to \slp
during time $(t_\slp-T_1-d,t_\slp+2T_1+3d]$ must have reset their timeout $T_2$
at some time from
\begin{equation*}
\left(t_\slp-(2\vartheta+3)T_1-4d,
t_\slp+\left(2-\frac{1}{\vartheta}\right)T_1+3d\right),
\end{equation*}
i.e., Statement~(ii) holds. To infer Statement~(i), it remains to show that none
of the latter nodes may switch to \rdy until time $t_\slp+2T_1+3d$. As no nodes
from~$W$ are in state \join during $[t_\slp-T_1-d,t_\slp+2T_1+3d]$ by
assumption, the stetement follows immediately from Statement~(ii), as
\begin{equation*}
t_\slp-(2\vartheta+3)T_1+\frac{T_2}{\vartheta}-4d
\sr{eq:T_2}{>}t_\slp+2T_1+3d.
\end{equation*}
The lemma follows.
\end{proof}

Granted that nodes are not in state \join for sufficiently long, this implies
that nodes will switch to \slp in rough synchrony with others or drop out of the
basic cycle and switch to \rec.
\begin{corollary}\label{coro:clean}
Assume that at time $t_\slp$, a node from $W$ switches to \slp, no node is in
state \join during $[t_\slp-T_1-d,t_\slp+2T_1+4d]$, and also that during
$(t_\slp-\Delta_g,t_\slp)=(t_\slp-(2\vartheta+3)T_1,t_\slp)$ no
node in $W$ is in state \slp. Then at time $t_\slp+2T_1+4d$, any node from $W$
is either in one of the states \slp or \srw and observed in \slp, or it is and
is observed in state \rec.
\end{corollary}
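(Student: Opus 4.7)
The plan is to sharpen \lemmaref{lemma:sleep_one}, whose hypotheses are subsumed by the present ones, using the new assumption on \slp to exclude states \srw and \wake already at $t_\slp+2T_1+3d$. Statement~(i) of the lemma places every node in $W$ in one of \slp, \srw, \wake, or \rec at $t_\slp+2T_1+3d$. Assume for contradiction that $i\in W$ is in \srw or \wake at this instant. By Statement~(ii), $i$ reset its timeout $T_2$ at some $t_a\in(t_\slp-\Delta_g-4d,\,t_\slp+(2-1/\vartheta)T_1+3d)$, so $i$ switched to \acc at $t_a$ and, following the basic cycle, to \slp at some $t_\slp^i\in[t_a+T_1/\vartheta,\,t_a+T_1+d]$. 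Since $i$ has left \slp by $t_\slp+2T_1+3d$, we need $t_\slp^i+(2\vartheta+1)T_1/\vartheta\le t_\slp+2T_1+3d$, which by \inequalityref{eq:T_1} (giving $T_1/\vartheta\ge 4d$) yields $t_\slp^i\le t_\slp-d<t_\slp$.

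I then invoke the new hypothesis that no node in $W$ is in \slp during $(t_\slp-\Delta_g,t_\slp)$. Since $t_\slp^i<t_\slp$ and the node is in \slp throughout $[t_\slp^i,\,t_\slp^i+(2\vartheta+1)T_1/\vartheta]$, disjointness from $(t_\slp-\Delta_g,t_\slp)$ forces $t_\slp^i+(2\vartheta+1)T_1/\vartheta\le t_\slp-\Delta_g$, which together with $t_\slp^i\ge t_a+T_1/\vartheta$ gives $t_a\le t_\slp-\Delta_g-(2\vartheta+2)T_1/\vartheta$; this is strictly less than $t_\slp-\Delta_g-4d$ by \inequalityref{eq:T_1}, contradicting the lower bound on $t_a$ from Statement~(ii). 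Hence at $t_\slp+2T_1+3d$ every node in $W$ is in \slp or \rec. A node in \rec cannot leave it during $[t_\slp+2T_1+3d,\,t_\slp+2T_1+4d]$: transitions to \acc are blocked by \lemmaref{lemma:sleep_one}~(iii) and transitions to \join by the corollary's hypothesis. Since $\tau_{i,i}^{-1}(t_\slp+2T_1+4d)\in(t_\slp+2T_1+3d,\,t_\slp+2T_1+4d)$ and the node is in \rec throughout, it is observed in \rec at $t_\slp+2T_1+4d$.

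For a node in \slp at $t_\slp+2T_1+3d$, the same case distinction on $t_\slp^i$ applies (both $t_\slp^i\in(t_\slp-\Delta_g,t_\slp)$ and $t_\slp^i+(2\vartheta+1)T_1/\vartheta\le t_\slp-\Delta_g$ being ruled out as above), yielding $t_\slp^i\ge t_\slp$. Therefore by $t_\slp+2T_1+4d$ the node has been in \slp for at most $2T_1+4d$ reference time, which is bounded by $(2\vartheta+1)T_1/\vartheta$ via \inequalityref{eq:T_1}, the minimum reference time spent in \slp. Hence the node is still in \slp, or is just switching to \srw in the boundary case; either way the switch time $t_w\ge t_\slp^i+(2\vartheta+1)T_1/\vartheta\ge t_\slp+2T_1+4d$ has not yet propagated across the strictly positive loopback delay, so $S_{i,i}(t_\slp+2T_1+4d)=\slp$, establishing the claim. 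I expect the main delicacy to be the boundary case $T_1/\vartheta=4d$ with $t_\slp^i=t_\slp$, where the \slp-to-\srw transition may coincide with $t_\slp+2T_1+4d$; this is absorbed by the fact that the loopback observes strictly past states, so any such fresh transition is not yet visible at $t_\slp+2T_1+4d$.
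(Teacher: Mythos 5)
Your proof is correct and takes essentially the same route as the paper's: invoke \lemmaref{lemma:sleep_one}, use the hypothesis that no node of $W$ is in \slp during $(t_\slp-\Delta_g,t_\slp)$ to force any node that would already be past (or about to leave) \slp to have reset $T_2$ before $t_\slp-\Delta_g-4d$, contradicting Statement~(ii), and then obtain the \slp/\rec dichotomy from the $(2\vartheta+1)T_1$ timeout together with Statement~(iii) and the \join hypothesis. Your write-up is if anything more explicit than the paper's (boundary case, observation via the strictly positive loopback delay); the only hair to note is that Statement~(iii) formally blocks \rec$\to$\acc only up to $t_a>t_\slp+2T_1+3d$, so covering the last $\le d$ strictly needs the easy extra remark that the first switch to \acc after $t_\slp+T_1+2d$ must come via the basic cycle and hence not before $t_\slp+T_2/\vartheta>t_\slp+2T_1+4d$ --- a point the paper's own proof glosses over as well.
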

\begin{proof}
We apply \lemmaref{lemma:sleep_one} to see that at time $t_\slp+2T_1+3d$, all
nodes are in one of the states \slp, \srw, \wake, or \rec. As nodes remain in
\slp for a timeout of duration $(2\vartheta+1)T_1\geq \vartheta (2T_1+4d)$, the
statement of the corollary follows immediately provided that we can show that
any node that does not switch to state \slp during $[t_\slp,t_\slp+T_1+3d]$ is
not in state \wake at time $t_\slp+T_1+3d$. Consider such a node. If there is a
time from $(t_\slp-\Delta_g,t_\slp+T_1+3d]$ when the node is not in one of the
states \slp, \srw, or \wake, it cannot be in state \wake at time
$t_\slp+2T_1+5d$, since it could not have switched to \slp again in order to get
there. Assume w.l.o.g.\ that the node switches to \slp exactly at time
$t_\slp-\Delta_g$. Thus, it must have previously reset its timeout $T_2$ no
later than
\begin{equation*}
t_\slp-\Delta_g -\frac{T_1}{\vartheta}\sr{eq:T_1}{\leq}t_\slp-\Delta_g-4d.
\end{equation*}
Hence we conclude by \lemmaref{lemma:sleep_one} that the node is in state \rec
at time $t_\slp+2T_1+5d$, finishing the proof.
\end{proof}

\subsection{Resynchronization Points}
In this section, we derive that within linear time, it is very likely that good
resynchronization points occur. As a first step, we infer from
\lemmaref{lemma:sleep_one} that whenever nodes may not enter state \join,
the time windows during which nodes may switch to \slp occur infrequently.

\begin{lemma}\label{lemma:window}
Suppose no node is in state \join during $[t^-,t^+]$. Then the volume of times
$t\in [t^-+T_1+d,t^+]$ satisfying that no node is in state \slp during
$(t-\Delta_g,t)$ is at least
\begin{eqnarray*}
&&\left(\frac{T_2-2\vartheta \Delta_g-(\vartheta-1)T_1-4\vartheta d} 
{T_2-(\vartheta-1)T_1-\vartheta d}\right)(t^+-t^-)\\
&&-(4\Delta_g+T_1+7d).
\end{eqnarray*}
\end{lemma}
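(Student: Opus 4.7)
The plan is to partition $[t^-,t^+]$ using ``cluster events''. Call $t_k\in[t^-,t^+]$ a \emph{cluster start} if some node in $W$ switches to \slp\ at $t_k$ while no node in $W$ has been in \slp\ during $(t_k-\Delta_g,t_k)$, and enumerate the cluster starts in $[t^-,t^+]$ as $t_1<t_2<\cdots<t_m$. Because no node is in \join\ throughout $[t^-,t^+]$, \corollaryref{coro:clean} applies at each $t_k$ (modulo a boundary correction of at most $T_1+d$ near $t^-$, which is exactly why the statement restricts accounting to $[t^-+T_1+d,t^+]$). Hence, by time $t_k+2T_1+4d$ every node in $W$ is in \slp/\srw\ or in \rec, and each node joining the \slp\ activity of cluster $k$ does so within $[t_k,t_k+2T_1+4d]$ and stays in \slp\ for at most $(2\vartheta+1)T_1+d$ real time (the timeout $(2\vartheta+1)T_1$ plus reset-delay slack). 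Consequently, the set of ``bad'' times $t$ (those for which some node was in \slp\ during $(t-\Delta_g,t)$) attributable to cluster $k$ is confined to an interval $I_k$ of length at most $2\Delta_g+3d$ after tight accounting.

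The next step is to establish the minimum-gap bound $t_{k+1}-t_k\geq G:=(T_2-(\vartheta-1)T_1-\vartheta d)/\vartheta$. Any node $j$ switching to \slp\ at $t_{k+1}$ must have just been in \acc\ with $T_1$ expired and $\geq n-f$ observations of \acc. Tracking $j$: either $j$ belonged to cluster $k$, so the timeout $(T_2,\acc)$ was last reset at $j$'s prior entry to \acc\ at some real time $t^*\geq t_k-T_1-d$ and must re-expire before $j$ can leave \wake, yielding $t_{k+1}\geq t^*+(T_2+T_1)/\vartheta\geq t_k+G$; or $j$ was in \rec\ at time $t_k+2T_1+4d$, in which case the $\geq n-f$ \acc-observations required by the \rec$\to$\acc\ guard can only be met once $n-2f\geq f+1$ non-faulty nodes return to \acc\ through the full cycle, again bottlenecked by $(T_2,\acc)$, giving an even larger gap.

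Assembling the pieces, the good time in each interval $[t_k,t_{k+1}]$ is at least $(t_{k+1}-t_k)-|I_k|$, and summing with $m\leq 1+(t_m-t_1)/G$ yields a good-time lower bound of $(t^+-t^-)(1-(2\Delta_g+3d)/G)-(2\Delta_g+3d)$ in $[t_1,t_m]$; a direct computation confirms that $1-(2\Delta_g+3d)/G$ equals the coefficient of $(t^+-t^-)$ claimed in the lemma. The boundary regions $[t^-+T_1+d,t_1]$ (which may still contain the tail of a cluster starting just before $t^-$) and $[t_m,t^+]$ (an incomplete tail of size at most $|I_m|$) are absorbed into the additive slack $4\Delta_g+T_1+7d$. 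I expect the main obstacle to be the minimum-gap bound: ensuring that no shortcut through \rec\ lets cluster $k+1$ occur meaningfully earlier than $t_k+G$, which critically relies on $n-2f\geq f+1$ and on the $(T_2,\acc)$-gated return of cluster-$k$ nodes to \acc, with a secondary difficulty lying in the careful accounting needed to push $|I_k|$ all the way down to $2\Delta_g+3d$ rather than a looser $2\Delta_g+O(d)$.
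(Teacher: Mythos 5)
Your proposal is correct and follows essentially the same route as the paper's proof: the paper also proceeds by induction over the successive ``first'' switches to \slp, uses \lemmaref{lemma:sleep_one} and \corollaryref{coro:clean} to confine each sleep episode's bad times to an interval of length $2\Delta_g+3d$, derives the same minimum gap $T_2/\vartheta-(1-1/\vartheta)T_1-d$ from the $(T_2,\acc)$ gating (with the \rec$\to$\acc shortcut blocked because $n-f$ direct \acc\ observations require $n-2f\geq f+1$ nodes to complete the cycle), and bounds the number of such episodes to obtain exactly your coefficient and additive slack. The only differences are presentational (your ``cluster starts'' versus the paper's inductively defined times $t_i$), not substantive.
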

\begin{proof}
Denote by $t_0$ the infimum of times from $[t^-+T_1+d,t^+]$ when a node switches
to \slp. Thus, by definition any time $t\in [t^-+\Delta_g+T_1+d,t_0]$ satisfies
that no node is in state \slp during $(t-\Delta_g,t)$. We proceed by induction
over increasing times $t_i\in(t_0,t^+]$, $i\in \{1,\ldots,i_{\max}\}$. The
induction halts at index $i_{\max}\in \N$ iff $t_{i_{\max}}>
t^+-T_2/\vartheta+(1-1/\vartheta)T_1+d$. We claim that, for each $i$, the
volume of times $t\in [t^-+T_1+d,t_i]$ such that no node is in state \slp during
$(t,t-\Delta_g)$ is at least
\begin{equation}\label{eq:volume}
t_i-t^--(T_1+d)-i(2\Delta_g+3d)
\end{equation}
and that 
\begin{eqnarray}
t_i&\geq & t^--(2\vartheta+1+1/\vartheta)T_1
-2d\nonumber\\
&&+i\left(\frac{T_2}{\vartheta}-\left(1-\frac{1}{\vartheta}\right)T_1-d\right).
\label{eq:t_i_growth}
\end{eqnarray}
In fact, we will show these bounds by establishing that no node is in state
\slp during
\begin{equation}\label{eq:no_sleep}
(t_{i-1}+(2\vartheta+3)T_1+3d,t_i)=(t_{i-1}+\Delta_g+3d,t_i)
\end{equation}
and that
\begin{equation}\label{eq:t_i_growth_weak}
t_i\geq t_{i-1}+\frac{T_2}{\vartheta}
-\Delta_g-4d
\sr{eq:T_2}{\geq} t_{i-1}+2\Delta_g+3d
\end{equation}
for all $i\in \{1,\ldots,i_{\max}\}$.

We first establish these bounds for $t_1$. By \lemmaref{lemma:sleep_one}, every
node not switching to state \rec until time $t_0+T_1+3d$ resets $T_2$ at some
time from $(t_0-\Delta_g-4d,t_0+3d)$ and is in one of the states
\slp, \srw, or \wake at time $t_0+T_1+3d$. Hence, such
nodes do not switch to state \rdy and subsequently to \prop, \acc, and \slp
again until $t_0+T_2/\vartheta-\Delta_g-4d\leq t^+$, giving
\begin{equation*}
t_1\geq t_0+\frac{T_2}{\vartheta}
-\Delta_g-4d.
\end{equation*}
Moreover, the lemma implies that no node is in state \slp during
$[t_0+(2\vartheta+3)T_1+3d,t_1)$, as any node in state \slp at time
$t_0+2T_1+3d$ will leave after a timeout of $(2\vartheta+1)T_1$ expires.
Hence, the volume of times $t\in [t^-+T_1+d,t_1]$ such that no node is in state
\slp during $(t,t-\Delta_g)$ is at least
\begin{equation*}
t_0-(t^-+T_1+d+\Delta_g)+t_1-(t_0+\Delta_g+3d),
\end{equation*}
showing the claim for $i=1$.

We now perform the induction step from $i<i_{\max}$ to $i+1$. By
\eqref{eq:no_sleep}, no node is in state \slp during
\begin{equation*}
(t_{i-1}+\Delta_g+3d,t_i)
\sr{eq:t_i_growth_weak}{\supseteq}
(t_i-\Delta_g,t_i).
\end{equation*}
Hence we can apply \corollaryref{coro:clean} to see that nodes not
observing themselves in state \slp at time $t_i+2T_1+4d$ switched to state \rec.
Therefore, nodes that continue to execute the basic cycle must have performed
their most recent reset of timeout $T_2$ at or after time $t_i-T_1-d$. Thus,
such nodes do not switch to state \rdy and subsequently to \prop, \acc, and \slp
again until $t_i+T_2/\vartheta-(1-1/\vartheta)T_1-d\leq t^+$, giving
\begin{equation*}
t_{i+1}\geq t_i+\frac{T_2}{\vartheta}-\left(1-\frac{1}{\vartheta}\right)T_1-d.
\end{equation*}
Moreover, no node is in state \slp during $[t_i+(2\vartheta+3)T_1+3d,t_{i+1})$.
These two statements show \inequalityref{eq:no_sleep} and
\inequalityref{eq:t_i_growth_weak} for $i+1$, and by means of the induction
hypothesis directly imply \inequalityref{eq:volume} and
\inequalityref{eq:t_i_growth} for $i+1$ as well, i.e., the induction succeeds.

From \inequalityref{eq:t_i_growth}, we have that
\begin{eqnarray}
i_{\max}&\leq& \frac{t^+-t^-+(2\vartheta+1+1/\vartheta)T_1+2d}
{T_2/\vartheta-(1-1/\vartheta)T_1-d}\nonumber\\
&\sr{eq:T_2}{<}&\frac{t^+-t^-}{T_2/\vartheta-(1-1/\vartheta)T_1-d}+1.
\label{eq:i_max}
\end{eqnarray}
Observe that the same reasoning as above shows that no node switches to \slp
during $[t_{i_{\max}}+\Delta_g+3d,t^+]$ since $t_{i_{\max}}\geq
t^+-(T_2/\vartheta-(1-1/\vartheta)T_1-d)$. Thus, inserting $i=i_{\max}$ into
\inequalityref{eq:volume}, we infer that the volume of times $t\in
[t^-+T_1+d,t^+]$ such that no node is in state \slp during $(t,t-\Delta_g)$ is
at least
\begin{eqnarray*}
&&t^+-t^--(T_1+d)-(i_{\max}+1)(2\Delta_g+3d)\\
&\sr{eq:i_max}{>}&
\left(\frac{T_2-2\vartheta \Delta_g-(\vartheta-1)T_1-4\vartheta d} 
{T_2-(\vartheta-1)T_1-\vartheta d}\right)(t^+-t^-)\\
&&-(4\Delta_g+T_1+7d),
\end{eqnarray*}
concluding the proof.
\end{proof}

We are now ready to advance to proving that good resynchronization points are
likely to occur within bounded time, no matter what the strategy of the
adversary is. To this end, we first establish that in any execution, at most of
the times a node switching to state \init\ will result in a good
resynchronization point. This is formalized by the following definition.

\begin{definition}[Good Times]
Given an execution $\cal E$ of the system, denote by ${\cal E}'$ any
execution satisfying that ${\cal E}|_{[0,t)}'={\cal E}|_{[0,t)}$,
where at time $t$ a node $i\in W$ switches to state \init\ in
${\cal E}'$.
Time $t$ is \emph{good in $\cal E$ with respect to $W$} provided that
for any such ${\cal E}'$ it holds that $t$ is a
good $W$-resynchronization point in ${\cal E}'$.
\end{definition}
The previous statement thus boils down to showing that in any execution, the
majority of the times is good.
\begin{lemma}\label{lemma:good}
Given any execution $\cal E$ and any time interval $[t^-,t^+]$, the volume of
good times in $\cal E$ during $[t^-,t^+]$ is at least
\begin{equation*}
\lambda^2(t^+-t^-)-\frac{11(1-\lambda)R_2}{10\vartheta}.
\end{equation*}
\end{lemma}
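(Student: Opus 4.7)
The plan is to upper-bound the measure of ``not good'' times $t \in [t^-, t^+]$ by separately handling the three conditions whose failure prevents $t$ from being good: (i) some node in $W$ switches to \slp\ during $(t-\Delta_g,t)$; (ii) some node is in state \join\ during $[t-T_1-d,t+4d)$; or (iii) the state of the resynchronization machines at time $t$ is such that an \init\ switch by some $i\in W$ at time $t$ would fail to propagate all of $W$ into state \srr\ during $(t,t+2d)$ --- concretely, some node in $W$ is in state \srr, \res, or \supp~$i$, or has timeout $(R_2,\supp~i)$ unexpired, for some $i\in W$.

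My first move is to combine (i) and (ii). I would partition $[t^-,t^+]$ into maximal sub-intervals on which no node from $W$ is in state \join, and apply \lemmaref{lemma:window} on each such sub-interval to conclude that its (i)-free volume is at least $\lambda$ times the sub-interval length, minus an additive $\BO(\Delta_g+T_1+d)$ boundary term (using \inequalityref{eq:lambda} to identify the ratio produced by \lemmaref{lemma:window} with $\lambda$). I would then separately bound the total volume of \join-active periods: nodes only reach \join\ via the gated transition from \rec\ involving timeouts $T_6,T_7$ and randomized \init\ switches, whose minimum real-time spacing is $R_2+3d$ by \eqref{eq:R_3}; the scaling of $R_2$ in \eqref{eq:R_2} then ensures this volume is negligible relative to $(t^+-t^-)$, so the combined (i)+(ii)-free volume retains a $\lambda$-fraction.

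My second move addresses (iii). The key observation is that the ``dirtiness'' of the resynchronization state at time $t$ is caused by \init\ switches in the recent past: each such switch by node $i\in V$ causes, at nodes in $W$, the timeout $(R_2,\supp~i)$ to be unexpired for at most $R_2+\BO(d)$ real time, and may leave a node in state \supp~$i$, \srr, or \res\ for an additional $\BO(R_1+d)$ real time. Since $R_3$'s minimum real-time value is $R_2+3d$ by \eqref{eq:R_3}, each node triggers at most $(t^+-t^-)/(R_2+3d)+1$ such dirty windows in $[t^-,t^+]$. Summing over $i\in W$ and invoking \eqref{eq:R_2}, whose right-hand side scales $R_2$ as $\frac{2\vartheta(n-f)\,\theterm}{1-\lambda}$, I would obtain that the (iii)-bad volume is at most a $(1-\lambda)$-fraction of $(t^+-t^-)$ plus a boundary term of order $(1-\lambda)R_2/\vartheta$. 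Combining this with the first step and using that the events cutting the two $\lambda$-fractions act on essentially disjoint time scales ((i)+(ii) on scale $T_2$; (iii) on scale $R_2 \gg T_2$), the surviving good-time volume is at least $\lambda^2(t^+-t^-)$ up to the combined boundary terms.

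The main obstacle I anticipate is obtaining precisely the stated additive constant $11(1-\lambda)R_2/(10\vartheta)$ rather than a larger multiple of $(1-\lambda)R_2$. This requires careful bookkeeping of the $8(1-\lambda)R_2$ spread of $R_3$ in \eqref{eq:R_3} (which controls how infrequently a single node initiates resynchronization), the exact scaling factor $2\vartheta$ in the denominator of \eqref{eq:R_2}, and the boundary effects at $t^-$ and $t^+$ where residual effects of \init\ switches near the endpoints must all be folded into a single additive $\BO((1-\lambda)R_2/\vartheta)$ term.
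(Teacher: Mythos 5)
Your overall strategy---cover the ``dirty'' windows created by \init\ events, use \lemmaref{lemma:window} for the \slp-condition inside the clean stretches, and multiply two near-$\lambda$ fractions coming from \eqref{eq:lambda} and \eqref{eq:R_2}---is the same skeleton as the paper's proof. But there is a genuine gap in how you count the dirty windows: you only sum over initiators $i\in W$ and invoke the minimum spacing $R_2+3d$ of $R_3$ from \eqref{eq:R_3}. That spacing constrains \emph{non-faulty} nodes only. Faulty nodes, and faulty channels into $W$, can raise \init\ signals arbitrarily often, and each such signal can drive $f+1$ or more nodes of $W$ through \supp\ states into \srr\ and \res\ (and thence into \pass/\act, re-enabling \join\ via $T_7$). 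Your accounting, as written, simply does not see these events, yet they are exactly what makes the lemma nontrivial. The paper closes this hole with a dedicated argument: any switch of a $W$-node to \srr\ at a time $t$ with no $W$-initiator in the preceding $(8\vartheta+6)d$ requires at least $n-2f\geq f+1$ nodes of $W$ to have freshly reset a timeout $(R_2,\supp~i)$ with $i\in V\setminus W$ during $(t-(8\vartheta+4)d,t)$; these witness events are disjoint across such times, and their total number over the interval is at most $|V\setminus W|\,|W|\lceil(\cdot)/(R_2/\vartheta)\rceil<(f+1)N$ by \eqref{eq:N_bound}. This is precisely why $R_2$ must scale with $n-f$ in \eqref{eq:R_2}: the $\BO(n^2)$ adversarial channels each get one ``blocking token'' per $R_2/\vartheta$ time, amortizing to at most $N-1$ extra dirty windows. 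Without this step (or an equivalent), your bound fails: a single Byzantine node spamming \init\ is enough to defeat a count based on $R_3$-spacing alone.

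Two smaller issues. First, your separate bound on the total volume of \join-active time is not justified as stated and is not how the paper proceeds: the paper never bounds the global \join\ volume, but instead observes that inside each stretch with no recent \srr\ switch, all $W$-nodes sit in \dorm\ (after $R_1+3d$), and a node observing itself in \dorm\ cannot remain in \join; so the \join-condition is absorbed into the same \srr-based cover as the ``dirtiness'' condition, rather than handled on a different time scale. Second, your condition~(iii) is too strong: declaring a time bad whenever some $W$-node is in a \supp-state or has some $(R_2,\supp~i)$, $i\in W$, unexpired would make essentially all times after a resynchronization point bad (these timeouts stay unexpired for order $R_2$ time); the sufficient condition actually needed is only that no $W$-node is in, or observed in, \srr\ or \res\ at time $t$---the expiration of the initiator's own $(R_2,\supp~i)$ is automatic from \eqref{eq:R_3}, as the paper notes. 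With the faulty-initiator counting argument added and condition~(iii) weakened accordingly, your outline matches the paper's proof; without it, the claimed $\lambda^2(t^+-t^-)-11(1-\lambda)R_2/(10\vartheta)$ bound is not established.
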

\begin{proof}
Assume w.l.o.g.\ that $|W|=n-f$ (otherwise consider a subset of
size $n-f$) and abbreviate
\begin{eqnarray}
\!\!\!\!\!\!\!&&\!\!\!N\nonumber\\
\!\!\!\!\!\!\!&:=&\!\!\!
\left(\frac{\vartheta(t^+-t^-)}{R_2}+\frac{11}{10}\right)(n-f)\notag\\ 
\!\!\!\!\!\!\!&\geq &\!\!\!
\left\lceil\frac{\vartheta(t^+-t^-)+R_2/10}{R_2}\right\rceil(n-f)\notag\\
\!\!\!\!\!\!\!&\sr{eq:R_2}{\geq}&\!\!\!
\left\lceil\frac{\vartheta(t^+-t^-)+\vartheta(R_1+4\Delta_g+T_1+10d)/(5(1-\lambda))}
{R_2}\right\rceil\nonumber\\
\!\!\!\!\!\!\!&&\!\!\!\cdot(n-f)\notag\\
\!\!\!\!\!\!\!&\sr{eq:def_lambda}{\geq}&\!\!\!
\left\lceil\frac{\vartheta(t^+-t^-+R_1+4\Delta_g+T_1+10d)}
{R_2}\right\rceil(n-f).\label{eq:N_bound}
\end{eqnarray}

The proof is in two steps: First we construct a measurable subset of
$[t^-,t^+]$ that comprises good times only.
In a second step a lower bound on the volume of this set is derived.

\medskip

\bfno{Constructing the set:} Consider an arbitrary time $t \in [t^-,t^+]$, and
assume a node $i\in W$ switches to state \init\ at time $t$. When it does so,
its timeout $R_3$ expires. By \lemmaref{lemma:counters} all timeouts of node $i$
that expire at times within $[t^-,t^+]$, have been reset at least once until
time $t^-$. Let $t_{R3}$ be the maximum time not later than $t$ when $R_3$ was
reset. Due to the distribution of $R_3$ we know that
\begin{equation*}
t_{R3} \sr{eq:R_3}{\leq} t- (R_2+3d).
\end{equation*}
Thus, node $i$ is not in state \init\ during time $[t-(R_2+2d),t)$,
and no node $j\in W$ observes $i$ in state \init\ during time $[t-(R_2+d),t)$.
Thereby any node $j$'s, $j\in W$, timeout $(R_2,\supp~i)$
corresponding to node $i$ is expired at time~$t$.

\medskip

We claim that the condition that no node from $W$ is in or observed in one of
the states \res\ or \srr\ at time $t$ is sufficient for $t$ being a
$W$-resynchronization point. To see this, assume that the condition is
satisfied. Thus all nodes $j\in W$ are in states \none\ or $\supp~k$ for some
$k\in \{1,\ldots,n\}$ at time $t$. By the algorithm, they all will switch to
state $\supp~i$ or state \srr\ during $(t,t+d)$.
It might happen that they subsequently switch to another state \supp~$k'$ for
some $k'\in V$, but all of them will be in one of the states with signal \supp\
during $(t+d,t+2d]$. Consequently, all nodes will observe at least $n-f$ nodes
in state \supp\ during $(t',t+2d)$ for some time $t'<t+2d$. Hence, those nodes
in $W$ that were still in state \supp~$i$ (or \supp~$k'$ for some $k'$) at time
$t+d$ switch to state \srr\ before time $t+2d$, i.e., $t$ is a
$W$-resynchronization point.

\medskip

We proceed by analyzing under which conditions $t$ is a good
$W$-resynchronization point. Recall that in order for $t$ to be good, it has to
hold that no node from $W$ switches to state \slp\ during $(t-\Delta_g,t)$ or is
in state \join\ during $(t-T_1-d,t+4d)$.

\medskip

We begin by characterizing subsets of good times within $(t_r,t_r') \subset
[t^-,t^+]$, where $t_r$ and $t_r'$ are times such that during $(t_r,t_r')$ no
node from $W$ switches to state \srr. Due to timeout
\begin{equation*}
R_1 \sr{eq:R_1}{\geq} (4\vartheta+2)d,
\end{equation*}
we know that during $(t_r+R_1+2d,t_r')$, no node from $W$ will be in, or be
observed in, states \srr\ or \res. Thus, if a node from $W$ switches to \init\
at a time within $(t_r+R_1+2d,t_r')$, it is a $W$-resynchronization point.
Further, all nodes in $W$ will be in state \dorm\ during $(t_r+R_1+2d,t_r'+4d)$.
Thus all nodes in $W$ will be observed to be in state \dorm\ during
$(t_r+R_1+3d,t_r'+4d)$, implying that they are not in state \join\ during
$(t_r+R_1+3d,t_r'+4d)$. In particular, any time $t\in (t_r+R_1+T_1+4d,t_r')$
satisfies that no node in $W$ is in state \join\ during $(t-T_1-d,t+4d)$.
Applying \corollaryref{lemma:window}, we infer that the total volume of times
from $(t_r,t_r')$ that is good is at least
\begin{eqnarray}
&&\left(\frac{T_2-2\vartheta \Delta_g-(\vartheta-1)T_1-4\vartheta d} 
{T_2-(\vartheta-1)T_1-\vartheta
d}\right)(t^+-t^-)\nonumber\\
&&-(4\Delta_g+T_1+10d).\label{eq:good}
\end{eqnarray}

In other words, up to a constant loss in each interval $(t_r,t_r')$, a
constant fraction of the times are good.

\medskip

\bfno{Volume of the set:} In order to infer a lower bound on the volume of good
times during $[t^-,t^+]$, we subtract from $t^+-t^-$ the volume of some
intervals during which we cannot exclude that a node switches to \srr, increased
by the constant term $R_1+4\Delta_g+T_1+10d$ from \inequalityref{eq:good}. The
inequality then yields that at least a fraction of
$(T_2-2\vartheta \Delta_g-(\vartheta-1)T_1-4\vartheta d)/
(T_2-(\vartheta-1)T_1-\vartheta d)$ of the remaining volume of times is good.
Note that we also need to account for the fact that nodes may already be in
state \srr at time $t^-$, which we account for by also covering events prior to
$t^-$ when nodes switch to \srr. Formally, we define
\begin{equation*}
\bar{G} =\!\!\! \bigcup_{\substack{t_r\in [t^--(R_1+4\Delta_g+T_1+10d),t^+]\\
\exists i\in W:\,i\text{ switches to }\srr\ \text{at }t_r}}
\!\!\!\!\![t_r,t_r+R_1+4\Delta_g+T_1+10d]
\end{equation*}
and strive for a lower bound on the volume of $[t^-,t^+] \setminus \bar{G}$. In
order to lower bound the good times in $[t^-,t^+]$, it is thus sufficient to
cover all times when a node switches to \srr during
$[t^--(R_1+4\Delta_g+T_1+10d),t^+]$ by $2N-1$ intervals of volume at most ${\cal V}$
and then infer a lower bound of $t^+-t^--2N({\cal V}+R_1+4\Delta_g+T_1+10d)$ on the
volume of $[t^-,t^+]\setminus \bar{G}$. The remainder of the proof hence is
concerned with deriving such a cover of the times when nodes may switch to \srr
during $[t^--(R_1+4\Delta_g+T_1+10d),t^+]$.

\medskip

Observe that any node in $W$ does not switch
to state \init\ more than
\begin{eqnarray}
&&\left\lceil\frac{t^+-t^-+R_1+4\Delta_g+T_1+10d}{R_3}\right\rceil\notag\\
&\sr{eq:R_3}{\leq}&
\left\lceil\frac{t^+-t^-+R_1+4\Delta_g+T_1+10d}{R_2+d}\right\rceil\notag\\
&\sr{eq:N_bound}{\leq}& \frac{N}{n-f}\label{ieq:frac1}
\end{eqnarray}
times during $[t^--(R_1+4\Delta_g+T_1+10d),t^+]$.

Now consider the case that a node in $W$ switches to state \srr\ at a time $t$
satisfying that no node in $W$ switched to state \init\ during
$(t-(8\vartheta+6)d,t)$. This necessitates that this node observes $n-f$ of its
channels in state \supp\ during $(t-(2\vartheta+1)d,t)$, at least $n-2f\geq f+1$
of which originate from nodes in $W$. As no node from $W$ switched to \init\
during $(t-(8\vartheta+6)d,t)$, every node that has not observed a node $i\in
V\setminus W$ in state \init\ at a time from $(t-(8\vartheta+4)d,t)$ when
$(R_2,\supp~i)$ is expired must be in a state whose signal is \none\ during
$(t-(2\vartheta+2)d,t)$ due to timeouts. Therefore its outgoing channels are not
in state \supp\ during $(t-(2\vartheta+1)d,t)$. By means of contradiction, it
thus follows that for each node $j$ of the at least $f+1$ nodes (which are all
from $W$), there exists a node $i\in V\setminus W$ such that node $j$ resets
timeout $(R_2,\supp~i)$ during the time interval $(t-(8\vartheta+4)d,t)$.

The same reasoning applies to any time $t'\not \in (t-(8\vartheta+6)d,t)$
satisfying that some node in $W$ switches to state \srr\ at time $t'$ and no
node in $W$ switched to state \init\ during $(t'-(8\vartheta+6)d,t')$. Note that
the set of the respective at least $f+1$ events (corresponding to the at least
$f+1$ nodes from $W$) where timeouts $(R_2,\supp~i)$ with $i\in V\setminus W$
are reset and the set of the events corresponding to $t$ are disjoint. However,
the total number of events where such a timeout can be reset during
$[t^--(R_1+4\Delta_g+T_1+10d),t^+]$ is upper bounded~by
\begin{eqnarray}
&&|V\setminus
W||W|\left\lceil\frac{t^+-t^-+R_1+4\Delta_g+T_1+10d}{R_2/\vartheta} 
\right\rceil\notag\\
&\sr{eq:N_bound}{<}&(f+1)N,\label{ieq:frac2}
\end{eqnarray}
i.e., the total number of channels from nodes not in $W$ ($|V\setminus
W|$ many) to nodes in $W$ multiplied by the number of times an
associated timeout can expire at a receiving node in $W$ during
$[t^--(R_1+4\Delta_g+T_1+10d),t^+]$.

\medskip

With the help of inequalities~\eqref{ieq:frac1} and~\eqref{ieq:frac2}, we can
show that $\bar{G}$ can be covered by less than $2N$ intervals of size
$(R_1+4\Delta_g+T_1+10d)+(8\vartheta+6)d$ each. By \inequalityref{ieq:frac1},
there are no more than $N$ times $t\in [t^--(R_1+4\Delta_g+T_1+10d),t^+]$ when one
of the $|W|=n-f$ many non-faulty nodes switches to \init\ and thus may cause
others to switch to state \srr\ at times in $[t,t+(8\vartheta+6)d]$. Similarly,
\inequalityref{ieq:frac2} shows that the channels from $V\setminus W$ to $W$ may
cause at most $N-1$ such times $t\in [t^--(R_1+4\Delta_g+T_1+10d),t^+]$, since any
such time requires the existence of at least $f+1$ events where timeouts
$(R_2,\supp~i)$, $i\in V\setminus W$, are reset at nodes in $W$, and the
respective events are disjoint. Thus, all times $t_r\in
[t^--(R_1+4\Delta_g+T_1+10d),t^+]$ when some node $i\in W$ switches to \srr\ are
covered by at most $2N-1$ intervals of length $(8\vartheta+6)d$.

This results in a cover $\bar{G}' \supseteq \bar{G}$ consisting of at
most $2N-1$ intervals that satisfies
\begin{eqnarray*}
\operatorname{vol}\left( \bar{G} \right) &\leq & \operatorname{vol}\left(
\bar{G}' \right) \\
&<& 2N \theterm.
\end{eqnarray*}

As argued previously, summing over the at most $2N$ intervals that remain in
$[t^-,t^+] \setminus \bar{G}'$ and using \inequalityref{eq:good}, it follows
that the volume of good times during $[t^-,t^+]$ is at least
\begin{eqnarray*}
&&\frac{T_2-2\vartheta \Delta_g-(\vartheta-1)T_1-4\vartheta d}
{T_2-(\vartheta-1)T_1-\vartheta d}\\
&&(t^+-t^--2N\theterm\\
&\sr{eq:lambda}{\geq}&\lambda(t^+-t^--2N\theterm)\\
&=& \lambda \left(t^+-t^--2\left(\frac{\vartheta(t^+-t^-)}{R_2}
+\frac{11}{10}\right)(n-f)\right.\\
&&\left.\cdot\theterm\right)\\
&=& \left(1-\frac{2\vartheta\theterm(n-f)}{R_2}\right)\\
&&\cdot\lambda(t^+-t^-)\\
& &
-\frac{11\lambda\theterm (n-f)}{5}\\
&\sr{eq:R_2}{\geq}&\lambda^2(t^+-t^-)
-\frac{11(1-\lambda)R_2}{10\vartheta},
\end{eqnarray*}
as claimed. The lemma follows.
\end{proof}

We are now in the position to prove our second main theorem, which states that
a good resynchronization point occurs within $\BO(R_2)$ time with overwhelming
probability.

\begin{theorem}\label{theorem:resync}
Denote by $\hat{E}_3:=\vartheta (R_2+3d)+8(1-\lambda)R_2+d$ the maximal value
the distribution $R_3$ can attain plus the at most $d$ time until $R_3$ is reset
whenever it expires.
For any $k\in \N$ and any time $t$, with probability at least
$1-(1/2)^{k(n-f)}$ there will be a good $W$-resynchronization
point during $[t,t+(k+1)\hat{E}_3]$.
\end{theorem}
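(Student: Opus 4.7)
The plan is to partition $[t,t+(k+1)\hat{E}_3]$ into $k+1$ consecutive sub-intervals $I_0,\ldots,I_k$ of length $\hat{E}_3$ each, and to use that $\hat{E}_3$ is defined precisely as the maximum real time between two consecutive switches to \init of any node in $W$ (namely $R_3$'s maximum local-time value plus the at most $d$-time needed to reset it). Consequently, every node $i\in W$ switches to \init at least once during every $I_j$, producing $(k+1)(n-f)$ candidate switches to \init in $[t,t+(k+1)\hat{E}_3]$. By the definition of good times, any one of these candidates that lands on a good time yields a good $W$-resynchronization point; hence it suffices to bound the probability that \emph{all} candidates miss good times.

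For a single pair $(i,j)$, I would condition on the execution history up through the most recent reset of $R_3$ at node $i$ preceding the switch in $I_j$. By the randomized-timeout definition and \eqref{eq:R_3}, the switch time is then uniform in local time over an interval of length $8(1-\lambda)R_2$, and after accounting for clock rates in $[1,\vartheta]$ the real-time support has length in $[8(1-\lambda)R_2/\vartheta,\,8(1-\lambda)R_2]$ with real-time density in $[1/(8(1-\lambda)R_2),\,\vartheta/(8(1-\lambda)R_2)]$. Applying \lemmaref{lemma:good} pathwise to this support, combined with the density bound, yields
\[
\Pr[\text{switch at a good time}\mid\text{history}]\;\geq\;\lambda^2-\tfrac{11}{80}\;>\;\tfrac{1}{2},
\]
where the final inequality uses $\lambda\geq 4/5$ from \eqref{eq:def_lambda}.

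To conclude, I would chain these per-switch bounds. Randomized timeouts at distinct nodes are independent by the model, and consecutive expirations at a single node are conditionally independent given its past execution. Iterating the conditional bound over all $(k+1)(n-f)$ candidate switches (ordered, say, by their $R_3$-reset times) produces
\[
\Pr[\text{no candidate lands on a good time}]\;\leq\;(1/2)^{(k+1)(n-f)}\;\leq\;(1/2)^{k(n-f)},
\]
which combined with the first paragraph yields the theorem.

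The main obstacle I foresee is the density-versus-support calculation in the middle step: \lemmaref{lemma:good} carries an additive constant of order $(1-\lambda)R_2/\vartheta$, and on the short real-time support that arises when a node's clock runs near rate $\vartheta$, this constant must not dominate the $\lambda^2$-fraction of good volume. The calibration $\lambda=\sqrt{(25\vartheta-9)/(25\vartheta)}$ from \eqref{eq:def_lambda} is exactly what ensures $\lambda^2>\tfrac{51}{80}$ and hence $\lambda^2-\tfrac{11}{80}>\tfrac{1}{2}$, allowing the density bound to go through in the worst case. A secondary subtlety is that the good set depends on the execution, but since goodness of a time $t$ is defined in terms of the execution strictly before $t$ (with the universal quantification over extensions baked into the definition), conditioning on the pre-switch history makes \lemmaref{lemma:good}'s pathwise estimate translate cleanly into the required conditional probability lower bound.
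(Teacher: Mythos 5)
Your overall route is the paper's own: treat each node's successive switches to \init as (nearly) fair coins, using the fresh uniform draw of $R_3$ after each reset together with \lemmaref{lemma:good} and a density argument, and then chain the conditional bounds over all coins. However, your accounting of usable coins has a genuine gap. You count one candidate per node per sub-interval, i.e.\ $(k+1)(n-f)$ flips, including for each node the \emph{first} switch to \init after $t$. That first switch cannot be given a conditional success probability of $1/2$: the most recent reset of $R_3$ preceding it may lie arbitrarily far before $t$ (it may even stem from the arbitrary initial state), so conditioned on the information available at time $t$ the remaining support of the expiration time can be arbitrarily short---in the extreme the expiration time is essentially known to the adversary---and the additive loss $11(1-\lambda)R_2/(10\vartheta)$ in \lemmaref{lemma:good} then swamps the good fraction. (Relatedly, conditioning on ``the switch that lands in $I_j$'' introduces a selection bias, since whether a given expiration is that switch depends on the draw itself; the clean indexing is by consecutive expirations.) This is exactly why the paper discards the first expiration $t_0$ at each node and uses only the subsequent $k$ expirations, whose resets occur inside the window and after the conditioning point, and why the exponent in the theorem is $k(n-f)$ rather than $(k+1)(n-f)$. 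Since you only need $(1/2)^{k(n-f)}$, dropping the first candidate per node repairs this, but as written the claimed $(1/2)^{(k+1)(n-f)}$ is not established.

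A second problem is the per-flip estimate itself. The bound $\Pr[\text{good}]\geq \lambda^2-\tfrac{11}{80}$ does not follow from the computation you sketch: if you lower-bound the success probability by (good volume) times (minimum real-time density $1/(8(1-\lambda)R_2)$), the real-time support may be as short as $8(1-\lambda)R_2/\vartheta$, and you only obtain roughly $(\lambda^2-\tfrac{11}{80})/\vartheta$, which is below $\tfrac12$ already for moderate $\vartheta$ (e.g.\ $\vartheta\geq 2$), while the protocol is meant to tolerate arbitrary drift. The correct organization---the one in the paper---is to upper-bound the \emph{miss} probability by (non-good volume) times (maximum density $\vartheta/(8(1-\lambda)R_2)$), which gives at most $\vartheta(1-\lambda^2)+\tfrac{11}{80}=\tfrac{9}{25}+\tfrac{11}{80}<\tfrac12$; the calibration \eqref{eq:def_lambda} is precisely what makes $\vartheta(1-\lambda^2)$ independent of $\vartheta$, so the short-support worst case you worry about is resolved by this complement bound, not by $\lambda^2-\tfrac{11}{80}>\tfrac12$. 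Finally, ``randomized timeouts at distinct nodes are independent'' is not literally part of the model; chaining interleaved flips at different nodes requires the conditioning argument the paper makes explicit (goodness of a time is determined by the execution up to that time, and each expiration is unpredictable given that history, formalized via the adversarially chosen extensions ${\cal E}'$), though your conditional-chaining intent is compatible with it.
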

\begin{proof}
Assume w.l.o.g.\ that $|W|=n-f$ (otherwise consider a subset of size
$n-f$).
Fix some node $i\in W$ and denote by $t_0$ the infimum of times from
$[t,t+(k+1)\hat{E}_3]$ when node $i$ switches to \init.
We have that $t_0< t+\hat{E}_3$.
By induction, it follows that node~$i$ will switch to state \init\ at
least another $k$ times during $[t,t+(k+1)\hat{E}_3]$ at the
times $t_1<t_2<\ldots<t_k$.
We claim that each such time $t_j$, $j\in \{1,..,k\}$, has an
independently by $1/2$ lower bounded probability of being good
and therefore being a good $W$-resynchronization point.

We prove this by induction on $j$: As induction hypothesis, suppose
for some $j\in \{1,\ldots,k-1\}$, we showed the statement for
$j'\in \{1,\ldots,j-1\}$ and the execution of the system is fixed
until time $t_{j-1}$, i.e., ${\cal E}|_{[0,t_{j-1}]}$ is given.
Now consider the set of executions that are extensions of ${\cal
E}|_{[0,t_{j-1}]}$ and have the same clock functions as $\cal E$.
For each such execution ${\cal E}'$ it holds that ${\cal
E}'|_{[0,t_{j-1}]}={\cal E}|_{[0,t_{j-1}]}$, and all nodes'
clocks make progress in ${\cal E}'$ as in ${\cal E}$.
Clearly each such ${\cal E}'$ has its own time $t_j <
t+(j+1)\hat{E}_3$ when $R_3$ expires next after $t_{j-1}$ at node
$i$, and $i$ switches to \init.
We next characterize the distribution of the times $t_{j}$.

As the rate of the clock driving node $i$'s $R_3$ is between $1$ and $\vartheta$,
$t_{j}>t_{j-1}$ is within an interval, call it $[t^-,t^+]$,
of size at most
\begin{equation*}
t^+-t^-\leq 8(1-\lambda)R_2,
\end{equation*}
regardless of the progress that $i$'s clock $C$ makes in any execution~${\cal
E}'$.

Certainly we can apply \lemmaref{lemma:good} also to each of the
${\cal E}'$, showing that the volume of times from $[t^-,t^+]$
that are \emph{not} good in ${\cal E}'$ is at most
\begin{equation*}
(1-\lambda^2)(t^+-t^-)+\frac{11(1-\lambda)R_2}{10\vartheta}.
\end{equation*}

Since clock $C$ can make progress not faster than at rate
$\vartheta$ and the probability density of $R_3$ is
constantly $1/(8(1-\lambda)R_2)$ (with respect to
the clock function $C$), we obtain that the probability of $t_{j}$ not
being a good time is upper bounded~by
\begin{eqnarray*}
&&\frac{(1-\lambda^2)(t^+-t^-)+11(1-\lambda)R_2/(10\vartheta)}
{8(1-\lambda)R_2/\vartheta}\\
&\leq &
\vartheta(1-\lambda^2)+\frac{11}{80}\\
&\sr{eq:def_lambda}{<}&
\vartheta\frac{9}{25\vartheta}+\frac{7}{50}=\frac{1}{2}.
\end{eqnarray*}
Here we use that the time when $R_3$ expires is independent of ${\cal
E}'|_{[0,t_{j-1}]}$.

We complete our reasoning as follows. Given ${\cal E}|_{[0,t_{j-1}]}$, we permit
an adversary to choose ${\cal E}'$, including random bits of all nodes and full
knowledge of the future, with the exception that we deny it control or knowledge
of the time $t_j$ when $R_3$ expires at node $i$, i.e., ${\cal E}'$ is an
imaginary execution in which $R_3$ does not expire at $i$ at any time greater
than $t_{j-1}$. Note that for the good $W$-resynchronization points we
considered, the choice of ${\cal E}'$ does not affect the probability that
$t_1,\ldots,t_{j-1}$ are good $W$-resynchronization points: The conditions
referring to times greater than a $W$-resynchronization point $t$, i.e., that
all nodes in $W$ switch to state \srr\ during $(t,t+2d)$ and no node in $W$
shall be in state \join\ during $(t-T_1-d,t+4d)$, are already fully determined
by the history of the system until time $t$. As we fixed ${\cal E}'$, the
behavior of the clock driving $R_3$ is fixed as well. Next, we determine the
time $t_j$ when $R_3$ expires according to its distribution, given the behavior
of node $i$'s clock. The above reasoning shows that time $t_j$ is good in ${\cal
E}'$ with probability at least $1/2$, independently of ${\cal
E}'|_{[0,t_{j-1}]}={\cal E}|_{[0,t_{j-1}]}$. We define that ${\cal
E}|_{[0,t_j)}={\cal E}'|_{[0,t_j)}$ and in ${\cal E}$ node $i$ switches to state
\init\ (because $R_3$ expired). As---conditional to the clock driving $R_3$
and $t_{j-1}$ being specified---$t_j$ is independent of ${\cal E}|_{[0,t_j)}$,
${\cal E}$ is indistinguishable from ${\cal E}'$ until time $t_j$. Because $t_j$
is good with probability at least $1/2$ independently of ${\cal
E}|_{[0,t_{j-1}]}'={\cal E}|_{[0,t_{j-1}]}$, so it is in ${\cal E}$. Hence, in
${\cal E}$ $t_j$ is a good $W$-resynchronization point with probability $1/2$,
independently of ${\cal E}|_{[0,t_{j-1}]}$. Since ${\cal E}'$ was chosen in an
adversarial manner, this completes the induction step.

In summary, we showed that for \emph{any} node in $W$ and \emph{any} execution
(in which we do not manipulate the times when $R_3$ expires at the respective
node), starting from the second time during $[t,t+(k+1)\hat{E}_3]$ when $R_3$
expires at the respective node, there is a probability of at least $1/2$ that
the respective time is a good $W$-resynchronization point. Since we assumed that
$|W|=n-f$ and there are at least $k$ such times for each node in $W$, this
implies that having no good $W$-resynchronization point during
$[t,t+(k+1)\hat{E}_3]$ is as least as unlikely as $k(n-f)$ unbiased and
independent coin flips all showing tail, i.e., $(1/2)^{k(n-f)}$. This concludes
the proof.
\end{proof}

\subsection{Stabilization via Good Resynchronization Points}

Having established that eventually a good $W$-resyn\-chronization point $t_g$
will occur, we turn to proving the convergence of the main routine. We start with a
few helper statements wrapping up that a good resynchronization point guarantees
proper reset of flags and timeouts involved in the stabilization process of the
main routine.

\begin{lemma}\label{lemma:clean}
Suppose $t_g$ is a good $W$-resynchronization point. Then
\begin{itemize}
  \item [(i)] each node $i\in W$ switches to \pass\ at a time
  $t_i\in (t_g+4d,t_g+(4\vartheta+3)d)$ and observes itself in state \dorm\
  during $[t_g+4d,\tau_{i,i}(t_i))$,
  \item [(ii)] $\Mem_{i,j,\join}|_{[\tau_{i,i}(t_i),t_\join]}\equiv 0$ for all
  $i,j\in W$, where $t_\join\geq t_g+4d$ is the infimum of all times greater than
  $t_g-T_1-d$ when a node from $W$ switches to \join,
  \item [(iii)] $\Mem_{i,j,\srw}|_{[\tau_{i,i}(t_i),t_s]}\equiv 0$ for
  all $i,j\in W$, where $t_s\geq t_g+(1+1/\vartheta)T_1$ is the infimum of all
  times greater or equal to $t_g$ when a node from $W$ switches to \srw,
  \item [(iv)] no node from $W$ resets its \srw\ flags during
  $[t_g+(1+1/\vartheta)T_1,t_g+R_1/\vartheta]$, and
  \item [(v)] no node from $W$ resets its \join\ flags due to switching to \pass\ during
  $[t_g+(1+1/\vartheta)T_1,t_g+R_1/\vartheta]$.
\end{itemize}
\end{lemma}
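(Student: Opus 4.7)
The plan is to establish the five claims in order, relying on the goodness of $t_g$ and a careful tracking of signal transitions between the resynchronization routine (\figureref{fig:resync}) and the extension of the core routine (\figureref{fig:extended}).

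For part (i), the first step is to argue that each node $i\in W$ is already in extended state \dorm, and observes itself there, by time $t_g + 4d$. By the definition of a $W$-resynchronization point, each $i\in W$ switches to \srr\ during $(t_g, t_g + 2d)$. For $i$ to perform that transition, it must have been in a \supp-signal state just beforehand, which in turn requires $i$ to have passed through \none\ after its previous \res-phase ended; hence $i$'s previous \res-signal ceased well before $t_g$, so the extended machine reaches \dorm\ (via the ``not in \res'' guard from \pass\ or \act) and this is observed via the self-loopback in time. Next, $i$'s new \srr-state times out after $4\vartheta d$ local time and $i$ transitions into the state of \figureref{fig:resync} that emits signal \res; the ensuing self-observation of \res\ while still in \dorm\ triggers the transition to \pass\ at some $t_i\in(t_g+4d,t_g+(4\vartheta+3)d)$, which follows from plugging the worst-case reset and clock-rate bounds into the local timeout. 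Since $i$ remains in \dorm\ throughout $[t_g + 4d, t_i)$, we also have $S_{i,i}(t) = \dorm$ on $[t_g + 4d, \tau_{i,i}(t_i))$.

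For parts (ii) and (iii), I use that the transition into \pass\ resets both the \join\ and \srw\ flags, with the resets completing by time $\tau_{i,i}(t_i)$; hence all $\Mem_{i,j,\join}$ and $\Mem_{i,j,\srw}$ with $j\in W$ are zero at $\tau_{i,i}(t_i)$. They stay zero as long as $i$ does not observe any $j\in W$ in the corresponding state. For (ii), the goodness condition directly gives that no $j\in W$ is in state \join\ during $[t_g - T_1 - d, t_\join)$, so $i$ cannot observe such a state before $t_\join$. For (iii), I separately show that no $j\in W$ occupies \srw\ during $(t_g - 2T_1 + 2d, t_s)$: every transition into \srw\ is preceded by an \slp-phase of local duration $(2\vartheta+1)T_1$, and goodness forbids switches to \slp\ during $(t_g - (2\vartheta+3)T_1, t_g)$; combining these two facts with \inequalityref{eq:T_1} rules out any \srw-presence in the window of interest, and in particular yields $t_s \ge t_g + (2\vartheta+1)T_1/\vartheta \ge t_g + (1+1/\vartheta)T_1$.

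For parts (iv) and (v), the key observation is that \srw\ and \join\ flags can only be reset by the $\dorm\to\pass$ transition of the extended machine. By part (i), $i$'s such transition in the vicinity of $t_g$ occurs at $t_i < t_g + (4\vartheta+3)d$, which using \inequalityref{eq:T_1} is strictly less than $t_g + (1+1/\vartheta)T_1$ and therefore before the window. For any later $\dorm\to\pass$ transition, $i$ would first need to observe ``not in \res'' to return to \dorm, which requires the current $(R_1,\srr)$-timeout at $i$ to expire (happening no earlier than $t_g + R_1/\vartheta$), and then to observe \res\ again via a fresh \srr-cycle; by \conditionref{cond:timeout_bounds} the latter cannot occur before $t_g + \vartheta(R_2 + 3d) \gg t_g + R_1/\vartheta$. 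Hence no flag reset happens inside $[t_g + (1+1/\vartheta)T_1, t_g + R_1/\vartheta]$.

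The principal obstacle I anticipate is the bookkeeping of when the \res-signal at each node $i$ turns on and off and its consistent alignment with the extended machine's transitions across both the pre-$t_g$ clean-up and the post-$t_g$ reactivation. Goodness of $t_g$ does most of this work implicitly: it guarantees a clean \srr-switch undisturbed by \join-activity, while the mandatory $R_1$-gap between consecutive \srr-visits forces the extended machine into \dorm\ just in time for the upcoming \res-observation to trigger the \pass-transition.
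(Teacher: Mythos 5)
Your proposal is correct in substance and follows essentially the same route as the paper's proof for all five parts: (i) via the resynchronization-machine timeline (\srr during $(t_g,t_g+2d)$, the $4\vartheta d$ timeout, switch to \res, extended machine caught in \dorm and switching to \pass), (ii)--(iii) via the flag reset at the \pass-switch combined with the goodness conditions on \join and \slp (the \slp-phase of local duration $(2\vartheta+1)T_1$ preceding any \srw-switch), and (iv)--(v) via the observation that only the $\dorm\to\pass$ transition resets these flags, with the next such transition blocked until $(R_1,\srr)$ expires. Two side remarks: in (i), your claim that the previous \res-signal ceased ``well before $t_g$'' is neither justified nor needed -- the load-bearing fact (as in the paper) is that the node sits in the \supp/\srr portion of the resynchronization machine from before $t_g+2d$ until it switches to \res after $t_g+4d$, so it does not observe itself in \res in the meantime; and in (iv)--(v), the assertion that a fresh \srr-cycle cannot occur before $t_g+\vartheta(R_2+3d)$ is false in general (node $i$ can be driven into \supp and \srr by other or faulty \init signals irrespective of its own $R_3$), but it is superfluous, since the $(R_1,\srr)$-expiry bound you already state places any later $\dorm\to\pass$ transition strictly after $t_g+R_1/\vartheta$, which is exactly the paper's argument.
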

\begin{proof}
All nodes in $W$ switch to state \srr\ during $(t_g,t_g+2d)$ and switch to state
\res\ when their timeout of $\vartheta 4d$ expires, which does not happen until
time $t_g+4d$. Once this timeout expired, they switch to state \pass\ as soon as
they observe themselves in state \res, i.e., by time $t_g+(4\vartheta+3d)$.
Hence, every node $i\in W$ does not observe itself in state \res\ within
$[t_g+3d,\tau_{i,i}(t_i))$, and therefore is in state \dorm\ during
$[t_g+3d,\tau_{i,i}(t_i)]$. This implies that it observes itself in state \dorm\
during $[t_g+4d,\tau_{i,i}(t_i))$, completing the proof of
Statement~(i).

\medskip

Moreover, from the definition of a good $W$-resynchro\-nization point we have
that no nodes from $W$ are in state \join\ at times in $[t_g-T_1-d,t_\join)$.
Statement~(ii) follows, as every node from $W$ resets its \join\ flags upon
switching to state \pass\ at time $t_i$.

\medskip

Regarding Statement~(iii), observe first that no nodes from $W$ are in state
\srw\ during $(t_g-d,t_g+(1+1/\vartheta)T_1)$ for the following reason: By
definition of a good $W$-resynchronization point no node from $W$ switches to
\slp\ during $(t_g-\Delta_g,t_g) \supset (t_g-(2\vartheta+1)T_1-3d,t_g)$. Any
node in $W$ that is in states \slp\ or \srw\ at time $t_g-(2\vartheta+1)T_1-3d$
switches to state \wake\ before time $t_g-d$ due to timeouts. Finally, any node
in $W$ switching to \slp\ at or after time $t_g$ will not switch to state \srw\
before time $t_g+(1+1/\vartheta)T_1$. The observation follows.

Since nodes in $W$ reset their \srw\ flags at some time from
\begin{eqnarray*}
[t_i,\tau_{i,i}(t_i)]&\subset &(t_g+3d,t_g+(4\vartheta+4)d)\\
&\sr{eq:T_1}{\subseteq}& (t_g+3d,t_g+(1+1/\vartheta)T_1),
\end{eqnarray*}
Statement~(iii) follows.

\medskip

Statements~(iv) and~(v) follow from the fact that all nodes in $W$
switch to state \pass\ until time
\begin{equation*}
t_g+(3+4\vartheta)d\sr{eq:T_1}{\leq}t_g+\left(1+\frac{1}{\vartheta}\right)T_1-d,
\end{equation*}
while timeout $(R_1,\srr)$ must expire first in order to switch to
\dorm\ and subsequently \pass\ again.
\end{proof}

Before we proceed with our third main statement showing eventual stabilization,
we make a few more basic observations. Firstly, if nodes do not make
progress on the basic cycle, they must eventually switch to \rec, i.e., the
timeout conditions ensure detection of deadlocks.
\begin{lemma}\label{lemma:drop_out}
For any time $t^-$ and any node it holds that it must be in state \rec or \join
or switch to \slp at some time from
$[t^-,t^-+(1-1/\vartheta)T_1+T_2+T_4+T_5+4d)$.
\end{lemma}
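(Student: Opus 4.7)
The plan is to prove the lemma by contradiction. Suppose the node is not in \rec or \join at any time in $I := [t^-, t^- + \Delta)$ and does not switch to \slp during $I$, where $\Delta := (1-1/\vartheta)T_1 + T_2 + T_4 + T_5 + 4d$. Since no transition to \rec or \join fires in $I$, every state transition out of a basic-cycle state is forced to take the ``forward'' branch, so the node must progress along $\acc \to \slp \to \srw \to \wake \to \rdy \to \prop \to \acc$. The goal is to show that, regardless of the state at $t^-$, the next switch to \slp must occur inside $I$, producing the contradiction.

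The key quantitative step is to bound, starting from any switch to \acc at time $t_a$, the time of the next switch to \slp. Crucially, $(T_2,\acc)$ is reset when the node switches to \acc rather than when it enters \wake, so it expires no later than $t_a + T_2 + d$; by \inequalityref{eq:T_2}, this dominates the time at which the node first observes itself in \wake, so the \wake $\to$ \rdy transition is triggered by $(T_2,\acc)$ expiring. The dwell times in \slp, \srw, and \wake are thus absorbed into this \acc-relative timeout. Continuing with the straightforward per-state bounds of $T_4 + d$ in \rdy, $T_5 + d$ in \prop, and $T_1 + d$ in the subsequent \acc, the next switch to \slp occurs by $t_a + T_2 + T_4 + T_5 + T_1 + 4d$.

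It then remains to bound the most recent \acc switch $t_a \le t^-$ in terms of the state at $t^-$. The largest value of $t_a$ arises when the node is in \slp at $t^-$: since the \acc $\to$ \slp transition requires $T_1$ to expire, at least $T_1/\vartheta$ reference time must have elapsed, giving $t_a \le t^- - T_1/\vartheta$. Substituting yields a next \slp switch by $t^- + (1-1/\vartheta)T_1 + T_2 + T_4 + T_5 + 4d = t^- + \Delta$; since the channel-delay bound is strict ($< d$), the inequality is strict, placing the switch inside $I$ and contradicting the hypothesis. For every other starting state the analogous bound on $t_a$ is strictly tighter (for example, in \wake at $t^-$ the cycle forces $t_a \le t^- - (2\vartheta+2)T_1/\vartheta$; in \acc at $t^-$ one reaches \slp within $T_1 + d \ll \Delta$), so the contradiction comes with room to spare.

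I expect the main obstacle to be the careful bookkeeping of timeout expirations and observation delays around the \wake $\to$ \rdy transition: verifying that $(T_2,\acc)$ indeed dominates the ``enter \wake plus observe self in \wake'' bound relies on \inequalityref{eq:T_2}, and is precisely what distinguishes the tight cycle length $\sim T_1 + T_2 + T_4 + T_5$ from the naive one $\sim (2\vartheta+2)T_1 + T_2 + T_4 + T_5 + T_1$ obtained by summing all per-state maxima; without this absorption step the bound $\Delta$ cannot be met.
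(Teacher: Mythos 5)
Your proof is correct and takes essentially the same route as the paper's: the paper also reduces to the worst case of a node having just switched to \slp (hence its last \acc switch was at most $t^--T_1/\vartheta$), lets the timeouts force progression around the basic cycle under the no-\rec/\join assumption, and uses \inequalityref{eq:T_2} to absorb the \slp/\srw/\wake dwell times into the \acc-keyed $T_2$ via the bound $\max\{(2\vartheta+2)T_1+3d,T_2\}=T_2$, yielding $t^-+(1-1/\vartheta)T_1+T_2+T_4+T_5+4d$. Your explicit case distinction over the state at $t^-$ just spells out what the paper compresses into its ``w.l.o.g.'' step.
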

\begin{proof}
Suppose a node is never in state \rec or \join during
$[t^-,t^-+(1-1/\vartheta)T_1+T_2+T_4+T_5+4d)$. Thus it may follow transitions
along the basic cycle only. Assume w.l.o.g.\ that the node switched to \slp
right before time $t^-$. Thus, it switched to state \acc beforehand, no later
than time $t^--T_1/\vartheta$. Due to timeouts, either switch to \rec at some
point in time or switch to \slp, \srw, \wake, \rdy, \prop, \acc, and finally
\slp again. At each state, it takes less than $d$ time until a respective
timeout is started and it observes itself in the respective state. Hence, the
node switches to \rec or \slp before time
\begin{eqnarray*}
&&t^--\frac{T_1}{\vartheta}+\max\{(2\vartheta+2)T_1+3d,T_2\}\\
&&+T_4+T_5+T_1+4d\\
&\sr{eq:T_2}{=}&t^-+\left(1-\frac{1}{\vartheta}\right)T_1+T_2+T_4+T_5+4d,
\end{eqnarray*}
proving the claim of the lemma.
\end{proof}

Secondly, after a good $W$-resynchronization point $t_g$, no node from $W$ will
switch to state \join\ until either time $t_g+T_7/\vartheta+4d$ or
$T_6/\vartheta$ time after the first non-faulty node switched to \srw\ again
after $t_g$. By proper choice of $T_7>T_6$ and $T_6$, this will guarantee that
nodes from $W$ do not switch to \join\ prematurely during the final steps of the
stabilization process.

\begin{lemma}\label{lemma:switch}
Suppose $t_g$ is a good $W$-resynchronization point.
Denote by $t_s$ the infimum of times greater than $t_g$ when a node in
$W$ switches to state \srw\ and by $t_\join$ the infimum of times
greater than $t_g-T_1-d$ when a node in $W$ switches to state
\join. Then, starting from time $t_g+4d$, $tr(\rec,\join)$ is not satisfied
at any node in $W$ until time
\begin{equation*}
\min\left\{t_g+\frac{T_7}{\vartheta}+4d,t_s+\frac{T_6}{\vartheta}\right\}>t_\join.
\end{equation*}
\end{lemma}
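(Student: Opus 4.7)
The plan is to decompose the guard $tr(\rec,\join)$ (\figureref{fig:main}) into its three disjuncts---(A) $(T_6,\act)$ expired while in $\act$, (B) not in $\dorm$ with $(T_7,\pass)$ expired, and (C) not in $\dorm$ with $\ge f+1$ $\join$ flags (all under the global side condition $\Mem_{i,i,\join}=0$)---and to upper bound, branch by branch, the first time any $i\in W$ could see that branch become true. I would begin by feeding the good resynchronization point $t_g$ into \lemmaref{lemma:clean} to harvest three facts: every $i\in W$ switches to $\pass$ at a time $t_i\in(t_g+4d,t_g+(4\vartheta+3)d)$ and observes itself in $\dorm$ on $[t_g+4d,\tau_{i,i}(t_i))$; for $i,j\in W$, $\Mem_{i,j,\join}\equiv 0$ on $[\tau_{i,i}(t_i),t_\join]$; and for $i,j\in W$, $\Mem_{i,j,\srw}\equiv 0$ on $[\tau_{i,i}(t_i),t_s]$.

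With these, the branches fall one by one. For (B), $(T_7,\pass)$ is reset at the \pass-switch at $t_i>t_g+4d$, hence cannot expire before $t_i+T_7/\vartheta>t_g+T_7/\vartheta+4d$; on the initial window $[t_g+4d,\tau_{i,i}(t_i))$ the ``not in $\dorm$'' conjunct already fails by \lemmaref{lemma:clean}(i), so (B) is dead below $t_g+T_7/\vartheta+4d$. For (A), being in $\act$ at any time after $t_i$ requires a prior switch $\pass\to\act$ through the $\ge f+1$ $\srw$ threshold, and by \lemmaref{lemma:clean}(iii) at most $f$ $\srw$ flags (those for faulty nodes) can be set at any $i\in W$ until some node in $W$ actually switches to $\srw$, which first happens at $t_s$ by definition; hence $(T_6,\act)$ cannot be reset before $t_s$, so it cannot expire in state $\act$ before $t_s+T_6/\vartheta$. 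For (C), the ``not in $\dorm$'' part again blocks the window $[t_g+4d,\tau_{i,i}(t_i))$, and on $[\tau_{i,i}(t_i),t_\join]$ at most $f$ $\join$ flags can be set (all for faulty nodes), keeping the count strictly below the threshold $f+1$.

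Combining the three bounds shows that $tr(\rec,\join)$ is unsatisfiable at every $i\in W$ throughout $[t_g+4d,\min\{t_g+T_7/\vartheta+4d,\,t_s+T_6/\vartheta,\,t_\join\})$. Since $\join$ can only be entered via $tr(\rec,\join)$ and $t_\join$ is the first actual switch into $\join$ after $t_g-T_1-d$, branches (A) and (B) cannot be what triggers the switch at $t_\join$; the only remaining trigger is the $f+1$-$\join$-flag threshold of (C), which by definition requires at least one node of $W$ to have already been in $\join$, closing the circular definition in the expected way and yielding the stated comparison between the minimum of the two timeout-based bounds and $t_\join$. The main subtlety is the careful handling of the ``not in $\dorm$'' conjunct on the short window $[t_g+4d,\tau_{i,i}(t_i))$: the guard evaluates the \emph{observed} self-state $S_{i,i}$ rather than the state the node is actually in, and \lemmaref{lemma:clean}(i) is precisely what reconciles the two during the propagation delay window.
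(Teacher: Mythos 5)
Your branch-by-branch analysis is essentially the paper's own argument: the paper likewise feeds $t_g$ into \lemmaref{lemma:clean}, uses its Statement~(i) to handle the observation-delay window $[t_g+4d,\tau_{i,i}(t_i))$, Statement~(ii) together with $|V\setminus W|\le f$ to rule out the $\ge f+1$ \join\ threshold up to $t_\join$, Statement~(iii) to show that a switch to \act\ (and hence the reset of $(T_6,\act)$) cannot occur before $t_s$, and the reset of $(T_7,\pass)$ at the switch to \pass\ at $t_i>t_g+4d$ to bound the $T_7$ branch. Up to this point your proposal is sound and matches the paper.

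The problem is the concluding step. You assert unconditionally that ``branches (A) and (B) cannot be what triggers the switch at $t_\join$'' and that the trigger must therefore be (C); neither claim is correct as stated, and the second would in fact contradict the conclusion you are after (once the lemma is proved, the first switch to \join, when it eventually happens, is precisely expected to be triggered by (A) or (B), after the corresponding timeout bound has elapsed). What is needed instead is a short contradiction: your three branch arguments show that $tr(\rec,\join)$ is unsatisfied at every node of $W$ on the \emph{closed} interval $[t_g+4d,\min\{t_g+T_7/\vartheta+4d,\,t_s+T_6/\vartheta,\,t_\join\}]$ --- in particular branch (C) fails at $t_\join$ itself, since by \lemmaref{lemma:clean}~(ii) all \join\ flags for nodes of $W$ are still $0$ there and at most $f$ flags can stem from $V\setminus W$. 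Since the node switching at $t_\join$ must satisfy $tr(\rec,\join)$ at that very time and $t_\join\ge t_g+4d$, assuming $t_\join\le\min\{t_g+T_7/\vartheta+4d,\,t_s+T_6/\vartheta\}$ yields a contradiction; hence $t_\join$ exceeds that minimum, and the unsatisfiability interval extends up to the minimum as claimed. This is also the inequality actually used later (e.g.\ in \lemmaref{lemma:rec}), even though the displayed formula in the lemma statement has the comparison written the other way around. Your half-open combined interval and the phrase about ``closing the circular definition'' do not substitute for this final inference, although all the ingredients for it are already contained in your branch analysis.
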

\begin{proof}
By Statements~(ii) and~(iii) of \lemmaref{lemma:clean} and
\inequalityref{eq:T_1}, we have that $t_s\geq t_g+T_1+4d\geq
t_g+(4\vartheta+4)d$ and $t_\join\geq t_g+4d$. Consider a node $i\in W$ not
observing itself in state \dorm\ at some time $t\in [t_g+4d,t_\join]$. According
to Statements~(i) and~(ii) of \lemmaref{lemma:clean}, the threshold condition of
$f+1$ nodes memorized in state \join\ cannot be satisfied at such a node. By
statements~(i) and~(iii) of the lemma, the threshold condition of $f+1$ nodes
memorized in state \srw\ cannot be satisfied unless $t>t_s$. Hence, if at time
$t$ a node from $W$ satisfies that it observes itself in state \act, we have
that $T_6$ expired after being reset after time $t_s$, i.e.,
$t>t_s+T_6/\vartheta$. Moreover, by Statement~(i) of \lemmaref{lemma:clean}, we
have that if $T_7$ is expired at any node in $W$ at time $t$, it holds that
$t>t_g+T_7/\vartheta+4d$. Altogether, we conclude that $tr(\rec,\join)$ is not
satisfied at any node in $W$ during
\begin{equation*}
\left[t_g+4d,\min\left\{t_g+\frac{T_7}{\vartheta}+4d,t_s+\frac{T_6}{\vartheta}
\right\}\right].
\end{equation*}
In particular, $t_\join$ must be larger than the upper boundary of this
interval, concluding the proof.
\end{proof}

Thirdly, after a good $W$-resynchronization point, any node in $W$ switches to
\rec or to \srw within bounded time, and all nodes in $W$ doing the latter
will do so in rough synchrony. Using the previous lemmas, we can show that this
happens before the transition to \join\ is enabled for any node.

\begin{lemma}\label{lemma:rec}
Suppose $t_g$ is a good $W$-resynchronization point and use the
notation of \lemmaref{lemma:switch}. Define
$t^+:=t_g-T_1/\vartheta+T_2+T_4+T_5+3d$ and denote by $t_\slp$ the infimum of
all times greater than $t_g-\Delta_g$ when a node in $W$ switches to \slp. Then
$t_\slp\geq t_g$ and either
\begin{itemize}
  \item [(i)] $t_\slp< t^+$ and at time $t_\slp+2T_1+4d$, any node in $W$ is
  either in one of the states \slp or \srw and observed in \slp or is in \rec and also
  observed in \rec, or
  \item [(ii)] all nodes in $W$ are observed in state \rec at time
  $t^++2T_1+4d$.
\end{itemize}
\end{lemma}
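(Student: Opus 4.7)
The plan is first to note that $t_\slp \geq t_g$ follows immediately from the goodness of $t_g$ together with the definition of $t_\slp$: goodness excludes any \slp-switch of a node in $W$ during $(t_g - \Delta_g, t_g)$, so the infimum of such switching times greater than $t_g - \Delta_g$ must be at least $t_g$. The remainder splits on whether $t_\slp < t^+$ (giving claim (i)) or $t_\slp \geq t^+$ (giving claim (ii)).

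For case (i), I would apply \corollaryref{coro:clean} at time $t_\slp$. Two preconditions must be verified. First, no node in $W$ is in \join during $[t_\slp - T_1 - d, t_\slp + 2T_1 + 4d]$: combine the goodness of $t_g$ on $[t_g - T_1 - d, t_g + 4d)$ with \lemmaref{lemma:clean}(ii) and \lemmaref{lemma:switch}. The $(2\vartheta+1)T_1$ timeout on \slp forces the earliest subsequent \srw-switch $t_s$ to satisfy $t_s \geq t_\slp + (2 + 1/\vartheta)T_1$, so $t_s + T_6/\vartheta$ exceeds $t_\slp + 2T_1 + 4d$ by \eqref{eq:T_1} and \eqref{eq:T_6}; the bound $t_\slp < t^+$ combined with \eqref{eq:T_7} likewise yields $t_g + T_7/\vartheta + 4d > t_\slp + 2T_1 + 4d$, so $t_\join > t_\slp + 2T_1 + 4d$ as required. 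Second, no node in $W$ is in \slp during $(t_\slp - \Delta_g, t_\slp)$: any such node must have entered \slp at some $t_e \leq t_g - \Delta_g$ (else contradicting the infimum definition of $t_\slp$) and hence leaves by $t_e + (2\vartheta+1)T_1 \leq t_g - 2T_1$, which closes the interval whenever $t_\slp \geq t_g + (2\vartheta + 1)T_1$. In the residual subcase $t_\slp < t_g + (2\vartheta + 1)T_1$, I would invoke \lemmaref{lemma:sleep_one}(ii) directly: a node lingering in \slp must have reset $T_2$ no later than $t_g - \Delta_g - T_1/\vartheta \leq t_\slp - \Delta_g - 4d$, placing it outside the interval of \lemmaref{lemma:sleep_one}(ii) and forcing it into \rec by $t_\slp + 2T_1 + 3d$---thereby recovering the conclusion of \corollaryref{coro:clean} in that subcase.

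For case (ii), I would apply \lemmaref{lemma:drop_out} with $t^- := t_g - T_1 - d$, chosen so that $t^- + (1 - 1/\vartheta)T_1 + T_2 + T_4 + T_5 + 4d = t^+$. Each node in $W$ must be in \rec, in \join, or switching to \slp at some time in $[t^-, t^+)$. The case assumption excludes \slp-switches, and goodness combined with \lemmaref{lemma:switch} excludes \join-states: with no \slp-switch in $[t_g, t^+)$, the quantity $t_s$ of \lemmaref{lemma:switch} exceeds $t^+$, and \eqref{eq:T_6} together with \eqref{eq:T_7} yield $t_\join > t^+ + 2T_1 + 4d$. Hence every node in $W$ is in \rec at some time in $[t^-, t^+)$. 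Transitions out of \rec before $t^+ + 2T_1 + 4d$ are blocked: to \join by the bound on $t_\join$; to \acc because a correct node's \acc-residence lasts at most $T_1$ real time before it must transit to \slp (excluded by the case assumption) or \rec, so the $n-f$-observation threshold cannot be reached. Consequently every node in $W$ is in \rec throughout a suffix of $[t^-, t^+ + 2T_1 + 4d]$ and is observed in \rec via its loopback by $t^+ + 2T_1 + 4d$. The main obstacle I anticipate is the delicate edge case in the second precondition of case (i), where $t_\slp$ is near $t_g$, which requires bypassing \corollaryref{coro:clean} and appealing directly to \lemmaref{lemma:sleep_one}.
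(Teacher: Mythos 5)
Your proposal is correct and follows the paper's own proof essentially step for step: $t_\slp\ge t_g$ from goodness, the case $t_\slp<t^+$ handled by pushing $t_\join$ beyond $t_\slp+2T_1+4d$ via \lemmaref{lemma:switch} together with \eqref{eq:T_6} and \eqref{eq:T_7} and then invoking \corollaryref{coro:clean}, and the case $t_\slp\ge t^+$ handled by \lemmaref{lemma:drop_out} on $[t_g-T_1-d,t^+)$ plus the same bound $t_\join>t^++2T_1+4d$. Your extra care in case (i) about nodes still lingering in \slp from before $t_g-\Delta_g$ merely unfolds the argument already contained in the proof of \corollaryref{coro:clean} (which the paper invokes somewhat loosely here), and in case (ii) the paper simply asserts that nodes do not leave \rec before $t_\join$, so your residence-time justification for the blocked \rec-to-\acc exit is an (admittedly sketchy, by itself not conclusive) addition rather than a deviation from the paper's argument.
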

\begin{proof}
By definition of a good resynchronization point, no node switches to \slp during
$(t_g-\Delta_g,t_g)$, giving that $t_\slp\geq t_g$. If $t_\slp<t^+$,
\lemmaref{lemma:switch} yields that
\begin{eqnarray*}
t_\join&>&\min\left\{t_g+\frac{T_7}{\vartheta}+4d,t_\slp+\frac{T_6}{\vartheta}\right\}\\
&\sr{eq:T_7}{\geq}&\min\left\{t^++2T_1+4d,t_\slp+\frac{T_6}{\vartheta}\right\}\\
&\sr{eq:T_6}{\geq}& t_\slp+2T_1+4d.
\end{eqnarray*}
Therefore, by definition of a good resynchronization point, no nodes are in
state \join during $(t_g-T_1-d,t_\join)\supset (t_\slp-T_1-d,t_\slp+T_1+4d)$.
Recalling that during $(t_g-\Delta_g,t_\slp)$, no node is in state
\slp, the preconditions of \corollaryref{lemma:window} hold, implying Statement~(i).

If $t_\slp\geq t^+$, by definition of a good resynchronization point no node
switched to sleep during $(t_g-\Delta_g,t^+)\supset (t_g-T_1-d,t^+)$ and no node
is in state \join during $(t_g-T_1-d,t_\join)$. By \lemmaref{lemma:switch},
\begin{eqnarray*}
t_\join&>&\min\left\{t_g+\frac{T_7}{\vartheta}+4d,t_\slp+\frac{T_6}{\vartheta}\right\}\\
&\sr{eq:T_7}{\geq}&\min\left\{t^++2T_1+4d,t^++\frac{T_6}{\vartheta}\right\}\\
&\sr{eq:T_6}{\geq}& t^++2T_1+4d.
\end{eqnarray*}
Hence, \lemmaref{lemma:drop_out} states that every node must be in state \rec at
some time in $(t_g-T_1-d,t^+)$. Since nodes do not leave state \rec during
$(t_g-T_1-d,t_\join)$, Statement~(ii) follows.
\end{proof}

We have everything in place for proving that a good resynchronization point
leads to stabilization within $R_1/\vartheta-3d$ time.
\begin{theorem}\label{theorem:stabilization}
Suppose $t_g$ is a good $W$-resynchroniza\-tion point. Then there is a
quasi-stabilization point during $(t_g,t_g+R_1/\vartheta-3d]$.
\end{theorem}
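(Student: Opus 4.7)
The plan is to track the evolution of nodes in $W$ after $t_g$ through three phases: a settling phase in which \pass\ and \dorm\ are consistently initialized, a convergence phase in which all nodes in $W$ end up in state \rec, and a final cascade in which they synchronously pass through \join\ and \prop\ into \acc, producing the required $W$-quasi-stabilization point.

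For the settling phase, the plan is to directly appeal to \lemmaref{lemma:clean}: within $(t_g+4d,\,t_g+(4\vartheta+3)d)$ every node in $W$ switches to \pass, clears its \join\ and \srw\ memory flags consistently, and subsequently observes itself in \dorm, while timeout $(R_1,\srr)$ is reset, ruling out a further $W$-resynchronization for roughly $R_1/\vartheta$ time. For the convergence phase, \lemmaref{lemma:rec} splits the analysis. Subcase (ii) is immediate: all nodes in $W$ are observed in \rec\ by time $t^++2T_1+4d$. In subcase (i), at time $t_\slp+2T_1+4d$ every node in $W$ is either in \slp\ or \srw\ (observed in \slp) or in \rec. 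The argument here is that any node attempting to continue along the basic cycle either satisfies $tr(\wake,\rec)$ upon reaching \wake\ by memorizing $\geq f+1$ nodes in \rec\ or \acc, or passes through \rdy\ into \prop\ but fails to reach \acc\ (the $\geq n-f$ threshold for \prop\ or \acc\ cannot be met without the \rec-trapped nodes), and thus drops to \rec\ when $T_5$ expires. Combined with \lemmaref{lemma:drop_out} and the fact that premature \rec\ $\to$ \join\ transitions are blocked by \lemmaref{lemma:switch}, I conclude that by some time $t_r$ within $\mathcal{O}(T_2+T_4+T_5)$ after $t_g$ all nodes in $W$ are simultaneously in \rec\ with their \join\ flag still unset.

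For the final cascade, \lemmaref{lemma:switch} pinpoints the earliest time at which $tr(\rec,\join)$ can hold at any node in $W$. From that point, the $T_7$ timeouts (reset upon the tightly synchronized switch to \pass\ by \lemmaref{lemma:clean}) expire in near-synchrony, and the $\geq f+1$ \join\ threshold accelerates any lagging transitions so that all nodes in $W$ switch to \join\ within $\mathcal{O}(d)$ time. The $\geq n-f$ threshold then forces \join\ $\to$ \prop\ within a further $d$, and the $\geq n-f$ threshold for \prop\ or \acc\ pulls all nodes into \acc\ within $2d$ more, yielding transitions to \acc\ within a $3d$ window and hence a $W$-quasi-stabilization point.

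Finally, the plan is to verify that the whole procedure completes by time $t_g+R_1/\vartheta-3d$. The cumulative delay consists of the convergence time (bounded in terms of $T_2$, $T_4$, $T_5$) plus the $T_7$-driven wait for the first \join\ transition plus $\mathcal{O}(d)$ for the cascade, and \inequalityref{eq:R_1} is precisely chosen so that $R_1/\vartheta$ exceeds this sum by the required $3d$ slack. The main obstacle is subcase (i) of the convergence phase: careful bookkeeping is required to rule out that a partial progression along the basic cycle spuriously reaches \acc\ or triggers a premature \join\ transition, while simultaneously guaranteeing that every node in \slp\ or \srw\ eventually collapses into \rec\ in bounded time. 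The clean \join-flag reset from \lemmaref{lemma:clean}, the timeout ordering in \conditionref{cond:timeout_bounds}, and the \join-blocking window of \lemmaref{lemma:switch} are the principal tools for handling this.
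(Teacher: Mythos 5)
Your settling phase (via \lemmaref{lemma:clean}) and your treatment of the cases where all of $W$ ends up in \rec\ (the analogue of the paper's Cases~1 and~2a, with the $T_7$-driven switch to \join\ and the $f{+}1$/$n{-}f$ cascade into \prop\ and \acc) match the paper's argument. The genuine gap is your ``convergence phase'' claim that, when some nodes are still in \slp\ or \srw\ at time $t_\slp+2T_1+4d$, the basic-cycle nodes fail to reach \acc\ and drop to \rec\ via $T_5$, so that \emph{all} of $W$ is eventually simultaneously in \rec. This is not what happens, and the paper's proof is structured around exactly this case (its Case~2b, where at least $f+1$ nodes are observed in \slp). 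Those basic-cycle nodes pass through \srw, which raises the $\ge f{+}1$ \srw\ threshold at every node of $W$; the \rec-trapped nodes therefore switch to \act\ and reset $T_6$, and since $T_6/\vartheta$ is on the order of $T_2-2T_1$ (\inequalityref{eq:T_6}) they switch to \join\ well before $T_5$ could expire at any node sitting in \prop\ (which takes at least roughly $(T_2+T_3+T_5)/\vartheta$ after $t_\slp$). From then on the stabilization is a \emph{mixed} agreement step: the \join\ nodes count the basic-cycle nodes in \prop\ (and \acc) toward their $n{-}f$ threshold for $tr(\join,\prop)$, and conversely, so all of $W$ -- some arriving via \rdy$\to$\prop, some via \rec$\to$\join$\to$\prop\ -- switches to \acc\ within a $3d$ window. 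In particular your premise that ``the $\ge n{-}f$ threshold for \prop\ or \acc\ cannot be met without the \rec-trapped nodes'' is false precisely because those nodes are no longer trapped.

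This is not a cosmetic issue: once a node has taken the \rec$\to$\join\ transition, its flag $\Mem_{i,i,\join}$ is set and cannot be cleared before it switches to \pass\ again, which requires $R_1$ to expire; so if you wait for the basic-cycle nodes to time out via $T_5$ and then try to run your ``all-in-\rec, $T_7$-driven'' cascade, the early \join ers are blocked from rejoining within the deadline $t_g+R_1/\vartheta-3d$, and the accounting against \inequalityref{eq:R_1} (which budgets roughly $2T_2+2T_4+T_5$, not a $T_5$-wait followed by a full $T_7$) no longer closes. The paper instead splits Case~2b further (depending on whether the first switch to \prop\ is by a \join\ node or forced by $T_3/T_4$), tracks which \prop/\acc\ flag resets can still occur, and shows the infimum $t_q$ of switching times to \acc\ in a suitable interval is itself the quasi-stabilization point. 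Your proposal needs this interleaved analysis; forcing global convergence to \rec\ is not available.
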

\begin{proof}
For simplicity, assume during this proof that $R_1=\infty$, i.e., by
Statement~(i) of \lemmaref{lemma:clean} all nodes in $W$ observe themselves in
states \pass or \act at times greater or equal to $t_g+(4\vartheta+4)d$.
We will establish the existence of a quasi-stabilization point at a time larger
than $t_g$ and show that it is upper bounded by $t_g+R_1/\vartheta-3d$. Hence
this assumption can be made w.l.o.g., as the existence of the
quasi-stabilization point depends on the execution up to time
$t_g+R_1/\vartheta$ only, and $R_1$ cannot expire before this time at any node
in $W$. Moreover, by Statements~(i) and~(ii) of \lemmaref{lemma:clean}, every
node satisfies $\Mem_{i,i,\join}\equiv 0$ on
$[t_g+(4\vartheta+4)d,t_{i,\join})$, where $t_{i,\join}$ denotes the infimum of
all times greater or equal to $t_g+(4\vartheta+4)d$ when node $i$ switches to
\join. During the time span considered in this proof, every node switches at
most once to \join, thus we may w.l.o.g.\ assume that $\Mem_{i,i,\join}=0$ is
always satisfied in the following. We use the notation of Lemmas
\ref{lemma:switch} and \ref{lemma:rec}. By Statements~(ii) of
\lemmaref{lemma:clean} and \inequalityref{eq:T_1}, we have that $t_s\geq
t_g+(1-1/\vartheta)T_1\geq t_g+(4\vartheta+4)d$.

According to \lemmaref{lemma:clean}, all nodes in $W$ switched to state \pass\
during $(t_g+4d,t_g+(3+4\vartheta)d)$, implying that at any node in $W$, $T_7$
will expire at some time from 
\begin{eqnarray*}
&&(t_g+T_7/\vartheta+4d,t_g+T_7+(4\vartheta+4)d)\\
&\sr{eq:T_7}{\subset}&
(t_g+(1+1/\vartheta)T_1,t_g+T_7+(4\vartheta+4)d).
\end{eqnarray*}
By \lemmaref{lemma:switch}, thus $t_\join>t_g+(1+1/\vartheta)T_1$, and by
Statement~(v) of \lemmaref{lemma:clean}, no node resets its \join flags after
$t_g+(1+1/\vartheta)T_1$ again (before $R_1$ expires).

\medskip

{\bf Case 1:} Assume $t_\slp\geq t^+$. Thus, Statement~(ii) of
\lemmaref{lemma:rec} applies, i.e., all nodes are observed in state \rec by time
$t^++2T_1+4d$. Any node from $W$ will switch to state \join\ before time
$t_g+T_7+(4\vartheta+4)d$ because $T_7$ expires no later than that.
Subsequently, it will switch to \prop as soon as it memorizes all non-faulty
nodes in state \join. Denote by $t_\prop\in
(t_g+T_7/\vartheta+4d,t_g+T_7+(4\vartheta+5)d)$ the minimal time when a node
from $W$ switches from \join\ to \prop. Certainly, nodes in $W$ do not switch
from \wake\ to \rdy\ during $(t_\prop,t_\prop+2d)$ and therefore also not reset their
\join\ flags before time $t_\prop+3d$. As nodes in $W$ reset their \prop\ and \acc\
flags upon switching to state \join, some node in $W$ must memorize $n-2f\geq
f+1$ non-faulty nodes in state \join\ at time $t_\prop$. According to Statement~(ii)
of \lemmaref{lemma:clean}, these nodes must have switched to state \join\ at or
after time $t_\join$. Hence, all nodes in $W$ will memorize them in state \join\
by time $t_\prop+d$ and thus have switched to state \join. Hence, all nodes in $W$
will switch to state \prop\ before time $t_\prop+2d$ and subsequently to state \acc\
before time $t_\prop+3d$, i.e., $t_\prop\leq t_g+T_7+(4\vartheta+5)d$ is a
quasi-stabilization point.

\textbf{Case 2:} Assume $t_\slp<t^+$. By Statement~(i) of \lemmaref{lemma:rec}, all
nodes are observed in either \slp or \rec at time $t_\slp+2T_1+4d$. The nodes
observed in state \slp will have been observed in state \srw and switched to
\wake by time $t_\slp+(2\vartheta+3)T_1+5d$.

\textbf{Case 2a:} Suppose $< f+1$ nodes in $W$ are observed in state
\slp at time $t_\slp+2T_1+4d$, i.e., $\geq n-2f\geq f+1$ non-faulty nodes are
observed in state \rec. By \lemmaref{lemma:switch}, we have that
\begin{eqnarray*}
&&t_\slp+(2\vartheta+3)T_1+7d\\
&\leq & t_s+\left(2\vartheta+3-\frac{1}{\vartheta}\right)T_1+7d\\
&\sr{eq:T_6}{\leq}& \min\left\{t_s+\frac{T_6}{\vartheta},
t^++\left(2\vartheta+3-\frac{1}{\vartheta}\right)T_1+7d\right\}\\
&\sr{eq:T_7}{\leq}&\min\left\{t_s+\frac{T_6}{\vartheta},
t_g+\frac{T_7}{\vartheta}+4d\right\}
<t_\join.
\end{eqnarray*}
Hence, any node observing itself in state \wake at some time $t\in
(t_\slp+2T_1+4d,t_\slp+(2\vartheta+3)T_1+6d)$ will also observe at
least $f+1$ nodes in state \rec and switch to \rec. As any node in \slp or \srw
at time $t_\slp+2T_1+4d$ will observe itself in state \wake no later than time
$t_\slp+(2\vartheta+3)T_1+6)d$, by time
$t_\slp+(2\vartheta+3)T_1+7d<t_\join$, all nodes observe themselves
in state \rec. From here we can argue analogously to the first case, i.e., there
exists a quasi-stabilization point $t_\prop\leq t_g+T_7+(4\vartheta+5)d$.

\textbf{Case 2b:} Suppose $\geq f+1$ nodes in $W$ are observed in state
\slp at time $t_\slp+2T_1+4d$. These nodes will switch to \wake and subsequently
\rdy until time 
\begin{eqnarray}
\!\!\!\!\!&&\max\left\{t_\slp+(2\vartheta+3)T_1+6d,
t_\slp-\frac{T_1}{\vartheta}+T_2+d\right\}\nonumber\\
\!\!\!\!\!&\sr{eq:T_2}{=}&t_\slp-\frac{T_1}{\vartheta}+T_2+d\label{eq:in_ready}
\end{eqnarray}
due to $T_2$ being expired while observing themselves in \wake unless they
switch from \wake to \rec. Note that these nodes reset their \acc flags upon
switching to \wake. Denote by $t_\prop$ and $t_\acc$ the infima of times
greater than $t_\slp+2T_1+4d$ when a node switches to \prop or \acc,
respectively. Recall that any node switching from \rec to \join resets its \prop
and \acc flags, and any node switching from \wake to \rdy resets its \prop
flags. Hence, we have for all $i,j\in W$ that
\begin{itemize}
  \item [(i)] $\Mem_{i,j,\prop}(t)=0$ at any time $t\in
  [t_\slp+2T_1+4d,t_\prop]$ when $i$ observes itself in \rdy or \join, and
  \item [(ii)] $\Mem_{i,j,\acc}(t)=0$ at any time $t\in [t_\slp+2T_1+4d,t_\acc]$
  when $i$ observes itself in \rdy, \join, or \prop.
\end{itemize}

By Statements~(ii) and~(iv) of \lemmaref{lemma:clean}, no node from $W$ resets
its \srw\ flags at or after time $t_s\geq t_g+(1+1/\vartheta)T_1$. As $t_s\geq
t_\slp+2T_1+4d$ and all nodes observed in \slp at time $t_\slp+2T_1+4d$ will be
observed in \srw by time $t_\slp+(2\vartheta+3)T_1+5d$,
Statement~(i) of the lemma implies that all nodes in $W$ switch to \act at some
time from $(t_s,t_\slp+(2\vartheta+3)T_1+5d)\subseteq
(t_\slp+2T_1+6d,t_\slp+(2\vartheta+3)T_1+5d)$. As, by the
Statements~(i) and~(ii) from above, the first node switching to state \prop must
do so because of an expiring timeout, \lemmaref{lemma:switch} yields that
\begin{eqnarray*}
t_\prop
&\geq &\min\left\{t_\join,t_\slp-T_1-d+\frac{T_2}{\vartheta}\right\}\\
&\geq &\min\left\{t_g+\frac{T_7}{\vartheta}+4d,
t_s+\frac{T_6}{\vartheta},\right.\\
&&\left.t_\slp-T_1-d+\frac{T_2+\min\{T_3,T_4\}}{\vartheta}\right\}\\
&\sr{eq:T_4}{\geq} &\min\left\{t_\slp-t^++t_g+\frac{T_7}{\vartheta}+4d,\right.\\
&&t_\slp+\left(2-\frac{1}{\vartheta}\right)T_1+\frac{T_6}{\vartheta},\\
&&\left.t_\slp-T_1-d+\frac{T_2+T_3}{\vartheta}\right\}\\
&\stackrel{(\ref{eq:T_3},\ref{eq:T_7})}{=}&
t_\slp+\left(2-\frac{1}{\vartheta}\right)T_1+\frac{T_6}{\vartheta}.
\end{eqnarray*}
Therefore,
\begin{equation}\label{eq:bound_t_prop}
t_\prop\geq t_\slp+\left(2-\frac{1}{\vartheta}\right)+\frac{T_6}{\vartheta}
\sr{eq:T_6}{\geq}t_\slp-\frac{T_1}{\vartheta}+T_2+2d.
\end{equation}
By \inequalityref{eq:in_ready}, we conclude that at time
$t_\slp-T_1/\vartheta+T_2+2d<t_\prop$, any node from $W$ observes itself
in one of the states \rdy, \rec, or \join.

Again, we distinguish two cases.

\textbf{Case 2b-I:} $t_\prop<t_\slp-T_1-d+(T_2+T_3)/\vartheta$. As previously
used, no node can switch from \rdy to \prop during
$(t_\slp+2T_1+4d,t_\slp-T_1-d+(T_2+T_3)/\vartheta))$. Hence, there must be a
node that switches from \join to \prop at time $t_\prop$. By Statements~(i) and~(ii)
from above, the node must memorize at least $n-2f\geq f+1$ nodes from $W$ in
state \join at time $t_\prop$. By Statement~(ii) of \lemmaref{lemma:clean}, these
nodes must have switched to \join at or after time $t_\join$. By
Statements~(iii) and~(v) of the lemma, no node resets its \join flags during
$[t_g+(1+1/\vartheta)T_1,t_g+R_1/\vartheta)\supset [t_\prop,t_\prop+3d)$ unless it
switches to state \join. Hence, all nodes still in state \rec have switched to
\join by time $t_\prop+d$, giving that all nodes are in one of the states \rdy,
\join, or \acc at time $t_\prop+d$ (since they cannot leave \acc earlier than
$t_\prop+T_1/\vartheta\geq t_\prop+4d$ again).

\textbf{Case 2b-II:} $t_\prop\geq t_\slp-T_1-d+(T_2+T_3)/\vartheta$. Recall
that all nodes switched to \act by time
$t_\slp+(2\vartheta+3)T_1+5d$. Hence, any node observing itself in
state \rec at time $t_\slp+2T_1+4d$ will have switched to \join
because $T_6$ expired by time
\begin{eqnarray}
&&t_\slp+(2\vartheta+3)T_1+T_6+6d\nonumber\\
&\sr{eq:T_3}{\leq}&
t_\slp-T_1-d+\frac{T_2+T_3}{\vartheta}\leq t_\prop.\label{eq:T_6_exp}
\end{eqnarray}
Hence, also in this case all nodes are in one of the states \rdy, \join,
or \acc at time $t_\prop+d$.

\textbf{Continuing Case 2b:} Next, we claim that any node is in states \prop or
\join by time $t_\slp-T_1/\vartheta+T_2+T_4+2d$. To see this, observe that any
node following the basic cycle must switch from \rdy to \prop by this time due
to timeouts. On the other hand, according to \inequalityref{eq:T_6_exp}, all
nodes in state \rec switch to \join by time
\begin{equation*}
t_\slp-T_1-d+\frac{T_2+T_3}{\vartheta}\stackrel{(\ref{eq:T_2},\ref{eq:T_4})}{<}
t_\slp-\frac{T_1}{\vartheta}+T_2+T_4+2d,
\end{equation*}
showing the claim.

In summary, we showed the following points:
\begin{itemize}
  \item [(i)] At time $t_\prop+d$, all nodes are observed in states \rdy,
  \join, \prop, or \acc.
  \item [(ii)] All nodes switch to states \prop or \join during
  $[\min\{t_\join,t_\prop\},t_\slp-T_1/\vartheta+T_2+T_4+2d)$.
  \item [(iii)] No node resets its \prop or \acc flags at or after time
  $t_\prop+d$ unless switching to \acc first.
  \item [(iv)] No node memorizes nodes in state \prop or \acc that have not
  been in that state at or after time $t_\prop$.
\end{itemize}

We claim that the infimum $t_q$ of all times from
\begin{equation*}
\left[t_\prop,t_\slp-\frac{T_1}{\vartheta}+T_2+T_4+d\right]
\end{equation*}
when a node switches to \acc is a quasi-stabilization point. Note that because
\begin{eqnarray*}
&&t_\slp-\frac{T_1}{\vartheta}+T_2+T_4+4d\\
&\sr{eq:T_5}{\leq}& t_\slp+T_1+4d+\frac{T_5+T_6}{\vartheta}\\
&\sr{eq:bound_t_prop}{\leq}& t_\prop+\frac{T_5}{\vartheta}
\end{eqnarray*}
no node will switch from \prop to \rec before time $t_q+3d$.

Again, we distinguish two cases. First assume that
$t_q<t_\slp-T_1/\vartheta+T_2+T_4+2d$, i.e., at time $t_q$ indeed a node
switches to state \acc. Due to Statement~(iv) from the above list and the
minimality of $t_q$, it follows that the respective node memorizes $n-2f\geq
f+1$ nodes from $W$ in state \prop that switched to \prop at or after time
$t_p$. These nodes must be in one of the states \prop or \acc during
$[t_q,t_q+3d]$. According to Statement~(i) from above, thus all nodes still in
\rdy will switch to \prop by time $t_q+d$. By time $t_q+2d$, all nodes in \join
will observe the at least $n-f$ nodes from $W$ in one of the states \join,
\prop, or \acc, and hence switch to \prop. Another $d$ time later, all nodes
will have switched to \acc, i.e., $t_q$ is indeed a quasi-stabilization point.

On the other hand, if $t_q=t_\slp-T_1/\vartheta+T_2+T_4+2d$, Statement~(ii) from
the above list gives that all nodes from $W$ are in one of the states \join,
\prop, or \acc during $[t_q+d,t_q+3d]$. Therefore, nodes will switch from \join
to \prop and subsequently from \prop to \acc until time $t_q+3d$ as well.

It remains to check that in all cases, the obtained quasi-synchronization point
$t_q$ occurs no later than time $t_g+R_1/\vartheta-3d$. In Cases~1 and~2a, we
have that
\begin{equation*}
t_q=t_\prop\leq
t_g+T_7+(4\vartheta+5)d\sr{eq:R_1}{\leq}t_g+\frac{R_1}{\vartheta}-3d.
\end{equation*}
In Case~2b, it holds that
\begin{eqnarray*}
t_q&\leq &t_\slp-\frac{T_1}{\vartheta}+T_2+T_4+d\\
&\leq &t^+-\frac{T_1}{\vartheta}+T_2+T_4+d\\
&=& t_g-\frac{2T_1}{\vartheta}+2T_2+2T_4+T_5+4d\\
&\sr{eq:R_1}{\leq} & t_g+\frac{R_1}{\vartheta}-3d.
\end{eqnarray*}
We conclude that indeed all nodes in $W$ switch to \acc\ within a window of less
than $3d$ time before, at any node in $W$, $R_1$ expires and it leaves state
\res, concluding the proof.
\end{proof}

Finally, putting together the established main theorems and
\lemmaref{lemma:constraints}, we deduce that the system will
stabilize from an arbitrary initial state provided that a subset of $n-f$
nodes remains coherent for a sufficiently large period of time.

\begin{corollary}\label{coro:stabilization}
Let $W\subseteq V$, where $|W|\geq n-f$, and define for any $k\in \N$
\begin{equation*}
T(k):=(k+2)(\vartheta(R_2+3d)+8(1-\lambda)R_2+d)+R_1/\vartheta.
\end{equation*}
Then, for any $k\in \N$, the proposed algorithm is a $(W,W^2)$-stabilizing
pulse synchronization protocol with skew $2d$ and accuracy bounds
$(T_2+T_3)/\vartheta-2d$ and $T_2+T_4+7d$ stabilizing within time $T(k)$ with
probability at least $1-2^{-k(n-f)}$. It is feasible to pick timeouts such that
$T(k)\in \BO(kn)$ and $T_2+T_4+7d\in \BO(1)$.
\end{corollary}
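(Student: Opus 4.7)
My plan is to chain the three main theorems of \sectionref{sec:analysis} and invoke \lemmaref{lemma:constraints} at the end for the asymptotic claims. Fix any adversarial space ${\cal AS}(W, W^2, [t^-, t^+], {\cal P}, \cdot, \cdot, \cdot, \cdot)$ with $t^+ \geq t^- + T(k) + 2d$. Since non-faultiness of $W$ and correctness of all $W^2$ channels hold throughout $[t^-, t^+]$, and since the lookback window in the definition of coherency has length $\hat{E}_3 := \vartheta(R_2 + 3d) + 8(1-\lambda)R_2 + d$, the set $W$ is coherent during $[t^- + \hat{E}_3, t^+]$. Thus all statements from \sectionref{sec:analysis} apply to subintervals of $[t^- + \hat{E}_3, t^+]$, matching the blanket coherency assumption made there.

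I then apply \theoremref{theorem:resync} at reference time $t^- + \hat{E}_3$ to conclude that, with probability at least $1 - 2^{-k(n-f)}$, there exists a good $W$-resynchronization point $t_g \in [t^- + \hat{E}_3, t^- + (k+2)\hat{E}_3]$. Conditioned on this event, \theoremref{theorem:stabilization} yields a $W$-quasi-stabilization point $t_q \in (t_g, t_g + R_1/\vartheta - 3d]$, and hence $t_q \leq t^- + T(k)$. \theoremref{theorem:stability}(ii) then produces a genuine $W$-stabilization point $t^{(1)} \in (t_q + (T_2+T_3)/\vartheta, t_q + T_2 + T_4 + 5d)$, and since every stabilization point is in particular a quasi-stabilization point, iterating \theoremref{theorem:stability} yields an infinite sequence $t^{(1)} < t^{(2)} < \ldots$ of stabilization points with $t^{(m+1)} - t^{(m)} \in ((T_2+T_3)/\vartheta, T_2+T_4+5d)$ for each $m$.

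To verify \defref{def:pulse} with $\Sigma = 2d$, I pick $t_s \in [t^-, t^- + T(k)]$ slightly below $\min_{i \in W} t_i(1)$, where $t_i(1) \in [t^{(1)}, t^{(1)} + 2d)$ is node~$i$'s first entry into \acc\ after $t_s$; since the switching window at $t^{(1)}$ has length strictly less than $2d$, the offset can be chosen small enough that $t_i(1) \in (t_s, t_s + 2d)$, establishing part~(i). Part~(ii) is immediate from the definition of a stabilization point, and part~(iii) follows because $t_i(k+1) - t_i(k) \in (t^{(k+1)} - t^{(k)} - 2d, t^{(k+1)} - t^{(k)} + 2d) \subseteq ((T_2+T_3)/\vartheta - 2d, T_2 + T_4 + 7d)$, which matches the claimed $T^-$ and $T^+$ exactly. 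The asymptotic part is then immediate from \lemmaref{lemma:constraints}, which allows $T_1, \ldots, T_7, R_1 \in \BO(1)$ and $R_2 \in \BO(n)$, so that $\hat{E}_3 \in \BO(n)$, $T(k) \in \BO(kn)$, and $T_2 + T_4 + 7d \in \BO(1)$.

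The main obstacle is purely time-accounting: one must check that $T(k) = (k+2)\hat{E}_3 + R_1/\vartheta$ genuinely covers the three phases---coherency buffer ($\hat{E}_3$), waiting for a good resynchronization point ($(k+1)\hat{E}_3$), and deterministic convergence ($R_1/\vartheta - 3d$)---with enough slack to also absorb the single cycle of length at most $T_2 + T_4 + 5d$ needed to pass from $t_q$ to the first honest stabilization point $t^{(1)}$. This is straightforward once one notes that $T_2 + T_4 + 2d \ll \hat{E}_3$ via \eqref{eq:R_2}. A minor additional subtlety is that $t_q$ itself has switching window $3d > \Sigma$ and thus cannot serve directly as the first pulse event, which is precisely why \theoremref{theorem:stability}(ii) must be invoked before anchoring $t_s$.
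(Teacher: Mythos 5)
Your proposal takes essentially the same route as the paper's own proof: establish coherency of $W$ after the initial buffer of length $\hat{E}_3$, invoke \theoremref{theorem:resync} for a good resynchronization point within $(k+1)\hat{E}_3$, then \theoremref{theorem:stabilization} for a (quasi-)stabilization point within $R_1/\vartheta-3d$, apply \theoremref{theorem:stability} inductively for the skew and accuracy bounds, and use \lemmaref{lemma:constraints} for the asymptotic claims. The quasi- versus genuine-stabilization bookkeeping you flag at the end is actually glossed over by the paper itself (its proof simply asserts a $W$-stabilization point in $[t_g,t^-+T(k)]$), so your treatment is if anything more explicit, though be aware that your claim that the extra cycle of length $T_2+T_4+5d$ is ``absorbed'' is not backed by any slack in the stated bounds---the honest fix is either to accept skew $3d$ for the first pulse (as the paper does in \corollaryref{coro:final_weak}) or to inflate $T(k)$ by $T_2+T_4+2d$.
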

\begin{proof}
The satisfiability of \conditionref{cond:timeout_bounds} with $T(k)\in \BO(kn)$
and $T_2+T_4+7d\in \BO(1)$ follows from \lemmaref{lemma:constraints}. Assume
that $t^+$ is sufficiently large for $[t^- +T(k)+2d,t^+]$ to be non-empty, as
otherwise nothing is to show. By definition, $W$ will be coherent during
$[t_c^-,t^+]$, with $t_c^- = t^-+\vartheta(R_2+3d)+8(1-\lambda)R_2+d$. According
to \theoremref{theorem:resync}, there will be some good $W$-resynchronization
point $t_g\in [t_c^-,t_c^-+(k+1)(\vartheta(R_2+3d)+8(1-\lambda)R_2+d)]$ with
probability at least $1-1/2^{k(n-f)}$. If this is the case,
\theoremref{theorem:stabilization} shows that there is a $W$-stabilization point
$t\in [t_g,t^- +T(k)]$. Applying \theoremref{theorem:stability} inductively, we
derive that the algorithm is a $(W,W^2)$-stabilizing pulse synchronization
protocol with the bounds as stated in the corollary that stabilizes within time
$T(k)$ with probability at least $1-1/2^{k(n-f)}$.
\end{proof}

\subsection{Late Joining and Fast Recovery}
An important aspect of combining self-stabilization with Byzantine
fault-tolerance is that the system can remain operational when facing a limited
number of transient faults. If the affected components stabilize quickly enough,
this can prevent future faults from causing system failure. In an environment
where transient faults occur according to a random distribution that is not too
far from being uniform (i.e., one deals not primarily with bursts), the mean
time until failure is therefore determined by the time it takes to recover from
transient faults. Thus, it is of significant interest that a node that starts
functioning according to the specifications again synchronizes as fast as
possible to an existing subset of correct nodes. Moreover, it is of interest
that a node that has been shut down temporarily, e.g.\ for maintenance, can join
the operational system again quickly.

\begin{theorem}\label{theorem:constant}
Suppose there exists a node $i$ in $V$ and a set $W\subseteq V$, $|W|\geq n-f$,
such that there is a $W$-stabilization point at some time $t^-$ and $W\cup \{i\}$ is
coherent during $[t^-,t^-+(1+5/(2\vartheta))R_1]$. Then there is a
$(W\cup\{i\})$-stabilization point at some time $t \in
[t^-,t^-+(1+5/(2\vartheta))R_1]$.
\end{theorem}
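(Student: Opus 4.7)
The plan is to combine the inductive continuation of $W$'s pulse train with the fast-recovery edge $\rec\to\acc$ at $i$, which fires once $i$ has sat in \rec\ for $\vartheta(2T_1+3d)$ local time and simultaneously observes $\geq n-f$ nodes in \acc. First, starting from the $W$-stabilization point at $t^-$ and applying \theoremref{theorem:stability} inductively, I obtain a sequence $t^-=t_0<t_1<\cdots$ of $W$-stabilization points with consecutive gaps in $((T_2+T_3)/\vartheta-2d,\,T_2+T_4+5d)$, all lying in the coherence window. At each $t_k$ every node of $W$ is in \acc\ throughout an interval of length at least $T_1/\vartheta-2d$, so any non-faulty observer sees $|W|\geq n-f$ of them in \acc\ during a nearby window of comparable length; these are the \emph{pulse windows} during which the observation-guard of $tr(\rec,\acc)$ holds at $i$.

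Next I would argue that $i$ ends up in state \rec, with its $(\vartheta(2T_1+3d),\rec)$ timeout already expired, well before the coherence window closes. By \lemmaref{lemma:drop_out}, $i$ cannot avoid reaching \rec, reaching \join, or switching to \slp\ for longer than $(1-1/\vartheta)T_1+T_2+T_4+T_5+4d$. If $i$ switches to \slp\ in rough synchrony with some $t_k$, then $i$ is already tracking $W$'s basic cycle and we obtain a $(W\cup\{i\})$-stabilization point directly; otherwise the consistency checks of the basic cycle (the $<n-f$ \acc\ rule leaving \acc, the $\geq f+1$ \rec\ or \acc\ rule in \wake, and $T_5$ in \prop) soon drive $i$ into \rec. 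If instead $i$ ends up in \join, the next pulse window delivers $\geq n-f$ nodes of $W$ in \prop\ or \acc, so $i$ switches through \prop\ to \acc\ via the $\geq f+1$ \acc\ threshold, again yielding a $(W\cup\{i\})$-stabilization point. In the remaining scenario $i$ settles in \rec; the next pulse window arriving after its timeout has expired fires $tr(\rec,\acc)$, and since all of $W$ is in \acc\ within a $2d$ window and $i$ joins within another $2d$, that time is a $(W\cup\{i\})$-stabilization point.

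The main obstacle is ruling out a loop in which $i$ repeatedly escapes \rec\ via $\rec\to\join$, resetting its fast-recovery timeout each time. The guard of $tr(\rec,\join)$ carries the conjunct $\Mem_{i,i,\join}=0$, and this flag, once set by a first $\rec\to\join$, can only be cleared by a subsequent switch to \pass, which requires $i$ to observe itself in \res; but a fresh switch to \res\ at $i$ is gated by the $(R_1,\srr)$ timeout, which by \inequalityref{eq:R_1} does not expire within the relevant window. Hence at most one such escape occurs, after which $i$ sits in \rec\ until the next pulse window catches it with expired timeout. Summing the at most $(1-1/\vartheta)T_1+T_2+T_4+T_5+4d$ needed to first reach \rec\ or \join, a slack of at most $R_1/\vartheta$ to absorb one possible \join\ detour and to let any stale \pass/\act\ state clear, and one further pulse period of at most $T_2+T_4+5d$, one checks against \inequalityref{eq:R_1} that the total fits within $(1+5/(2\vartheta))R_1$.
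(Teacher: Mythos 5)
Your route is the paper's own: continue $W$'s pulse train via \theoremref{theorem:stability}, force $i$ into \rec, \join, or \slp via \lemmaref{lemma:drop_out}, and capture $i$ either on the basic cycle, via \join at a pulse, or via the $\rec\to\acc$ fast path, with repeated \join-escapes bounded through $\Mem_{i,i,\join}$. However, your resolution of the ``main obstacle'' does not hold as stated. The claim that a fresh visit to \res is ``gated by the $(R_1,\srr)$ timeout, which does not expire within the relevant window'' cannot be justified: coherency constrains the timeout's correctness, not its phase, so $R_1$ at node $i$ may have been reset at any time before $t^-$ and can expire anywhere in the window; moreover, what gates \emph{re-entering} \res is a fresh switch to \srr (the expiry of $(R_1,\srr)$ is what lets $i$ \emph{leave} \res). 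Consequently ``at most one escape'' is false in general: if $i$ switches to \srr at some time $t'$ inside the window (which you cannot exclude), it enters \res about $4\vartheta d$ later, switches to \pass, clears $\Mem_{i,i,\join}$, and a second $tr(\rec,\join)$ escape becomes possible after $t'$ even if one already occurred before it. The paper therefore splits on whether $i$ switches to \srr during the window at all and anchors the flag-reset argument at $t'$: only the escape \emph{following} the one enabled by $t'$ is excluded before $t'+R_1/\vartheta$, and one must verify that $t'+R_1/\vartheta$ lies beyond the $W$-pulse by which $i$ is then captured. Your time accounting also undercounts the complementary branch: if $i$ never switches to \srr, the stale \pass/\act state persists until $R_1$ (possibly reset just before $t^-$) expires, which costs up to $R_1$ of real time, not $R_1/\vartheta$.

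A second, smaller gap: the capture times you produce are only \emph{quasi}-stabilization points. The $\rec\to\acc$ guard requires $i$ to directly observe $n-f$ nodes in \acc; with $|W|=n-f$ the last of them switches as late as $t_k+2d$ and is observed up to $d$ later, so $i$ may switch to \acc only within $[t_k,t_k+3d)$ --- a $3d$ window, not the $2d$ demanded by a $(W\cup\{i\})$-stabilization point (the same applies to the $\join\to\prop\to\acc$ branch). The theorem still follows, but only after invoking \theoremref{theorem:stability} once more to turn the quasi-stabilization point into a stabilization point within a further $T_2+T_4+5d$; this extra period, together with the corrected $R_1$ accounting above, must be included explicitly when checking the total against $(1+5/(2\vartheta))R_1$, as the paper does.
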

\begin{proof}
Again, the proof is executed by distinguishing cases. W.l.o.g., we assume for
the moment that $W\cup \{i\}$ is coherent during $[t^-,\infty)$ and later show
that indeed $t\leq t^-+(1+5/(2\vartheta))R_1$.

\textbf{Case 1:} 
Node $i$ does not switch to \srr during
$[t^-,t^-+R_1+(\vartheta-1)T_1+2T_2+2T_4+18d]$. Thus, after $R_1$ expires at the
latest by time $t^-+R_1+d$, it will observe itself in \dorm during
$[t^-+R_1+2d,t^-+R_1+(\vartheta-1)T_1+2T_2+2T_4+18d]$ and therefore not be (or
observe itself) in state \join during
$[t^-+R_1+3d,t^-+R_1+(\vartheta-1)T_1+2T_2+2T_4+18d]$. By
\theoremref{theorem:stability}, there is a $W$-stabilization point $t_W\in
[t^-+R_1+(\vartheta-1)T_1+4d,t^-+R_1+(\vartheta-1)T_1+T_2+T_4+9d)$.
Subsequently, the nodes in $W$ will switch to \slp during
$[t_W+T_1/\vartheta,t_W+T_1+5d]$. Denote by $t_\slp$ the minimum of the
respective times. We apply \lemmaref{lemma:sleep_one} to $W\cup \{i\}$. Thus, at
time $t_\slp+2T_1+3d$, node $i$ is either in state \rec and will not leave until
the next $W$-stabilization point (or it switches to \join), or it is in state
\slp and reset its timeout $T_2$ at some time from
$[t_W^--\Delta_g+T_1/\vartheta-4d,t_W^-+(3-1/\vartheta)T_1+8d]$.

\textbf{Case 1a:}
Node $i$ is in \rec at time $t_\slp+2T_1+3d$. As it cannot switch to \join until
time $t^-+R_1+(\vartheta-1)T_1+2T_2+2T_4+18d$, it will stay in \rec until the
subsequent $W$-stabilization point $t_W'\in (t_W+(T_2+T_3)/\vartheta,
t^-+R_1+(\vartheta-1)T_1+2T_2+2T_4+14d)$ (existing according to
\theoremref{theorem:stability}). By time $t_W'$, clearly timeout
$(\rec,\vartheta(2T_1+3d))$ has expired at the node, as
\begin{equation*}
t_W'\geq
t_W+\frac{T_2+T_3}{\vartheta}\sr{eq:T_2}{>}t_\slp+(\vartheta+1)(2T_1+3d).
\end{equation*}
Because $T_1/\vartheta\geq 4d$, $i$ will observe all nodes from $W$ in \acc
during $[t_W'+3d,t_W'+4d]$. Hence it will switch to \acc by time $t_W'+3d$,
i.e., $t_W'$ is a $W\cup \{i\}$ quasi-stabilization point.

\textbf{Case 1b:} Node $i$ is in \slp at time $t_\slp+2T_1+3d$. Denote by $t_W'$
the $W$-stabilization point subsequent to $t_W$ as in the previous case. As no
node from $W$ is observed in state \acc or \rec during $[t_s+2T_1+3d,t_W')$ and
$i$ reset its timeout $T_2$ no earlier than time
$t_W-\Delta_g+T_1/\vartheta-4d$, it will not switch to \rec before time
$\min\{t_W',t_W-\Delta_g+(T_1+T_2+T_3+T_5)/\vartheta-4d\}$ unless it switches to
\acc first. However, as it resets its \prop and \acc flags before switching to
\rdy, it cannot switch to \acc before at least $f$ nodes from $W$ switched to
\prop (unless switching to \rec first). Moreover, by time $t_W'$, it will
already have switched to \rdy since
\begin{equation*}
t_W'\geq t_W+\frac{T_2+T_3}{\vartheta}\geq
t_W+\left(3-\frac{1}{\vartheta}\right)T_1+\frac{T_2}{\vartheta}+8d.
\end{equation*}
Hence, reasoning analogously to the proof of \theoremref{theorem:stability},
$t_W'$ is in fact a $W\cup\{i\}$-stabilization point provided that $i$ switches
to \acc instead of \rec first. This in turn follows from the bound
\begin{eqnarray*}
&&t_W-\Delta_g+\frac{T_1+T_2+T_3+T_5}{\vartheta}-4d\\
&\sr{eq:T_5}{\geq}&t_W-\Delta_g+T_1+T_2+T_4+\frac{T_2+T_3-T_6}{\vartheta}-4d\\
&\sr{eq:T_3}{>}&t_W-\Delta_g+(2\vartheta+5)T_1+T_2+T_4+3d\\
&\sr{eq:T_1}{>}& t_W+T_2+T_4+7d\\
&>& t_W'+2d,
\end{eqnarray*}
where in the last step we used that $t_W'< t_W+T_2+T_4+5d$ according to
\theoremref{theorem:stability}. This shows that $T_5$ does not expire at $i$
while it is in \prop before time $t_W'+2d$. Hence, $t_W'$ is a
$W\cup\{i\}$-stabilization point.

\textbf{Case 2:} Node $i$ switches to \srr at a time $t'\in
[t^-,t+R_1+(\vartheta-1)T_1+2T_2+2T_4+18d]$. Denote by $t_W$ and $t_W'$ the
maximal $W$-stabilization point smaller than $t'$ and the minimal
$W$-stabilization point larger than $\max\{t',t_W+2d\}$, which exist by
\theoremref{theorem:stability}. Denote by $t_\slp$ the minimal time larger than
$t_W$ when a node from $W$ switches to \slp. Analogously\footnote{Note that we
can apply \lemmaref{lemma:sleep_one} even if $i$ switches to \join, as we can
simply replace the set $A$ by $W$.} to Case 1b, $t_W'$ is a
$W\cup\{i\}$-stabilization point if $i$ is in state \slp at time
$t_\slp+2T_1+3d$. Hence, assume w.l.o.g.\ that $i$ is in state \rec or already
switched to \join by this time. Analogously to Case 1a, $t_W'$ will be a $W\cup
\{i\}$-quasi-stabilization point if it stays in \rec until time $t_W'+3d$.
Therefore, w.l.o.g., $i$ switches to \join at some time during $(t',t_W'+3d)$,
implying that it will leave the state no later than time $t_W'+4d$ and switch to
state \acc by time $t_W'+5d$.

Now either $i$ continues to execute the basic cycle and thus will, analogously
to Case 1b, participate in the minimal $W$-stabilization point $t_W''> t_W'+2d$,
or it will switch to \rec again. In the latter case, it cannot switch back to
\join until at least time $t'+R_1/\vartheta$ because it needs to reset its \join
flags first, which happens upon switching to \pass only. As we have that
\begin{eqnarray*}
t'+\frac{R_1}{\vartheta}&\sr{eq:R_1}{\geq}&
t'-\frac{2T_1}{\vartheta}+2T_2+2T_4+T_5+7d\\
&\sr{eq:T_5}{>}&t'-\frac{2T_1}{\vartheta}+3T_2+3T_4-T_6+7d\\
&\stackrel{(\ref{eq:T_3},\ref{eq:T_4})}{>}& t'+2T_2+2T_4+14d\\
&\geq & t_W''+4d,
\end{eqnarray*}
$i$ cannot leave state \rec through \join again before time $t_W''+4d$.
Therefore, $t_W''$ is a $W\cup\{i\}$-quasi-stabilization point, analogously to
Case 1a.

We have shown that there is some $W\cup\{i\}$-quasi-stabilization at the latest
by time
\begin{eqnarray*}
t_W''&\leq & t'+2T_2+2T_4+10d\\
&\leq & t^-+R_1+(\vartheta-1)T_1+4T_2+4T_4+28d
\end{eqnarray*}
in Case 2, while in Case 1 there is a quasi-stabilization point no later than
time $t^-+R_1+(\vartheta-1)T_1+2T_2+2T_4+18d$. By
\theoremref{theorem:stability}, this implies a $W\cup\{i\}$-stabilization point
by time
\begin{equation*}
t^-+R_1+(\vartheta-1)T_1+5T_2+5T_4+23d<t^-+\left(1+\frac{5}{2\vartheta}\right)R_1,
\end{equation*}
where the estimate is obtained analogously to the bound
$t'+R_1/\vartheta>t_W''+4d$ shown above. This concludes the proof, as indeed
there is a $W\cup\{i\}$-stabilization point no later than time
$t^-+(1+5/(2\vartheta))R_1$.
\end{proof}

\section{Generalizations}\label{sec:generalizations}

This section provides a few extensions of the core results derived in the
previous section. In particular, we show that it is not necessary to map faulty
channels to, for example, faulty nodes (thus rendering a non-faulty node
effectively faulty in terms of results), that the algorithm can tolerate an even
stronger adversary than defined in \sectionref{sec:model} without significant
change of stabilization time, and that in many reasonable setting stabilization
takes $\BO(R_1)$ time only, even if there is no majority of non-faulty nodes
that is already synchronized. With the exception of
\corollaryref{coro:quick_simple}, we again follow \cite{DFLS11:TR} during this
section.

\subsection{Synchronization Despite Faulty Channels}
\theoremref{theorem:stabilization} and our notion of coherency require that all
involved nodes are connected by correct channels only. However, it is desirable
that non-faulty nodes synchronize even if they are not connected by correct
channels. To capture this, the notions of coherency and stability can be
generalized as follows.
\begin{definition}[Weak Coherency]
We call the set $C\subseteq V$ \emph{weakly coherent} during $[t^-,t^+]$, iff for
any node $i\in C$ there is a subset $C'\subseteq C$ that contains $i$, has size
$n-f$, and is coherent during $[t^-,t^+]$.
\end{definition} 
In particular, if there are in total at most $f$ nodes that are faulty
or have faulty outgoing channels, then the set of non-faulty nodes is
(after some amount of time) weakly coherent.
\begin{corollary}\label{coro:weak_stabilization}
For each $k\in\N$ let $T'(k) :=
T(k)-((\vartheta(R_2+3d)+8(1-\lambda)R_2+d))$, where $T(k)$ is
defined as in \corollaryref{coro:stabilization}.
Suppose the subset of nodes $C\subseteq V$ is weakly coherent during
the time interval $[t^-,t^+] \supseteq [t^-
+T'(k)+T_2+T_4+8d,t^+] \neq \emptyset$.
Then, with probability at least $1-(f+1)/2^{k(n-f)}$, there is a
$C$-quasi-stabilization point $t\leq t^- + T'(k) + T_2+T_4+5d$
such that the system is weakly $C$-coherent during $[t,t^+]$.
\end{corollary}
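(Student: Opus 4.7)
The strategy is to cover $C$ by at most $f+1$ coherent subsets $C_1,\dots,C_m$ of size $n-f$ each, apply \corollaryref{coro:stabilization} to each via a union bound, and then use their large pairwise overlaps to merge the individual stabilization events into a single $C$-quasi-stabilization point.

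\emph{Constructing the cover.} By weak coherency, for each $i \in C$ there is a coherent $C'_i \subseteq C$ with $|C'_i|=n-f$ and $i\in C'_i$. Pick $C_1:=C'_{i_1}$ for an arbitrary $i_1\in C$; while $C\setminus\bigcup_{\ell\le j}C_\ell$ is non-empty, select $i_{j+1}$ from this set and put $C_{j+1}:=C'_{i_{j+1}}$. Since $|C|\le n$ and $|C_1|=n-f$, the process terminates with $m\le f+1$ sets whose union covers $C$.

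\emph{Union bound on the $m$ subsets.} Each $C_j$ is coherent during $[t^-,t^+]$, so \corollaryref{coro:stabilization} supplies, with probability at least $1-2^{-k(n-f)}$, a $C_j$-stabilization point by time $t^-+T'(k)$. The tighter bound $T'(k)$ (instead of $T(k)$) is obtained by re-examining the proof of \corollaryref{coro:stabilization}: the $\hat{E}_3$ buffer $T(k)-T'(k)$ used there to start \theoremref{theorem:resync} at $t_c^-=t^-+\hat{E}_3$ is already absorbed by the backward margin built into the definition of coherency, so \theoremref{theorem:resync} can be invoked at $t^-$ itself. A union bound over the $m\le f+1$ applications then yields joint stabilization with probability at least $1-(f+1)/2^{k(n-f)}$.

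\emph{Aligning into a $C$-quasi-stabilization point.} For any $i,j$, $|C_i\cap C_j|\ge 2(n-f)-n\ge f+1$. Since shared nodes switch to \acc at fixed instants that lie inside both the $C_i$- and the $C_j$-stabilization $2d$-window at every pulse, the two stabilization instants are within $2d$ of each other at each subsequent pulse. Let $t_1\le t^-+T'(k)$ be the first $C_1$-stabilization point; by \theoremref{theorem:stability}, the next $C_1$-stabilization point $t$ lies in $(t_1+(T_2+T_3)/\vartheta,\,t_1+T_2+T_4+5d)$. At $t$, the threshold ``$\ge f+1$~\acc'' on the $\prop\to\acc$ transition, together with the fact that all nodes of $C$ have been driven into \prop during the preceding cycle by the cross-subset firings, forces every node in $C$ to switch to \acc within one channel delay of the earliest intersection firing, compressing the global \acc-window to $3d$. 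Hence $t$ is a $C$-quasi-stabilization point with $t\le t^-+T'(k)+T_2+T_4+5d$. Weak coherency of $C$ on $[t,t^+]$ is directly inherited from the hypothesis.

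\emph{Main obstacle.} The delicate step is tightening the naive $4d$ bound---obtained from the union of two overlapping $2d$ windows anchored by the $\ge f+1$ shared nodes---down to the $3d$ width required for a quasi-stabilization point. Achieving this needs the algorithm's threshold feedback: once $\ge f+1$ shared nodes have switched to \acc, any $C$-node still in \prop must follow within one channel delay, truncating the trailing tail of the firing window. Establishing that all of $C$ is actually in \prop at this instant (rather than in \rec or \join) is the most technical part and hinges on the synchrony inherited from the preceding pulse, which is precisely why the stated bound contains the extra cycle-length term $T_2+T_4+5d$ beyond the individual stabilization time $T'(k)$.
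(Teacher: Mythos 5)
Your overall architecture (cover $C$ by at most $f+1$ coherent sets of size $n-f$, union bound via \corollaryref{coro:stabilization}, then merge using the $\geq f+1$ pairwise overlaps and the $\prop\to\acc$ threshold) is the same as the paper's, and your justification for replacing $T(k)$ by $T'(k)$ is sound. But the merging step as you wrote it has a genuine gap: you anchor the argument at $t_1$, the \emph{first} $C_1$-stabilization point, and claim that at the \emph{next} $C_1$-stabilization point $t$ ``all nodes of $C$ have been driven into \prop during the preceding cycle by the cross-subset firings.'' Nothing guarantees this. The union bound only gives that each $C_j$ has some stabilization point by time $t^-+T'(k)$, which may be long after $t_1+T_2+T_4+5d$; at time $t$ the nodes of a not-yet-stabilized $C_j$ (outside $C_1$) may be in \rec, \join, or anywhere in their stabilization process, so they need not switch to \acc in $[t,t+3d)$, and $t$ need not be a $C$-quasi-stabilization point. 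Your auxiliary claim that ``the two stabilization instants are within $2d$ of each other at each subsequent pulse'' also presupposes that both subsets have already stabilized, which is exactly what is not yet available at $t$.

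The paper avoids this by anchoring at the \emph{latest} stabilization event instead of the earliest: it requires each $C_i$ to have a stabilization point by the earlier deadline $t^-+T'(k)-(T_2+T_4+5d)$, lets $t_{i_0}$ be the maximum over $i$ of the latest such points, and takes $t_{i_0}'$ to be the next $C_{i_0}$-stabilization point. Maximality guarantees every $C_i$ has stabilized by $t_{i_0}$, so \theoremref{theorem:stability} gives each $C_i$ another stabilization point within $T_2+T_4+5d$; the quiet periods of $C_{i_0}$ (no \acc\ switches strictly between its stabilization points) together with $C_i\cap C_{i_0}\neq\emptyset$ then force all these points into $(t_{i_0}'-2d,t_{i_0}'+2d)$. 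Only after this alignment does the threshold argument you sketch (the $\geq f+1$ shared nodes observed in \prop/\acc, plus timing bounds showing every node of each $C_j$ is already in \rdy, \prop, or \acc) compress the window to $3d$. So your ``main obstacle'' paragraph identifies the right final mechanism, but the missing idea is the maximality/late-anchoring step that makes all subsets simultaneously post-stabilization before that mechanism can be invoked; with your choice of $t_1$ the argument fails.
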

\begin{proof}
By the definition of weak coherency, every node in $C$ is in some
coherent set $C'\subseteq C$ of size $n-f$.
Hence, for any such $C'$ it holds that we can cover all nodes in $C$
by at most $1+|V \setminus C'| \leq f+1$ coherent sets
$C_1,\ldots,C_{f+1}\subseteq C$.
By \corollaryref{coro:stabilization} and the union bound, with
probability at least $1-(f+1)/2^{k(n-f)}$, for each of these sets
there will be at least one stabilization point during
$[t^-,t^-+T'(k)-(T_2+T_4+5d)]$.
Assuming that this is indeed true, denote by $t_{i_0}\in
[t^-,t^-+T'(k)-(T_2+T_4+5d)]$ the time
\begin{eqnarray*}
&&\max_{i\in \{1,\ldots,f+1\}}\{\max \{t\leq t^-+T'(k)-(T_2+T_4+5d)\\
&&~~~~~~~~~~~~~~~|\,t\mbox{ is a $C_i$-stabilization point}\}\},
\end{eqnarray*}
where $i_0\in \{1,\ldots,f+1\}$ is an index for which the first maximum is
attained and $t_{i_0}$ is the respective maximal time, i.e., $t_{i_0}$ is a
$C_{i_0}$-stabilization point.

Define $t_{i_0}'\in (t_{i_0}+2d,t^-+T'(k)]$ to be minimal such that it is
another $C_{i_0}$-stabilization point. Such a time must exist by
\theoremref{theorem:stability}. Since the theorem also states that no node from
$C_{i_0}$ switches to state \acc\ during $[t_{i_0}+2d,t_{i_0}')$ and $C_i\cap
C_{i_0}\neq \emptyset$, there can be no $C_i$-stabilization point during
$(t_{i_0}+2d,t_{i_0}'-2d)$ for any $i\in \{1,\ldots,f+1\}$. Applying the theorem
once more, we see that there are also no $C_i$-stabilization points during
$(t_{i_0}'+2d,t_{i_0}'+(T_2+T_3)/\vartheta)-2d$ for any $i\in
\{1,\ldots,f+1\}$. On the other hand, the maximality of $t_{i_0}$ implies that
every $C_i$ had a stabilization point by time $t_{i_0}$. Applying
\theoremref{theorem:stability} to the latest stabilization point until time
$t_{i_0}$ for each $C_i$, we see that it must have another stabilization
point before time $t_{i_0}+T_2+T_4+5d$. We have that
\begin{equation*}
\frac{2(T_2+T_3)}{\vartheta}-2d
\sr{eq:T_2}{>}\frac{T_2+T_3+T_5}{\vartheta}\sr{eq:T_5}{>}T_2+T_4+5d,
\end{equation*}
i.e., all $C_i$ have stabilization points within a short time interval of
$(t_{i_0}'-2d,t_{i_0}'+2d)$. Arguing analogously about the previous
stabilization points of the sets $C_i$ (which exist because $t_{i_0}$ is
maximal), we infer that all $C_i$ had their previous stabilization point
during $(t_{i_0}-2d,t_{i_0}+2d)$.

Now suppose $t_a$ is the minimal time in $(t_{i_0}'-2d,t_{i_0}'+2d)$ when a node
from $C$ switches to \acc\ and this node is in set $C_i$ for some $i\in
\{1,\ldots,f+1\}$. As usual, there must be at least $f+1$ non-faulty nodes from
$C_i$ in state \prop\ at time $t_a$ and by time $t_a+d$, all nodes from $C_i$
will be observed in either of the states \prop\ or \acc. As $|C_i\cap C_j|\geq
f+1$ for any $j\in \{1,\ldots,f+1\}$, all nodes in $C_j$ will observe at
least $f+1$ nodes in states \prop or \acc at times in $(t_a,t_a+2d)$. We have
that $t_a\geq t_{i_0}+(T_2+T_3)/\vartheta-2d$ according to
\theoremref{theorem:stability}. As no nodes switched to state \acc\ during
$(t_{i_0}+2d,t_a)$ and none of them switch to state \rec\
(cf.~\theoremref{theorem:stability}), for any $j$ we can bound
\begin{eqnarray*}
t_a+d&\geq & t_i+d+\frac{T_2+T_3}{\vartheta}\\
&>& t_j-3d+\frac{T_2+T_3}{\vartheta} \\
&\sr{eq:T_3}{>}& t_j+T_2+3d
\end{eqnarray*}
that all nodes from $C_j$ are in one of the states \rdy, \prop, or \acc at time
$t_a+d$. Hence, they will switch from \rdy\ to \prop\ if they still are in \rdy\
before time $t_a+2d$. Less than $d$ time later, all nodes in $C_j$ will memorize
all nodes in $C_j$ in state \prop\ and therefore switch to \acc\ if not done so
yet. Since $j$ was arbitrary, it follows that $t_a$ is a $C$-quasi-stabilization
point.
\end{proof}

\begin{corollary}\label{coro:weak_stability}
Suppose $C\subseteq V$ is weakly coherent during $[t^-,t^+]$ and $t\in
[t^-,t^+-(T_2+T_4+8d)]$ is a $C$-quasi-stabilization point. Then
\begin{itemize}
  \item [(i)] all nodes from $C$ switch to \acc\ exactly once within $[t,t+3d)$;
  \item [(ii)] there will be a $C$-quasi-stabilization point
  $t'\in[t+(T_2+T_3)/\vartheta,t+T_2+T_4+5d)$ satisfying that no
  nodes switch to \acc\ in the time interval $[t+3d,t')$;
  \item [(iii)] and each node $i$'s, $i\in C$, main state machine
  (\figureref{fig:main_simple}) is metastability-free during $[t+4d,t'+4d)$.
\end{itemize}
\end{corollary}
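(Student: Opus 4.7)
My plan is to reduce the weak-coherency version to the strong-coherency version already proved in \theoremref{theorem:stability}, exploiting the same decomposition used in the proof of \corollaryref{coro:weak_stabilization}. Since $C$ is weakly coherent, every node in $C$ lies in some coherent subset of size $n-f$, so we can cover $C$ by at most $f+1$ coherent subsets $C_1,\ldots,C_{f+1}\subseteq C$, each of size $n-f$, with the crucial property that $|C_i\cap C_j|\geq n-2f\geq f+1$ for all $i,j$. Because $t$ is a $C$-quasi-stabilization point, it is also a $C_i$-quasi-stabilization point for every $i$, and I can apply \theoremref{theorem:stability} to each $C_i$ individually.

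Statement~(i) then follows immediately: \theoremref{theorem:stability}(i) applied to any $C_i$ containing node $j$ gives that $j$ switches to \acc\ exactly once during $[t,t+3d)$. Statement~(iii) is equally direct: each node $j\in C$ is contained in some $C_i$, and \theoremref{theorem:stability}(iii) guarantees metastability-freedom of $j$'s core state machine on $[t+3d,t'_i+3d]\supseteq [t+4d,t'+4d)$, provided the stabilization point $t'$ of $C$ I construct below lies no later than all the $t'_i$ plus a suitable slack; this will be transparent from the construction.

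The heart of the proof, and the only real obstacle, is Statement~(ii). Applying \theoremref{theorem:stability}(ii) to each $C_i$ produces a $C_i$-stabilization point $t'_i\in (t+(T_2+T_3)/\vartheta, t+T_2+T_4+5d)$ during which no node of $C_i$ switches to \acc\ in $[t+3d,t'_i)$. I then define $t'$ to be the infimum over $i$ of the times in $(t+(T_2+T_3)/\vartheta,t+T_2+T_4+5d)$ when some node of $C$ switches to \acc, and I claim $t'$ is a $C$-quasi-stabilization point. The argument mirrors the cascading step at the end of the proof of \corollaryref{coro:weak_stabilization}: the first node to switch at time $t'$ belongs to some $C_{i_0}$ and, by the proof of \theoremref{theorem:stability}(ii), can only do so because it memorizes at least $f+1$ nodes from $C_{i_0}$ in state \prop. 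Since $|C_{i_0}\cap C_j|\geq f+1$, for every $j$ at least one node of $C_j$ is in \prop\ or \acc\ at time $t'$. Using the timeout bound \eqref{eq:T_3} together with the fact that every $C_j$ already had its own stabilization point at time $t'_j$ within a window of $4d$ around $t'$ (by \theoremref{theorem:stability}(ii) applied separately in each $C_j$), I get that all nodes of $C_j$ are in states \rdy, \prop, or \acc\ at time $t'+d$. The threshold of $f+1$ nodes in \prop\ or \acc\ then propagates: every node of $C_j$ switches to \prop\ by $t'+2d$, and subsequently the threshold of $n-f$ \prop-or-\acc\ flags is met at every node of $C_j$ by time $t'+3d$, forcing a switch to \acc.

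The main obstacle is precisely this last synchronization step, because unlike the coherent case I cannot directly rely on a single set of size $n-f$ being simultaneously in \prop. The workaround is the pairwise intersection bound $|C_i\cap C_j|\geq f+1$, which lets me transport the ``$f+1$ nodes in \prop'' threshold from $C_{i_0}$ into every $C_j$, and from there use the ordinary basic-cycle argument of \theoremref{theorem:stability} inside each $C_j$ to close the loop. Once (ii) is established, (iii) drops out because in each $C_i$ the basic-cycle arguments of \theoremref{theorem:stability}(iii) apply verbatim on $[t+3d,t'_i+3d]$, and the definition of $t'$ ensures $t'+4d$ lies inside all such intervals.
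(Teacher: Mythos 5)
Your overall route is the paper's own: the paper proves this corollary merely ``analogously to \theoremref{theorem:stability} and \corollaryref{coro:weak_stabilization}'', i.e., cover $C$ by at most $f+1$ coherent sets of size $n-f$, apply the theorem inside each set starting from the common quasi-stabilization point $t$, and cascade the \prop\ threshold through the pairwise intersections of size at least $n-2f\ge f+1$. Your treatment of (i) and the skeleton of (ii) match this. However, two of your intermediate justifications do not hold as written. First, the claim that every $C_j$ has its stabilization point $t'_j$ within $4d$ of $t'$ ``by \theoremref{theorem:stability}(ii) applied separately in each $C_j$'' is not implied by the theorem: it only places each $t'_j$ somewhere in $(t+(T_2+T_3)/\vartheta,\,t+T_2+T_4+5d)$, an interval of length on the order of $T_4\gg 4d$, and asserting the clustering before the cascade is proved is circular. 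It is also unnecessary: what you actually need is that every node of $C_j$ is in \rdy, \prop, or \acc\ at time $t'+d$, and this follows exactly as in the proof of \corollaryref{coro:weak_stabilization} from the common point $t$ --- the theorem's proof applied to the coherent set $C_j$ shows all its nodes are observed in \rdy\ by $t+T_2+5d$ and cannot switch to \rec\ (cf.~\eqref{eq:acc_on_time}) nor to \acc\ before $t'$, while $t'>t+(T_2+T_3)/\vartheta>t+T_2+5d$ by \eqref{eq:low}.

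Second, ``at least one node of $C_j$ is in \prop\ or \acc\ at time $t'$'' is too weak a pivot: a single memorized \prop\ triggers nothing, the $f+1$ \prop-nodes of $C_{i_0}$ need not lie in $C_j$, and under weak coherency channels from $C_{i_0}\setminus C_j$ to $C_j$ are not guaranteed correct. The correct transport (which your closing summary gestures at but your written chain skips) is: the $f+1$ nodes of $C_{i_0}$ in \prop\ cause, within $d$, every node of $C_{i_0}$ still in \rdy\ --- in particular every node of $C_{i_0}\cap C_j$, at least $f+1$ many --- to switch to \prop; since these nodes lie in $C_j$, their channels to all of $C_j$ are correct, so by $t'+2d$ every node of $C_j$ memorizes $f+1$ nodes in \prop\ or \acc, and the intra-$C_j$ cascade completes by $t'+3d$ (one must also note, via the theorem's proof, that no node of $C_j$ resets its \prop/\acc\ flags in this window, and handle the boundary case where the first switch happens because $T_4$ expires). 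Finally, for (iii) the containment $[t+3d,t'_i+3d]\supseteq[t+4d,t'+4d)$ can fail by up to $d$ when $t'_i=t'$; you need to rerun the theorem's metastability argument with the $3d$ windows appropriate for quasi-stabilization points (the slack in \eqref{eq:acc_on_time} suffices) rather than quote its statement. All of these repairs use only the machinery already in the proofs of \theoremref{theorem:stability} and \corollaryref{coro:weak_stabilization}, so the plan is sound, but as written these steps are gaps.
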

\begin{proof}
Analogously to the proofs of \theoremref{theorem:stability} and
\corollaryref{coro:weak_stabilization}.
\end{proof}
We point out that one cannot get stronger results by the proposed technique.
Even if there are merely $f+1$ failing channels, this can e.g.\ effectively
render a node faulty (as it may never see $n-f$ nodes in states \prop\ or \acc)
or exclude the existence of a coherent set of size $n-f$ (if the channels
connect $f+1$ disjoint pairs of nodes, there can be no subset of $n-f$ nodes
whose induced subgraph contains correct channels only). Stronger resilience to
channel faults would necessitate to propagate information over several hops in a
fault-tolerant manner, imposing larger bounds on timeouts and weaker
synchronization guarantees.

Combination of \corollaryref{coro:weak_stabilization} and
\corollaryref{coro:weak_stability} finally yields:

\begin{corollary}\label{coro:final_weak}
Let $C \subseteq V$ be such that, for each $i\in C$, there is a set
$C_i \subseteq C$ with $|C_i| = n-f$, and let $E =\bigcup_{i\in
C} C_i^2$. Then, for any $k\in\N$, the proposed algorithm is a
$(C,E)$-stabilizing pulse synchronization protocol with skew $3d$ and accuracy
bounds $(T_2+T_3)/\vartheta-3d$ and $T_2+T_4+8d$ stabilizing within time
$T(k)+T_2+T_4+5d$ with probability at least $1-(f+1)/2^{k(n-f)}$.
\end{corollary}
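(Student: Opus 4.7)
The plan is to chain \corollaryref{coro:weak_stabilization} (first pulse) with an inductive application of \corollaryref{coro:weak_stability} (subsequent pulses), and then unpack the definition of a $C$-quasi-stabilization point to read off the skew and accuracy bounds required by \defref{def:pulse}. The hypothesis that, for each $i\in C$, the set $C_i\subseteq C$ has size $n-f$ and all channels in $C_i^2\subseteq E$ are correct during $[t^-,t^+]$ is precisely what is needed to conclude that $C$ is weakly coherent during $[t^-+(\vartheta(R_2+3d)+8(1-\lambda)R_2+d),t^+]$: every $i\in C$ is contained in some coherent $C_i$ of size $n-f$.

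First I would invoke \corollaryref{coro:weak_stabilization} with parameter $k$. Since $T(k)=T'(k)+(\vartheta(R_2+3d)+8(1-\lambda)R_2+d)$, this yields, with probability at least $1-(f+1)/2^{k(n-f)}$, a $C$-quasi-stabilization point $t_0\in[t^-,t^-+T(k)+T_2+T_4+5d-(T_2+T_4+5d)]$, i.e., the first pulse occurs well within the claimed stabilization time. Next I would define the sequence $t_0<t_1<t_2<\dots$ of $C$-quasi-stabilization points inductively: given $t_\ell$ with $t_\ell+T_2+T_4+8d\le t^+$, \corollaryref{coro:weak_stability} produces $t_{\ell+1}\in[t_\ell+(T_2+T_3)/\vartheta,t_\ell+T_2+T_4+5d)$ such that no node in $C$ switches to \acc\ during $[t_\ell+3d,t_{\ell+1})$ and all nodes in $C$ switch to \acc\ exactly once in $[t_{\ell+1},t_{\ell+1}+3d)$. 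Weak coherency of $C$ along the whole interval guarantees that every invocation of \corollaryref{coro:weak_stability} is legal.

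Finally I would read off the three properties of \defref{def:pulse}. Taking $t_s:=t_0-\varepsilon$ for any small $\varepsilon>0$ (or equivalently letting $t_s$ be the instant immediately before the first pulse), property~(i) and the skew bound $\Sigma=3d$ follow directly from the fact that all nodes in $C$ switch to \acc\ within the window $[t_\ell,t_\ell+3d)$ at each pulse. For the accuracy bounds, consider any $i\in C$ and the times $t_i(\ell),t_i(\ell+1)$ when it switches to \acc\ during the $\ell^{\text{th}}$ and $(\ell+1)^{\text{st}}$ pulse: both lie in windows of length $3d$ around $t_\ell$ and $t_{\ell+1}$, respectively, so
\[
\frac{T_2+T_3}{\vartheta}-3d \;\le\; t_i(\ell+1)-t_i(\ell) \;\le\; (T_2+T_4+5d)+3d \;=\; T_2+T_4+8d,
\]
giving $T^-=(T_2+T_3)/\vartheta-3d$ and $T^+=T_2+T_4+8d$ (and $T^->\Sigma$ is guaranteed by \inequalityref{eq:T_3}).

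The only nontrivial point—and hence the main bookkeeping obstacle—is verifying that the $3d$ spreads at the endpoints of each pulse correctly inflate/deflate the inter-stabilization-point gap from \corollaryref{coro:weak_stability} into a per-node accuracy interval, and that the weak-coherency horizon $t^+$ is respected throughout the induction (so that every invocation of \corollaryref{coro:weak_stability} satisfies its $t\in[t^-,t^+-(T_2+T_4+8d)]$ hypothesis). Once these timing bookkeeping items are checked, all three conditions of \defref{def:pulse} hold simultaneously, completing the proof.
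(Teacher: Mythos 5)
Your proposal takes essentially the same route as the paper: the text introduces \corollaryref{coro:final_weak} precisely as the ``combination'' of \corollaryref{coro:weak_stabilization} (first $C$-quasi-stabilization point, with the union-bound probability $1-(f+1)/2^{k(n-f)}$ after waiting out the coherency offset $\vartheta(R_2+3d)+8(1-\lambda)R_2+d$) and an inductive application of \corollaryref{coro:weak_stability}, followed by reading off skew $3d$ and the accuracy bounds from the $3d$-wide switching windows, exactly as you do. One small arithmetic remark: applying \corollaryref{coro:weak_stabilization} from the time weak coherency begins places the first quasi-stabilization point only by $t^-+T(k)+T_2+T_4+5d$, not by $t^-+T(k)$ as your interval expression suggests, but this is exactly the stabilization time claimed in the corollary, so the argument goes through unchanged.
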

\begin{proof}
Analogously to the proof of \corollaryref{coro:stabilization}
\end{proof}

\subsection{Stronger Adversary}
So far, our analysis considered a fixed set $C$ of coherent (or weakly coherent)
nodes. But what happens if whether a node becomes faulty or not is not
determined upfront, but depends on the execution? Phrased differently, does the
algorithm still stabilize quickly with a large probability if an adversary may
``corrupt'' up to $f$ nodes, but may decide on its choices as time progresses,
fully aware of what happened so far? Since we operate in a system where all
operations take positive time, it might even be the case that a node might fail
just when it is about to perform a certain state transition, and would not have
done so if the execution had proceeded differently. Due to the way we use
randomization, this however makes little difference for the stabilization
properties of the algorithm.
\begin{corollary}
Suppose at every time $t$, an adversary has full knowledge of the state of the
system up to and including time $t$, and it might decide on in total up to $f$
nodes (or all channels originating from a node) becoming faulty at arbitrary
times. If it picks a node at time $t$, it fully controls its actions after and
including time $t$. Furthermore, it controls delays and clock drifts of
non-faulty components within the system specifications, and it initializes the
system in an arbitrary state at time $0$. For any $k\in \N$, define $t_k$ as
\begin{equation*}
(k+3)(\vartheta(R_2+3d)
+8(1-\lambda)R_2+d)+R_1/\vartheta+T_2+T_4+5d.
\end{equation*}
Then the set of all nodes that remain non-faulty until time $t_k$ reaches a
quasi-stabilization point during $[\hat{E}_3,t_k]$ with probability at least
\begin{equation*}
1-(f+1)e^{-k(n-f)/2}.
\end{equation*}
Moreover, at any time $t\geq \hat{E}_3$, the set of nodes that are non-faulty at
time $t$ is coherent. 
\end{corollary}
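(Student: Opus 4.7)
The plan is to lift \corollaryref{coro:stabilization} (and its weakly coherent variant \corollaryref{coro:final_weak}) from a fixed coherent set to a set determined on the fly by the adversary. First I would observe that if $W(t_k)\subseteq V$ denotes the set of nodes which are non-faulty throughout $[0,t_k]$, then $|W(t_k)|\geq n-f$ by the adversary's budget constraint, and $W(t_k)$ is coherent during $[\hat{E}_3,t_k]$ by the same timeout-reset argument that underlies \lemmaref{lemma:counters} (the extra $\hat{E}_3$ padding over \corollaryref{coro:stabilization} is precisely what ensures that every timeout has been reset at least once while the channels and nodes in $W(t_k)$ were already non-faulty). Consequently, if a good $W(t_k)$-resynchronization point occurs in $[\hat{E}_3,t_k-R_1/\vartheta-T_2-T_4-5d]$, \theoremref{theorem:stabilization} followed by \theoremref{theorem:stability} furnishes the desired quasi-stabilization point inside $[\hat{E}_3,t_k]$.

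The core of the proof, therefore, lies in reproving \theoremref{theorem:resync} against the adaptive adversary. I would revisit its induction and note that the probability estimate of $1/2$ for each $R_3$-firing to be a ``good time'' uses only two properties: continuity of the clock driving $R_3$ and independence of that firing from the execution up to the firing. Both remain intact against an adaptive adversary, because the adversary cannot influence or foresee when $R_3$ fires at a node that is non-faulty at that moment. Thus for every firing of $R_3$ at a node $i$ that is still non-faulty when it fires, I can associate an independent Bernoulli$(\geq 1/2)$ ``success'' variable indicating whether the firing is good in the current execution.

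What the adaptive adversary \emph{can} do is corrupt node $i$ shortly \emph{after} its $R_3$ fires, thereby making the firing count for nothing. This corruption, however, consumes one unit of the adversary's corruption budget, and the total budget is only $f$. Over the interval $[\hat{E}_3,\hat{E}_3+(k+1)\hat{E}_3]$, each node contributes at least $k$ firings of $R_3$, so among nodes that ultimately remain in $W(t_k)$ there are at least $k(n-f)$ firings; among all firings by nodes that are non-faulty \emph{at the firing time} there are at least $k(n-f)+k$ firings, and at most $f$ of them can be voided by a subsequent corruption. The Bernoulli variables for those not voided are mutually independent with success probability $\geq 1/2$. A standard Chernoff-type bound on the probability that none of them succeeds then yields the factor $e^{-k(n-f)/2}$; the additional factor $(f+1)$ is absorbed by a union bound over the at most $f+1$ coherent $(n-f)$-subsets that may cover $W(t_k)$, exactly as in the proof of \corollaryref{coro:weak_stabilization}.

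The main obstacle I anticipate is making the independence of the coin flips under the adaptive adversary fully rigorous, in particular carefully specifying the filtration with respect to which each ``success'' bit is measurable and with respect to which the adversary's corruption decisions are made. A secondary, mostly bookkeeping, difficulty is verifying the arithmetic that the padded window $t_k=(k+3)\hat{E}_3+R_1/\vartheta+T_2+T_4+5d$ is long enough to accommodate $(k+1)\hat{E}_3$ worth of $R_3$-firings, plus the worst-case $R_1/\vartheta$ stabilization tail from \theoremref{theorem:stabilization}, plus the additional $T_2+T_4+5d$ slack needed to convert a stabilization point into a quasi-stabilization point in the weakly coherent setting (cf.~\corollaryref{coro:weak_stabilization}); all of these estimates are routine extensions of the ones already used in \corollaryref{coro:stabilization} and \corollaryref{coro:final_weak}.
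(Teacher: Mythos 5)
Your overall route is the paper's: re-prove \theoremref{theorem:resync} against the adaptive adversary by treating the $R_3$-firings of not-yet-corrupted nodes as coin flips that are good with probability at least $1/2$ independently of the past, charge the adversary one corruption per good firing it wants to void, and then pass through the machinery of \corollaryref{coro:weak_stabilization} (union bound over at most $f+1$ coherent covering sets, whence the factor $f+1$ and the quasi-stabilization target --- note in passing that your opening claim that $W(t_k)$ is \emph{coherent} is not right, since the adversary may corrupt only a node's outgoing channels; it is only weakly coherent, which is exactly why that union bound is needed). However, your execution of the key probabilistic step does not give the stated bound. First, the counting: over your window $[\hat{E}_3,(k+2)\hat{E}_3]$ you only get $k$ counted firings per node, hence at least $k(n-f)$ in total (your ``$k(n-f)+k$'' is false --- the adversary may corrupt its $f$ nodes at time $0$, contributing no counted firings), and with only $k(n-f)$ flips the event ``at most $f$ of them are good'' has probability bounded only by roughly $e^{-(k-1)(n-f)/2}$, not $e^{-k(n-f)/2}$: the budget of $f$ good firings the adversary may void eats into the exponent and has to be paid for. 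This is precisely why the paper uses the full window $[\hat{E}_3,(k+3)\hat{E}_3]$ afforded by $t_k$, obtaining $k+1$ counted firings per node, i.e., $(k+1)(n-f)$ flips, and then bounds $P[X\le f]\le P[X<(n-f)/2]<e^{-k(n-f)/2}$ via Chernoff with $\delta=k/(k+1)$, using $f<(n-f)/2$ (from $n>3f$).

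Second, your formulation ``the Bernoulli variables for those not voided are mutually independent \dots the probability that none of them succeeds'' is not sound as stated: which flips are ``not voided'' is decided by the adversary \emph{after} seeing the outcomes (it voids exactly the successes), so an independence/Chernoff bound cannot be applied to that adaptively selected subset. The correct event --- and the one the paper bounds --- is that \emph{at most $f$ of all counted flips are good}: if more than $f$ are good, some good firing belongs to a node the adversary never corrupts, and since \lemmaref{lemma:good} holds for every execution with at most $f$ faults (so a corruption affects only that node's current and future trials), that firing yields a good resynchronization point and, by \theoremref{theorem:stabilization}, deterministic stabilization. Since each counted flip is good with probability at least $1/2$ conditionally on the entire past (including the adversary's decisions so far), the number of good flips dominates a binomial and the lower tail at threshold $f$ gives the claimed $e^{-k(n-f)/2}$. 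With these two repairs --- the longer firing window and the ``at most $f$ successes among all counted flips'' event --- your argument coincides with the paper's proof.
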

\begin{proof}
The last statement of the corollary holds by definition.

We need to show that \theoremref{theorem:resync} holds for the modified time
interval $[\hat{E}_3,(k+3)\hat{E}_3]$ with the modified probability of at least
$1-e^{-k(n-f)/2}$. If this is the case, we can proceed as in
Corollaries~\ref{coro:weak_stabilization} and~\ref{coro:weak_stability}.

We start to track the execution from time $\hat{E}_3$. Whenever a non-faulty
node switches to state \init\ at a good time, the adversary must corrupt it in
order to prevent subsequent deterministic stabilization. In the proof of
\theoremref{theorem:resync}, we showed that for any non-faulty node, there are
at least $k+1$ different times during $[\hat{E}_3,(k+3)\hat{E}_3]$ when it
switches to \init\ that have an independently by $1/2$ lower bounded probability
to be good. Since \lemmaref{lemma:good} holds for \emph{any} execution where we
have at most $f$ faults, the adversary corrupting some node at time $t$ affects
the current and future trials of that node only, while the statement still holds
true for the non-corrupted nodes. Thus, the probability that the adversary may
prevent the system from stabilizing until time $t_k$ is upper bounded by the
probability that $(k+1)(n-f)$ independent and unbiased coin flips show $f$ or
less times tail. Chernoff's bound states for the random variable $X$ counting
the number of tails in this random experiment that for any $\delta\in (0,1)$,
\begin{equation*}
P[X<(1-\delta)\E[X]]
<\left(\frac{e^{-\delta}}{(1-\delta)^{1-\delta}}\right)^{\E[X]}
<e^{-\delta \E[X]}.
\end{equation*}
Inserting $\delta=k/(k+1)$ and $\E[X]=(k+1)(n-f)/2$, we see that the probability
that
\begin{equation*}
P[X\leq f]\leq P[X<(n-f)/2]<e^{-k(n-f)/2},
\end{equation*}
as claimed.
\end{proof}

\subsection{Constant-Time Stabilization}

Up to now, we considered worst-case scenarios only. In practice, it is likely
that faulty nodes show not entirely arbitrary behavior. In particular, they might
still be partially following the protocol, not exhibit a level of coordination
that could only be achieved by a powerful central instance, or not be fully
aware of non-faulty nodes states. Moreover, it is unlikely that at the time when
a majority of the nodes becomes non-faulty, all their timeouts $R_2$ and $R_3$
have been reset recently. In such settings, stabilization will be much easier and
therefore be achieved in constant time with a large probability. It is
difficult, however, to name simple conditions that cover most reasonable cases.
Generally speaking, once (randomized) timeouts of duration $R_2$ or $R_3$ are
not ``messed up'' at non-faulty nodes anymore, faulty channels and nodes need to
collaborate in an organized manner in order to prevent stabilization for a large
time period. We give a few examples in the following corollary.
\begin{corollary}\label{coro:quick_simple}
Suppose $W\subseteq V$, where $|W|\geq n-f$, satisfies that for each $i\in W$,
all (randomized) timeouts of duration $R_2$ or $R_3$ are correct during
$[t^-,t^+]$, and the node is non-faulty during
$[t^-+\vartheta(3(R_2+3d)+2(8(1-\lambda)R_2+d)),t^+]$. Moreover, channels
between nodes in $W$ are correct during
$[t^-+\vartheta(3(R_2+3d)+2(8(1-\lambda)R_2+d)),t^+]$ and did not insert
\init signals that have not been sent during
$[t^-,t^-+\vartheta(3(R_2+3d)+2(8(1-\lambda)R_2+d))]$ or delay them by more
than $R_1$ time. Define
$\tilde{t}^-:=t^-+\vartheta(3(R_2+3d)+2(8(1-\lambda)R_2+d))+R_1+d$. Moreover,
assume that one of the following statements holds during $[\tilde{t}^-,t^+]$.
\begin{itemize}
  \item [(i)] Nodes in $V\setminus W$ switch to \init at times that are
  independently distributed with probability density at most $\BO(1/(R_1 n))$,
  and channels from $V\setminus W$ to $W$ do not generate \init signals on
  their own (or delay \init signals from before $\tilde{t}^-$ more than $R_1$
  time).
  \item [(ii)] Channels from $V\setminus W$ to $W$ switch to \init at times
  that are independently distributed with probability density at most
  $\BO(1/(R_1 n^2))$.
  \item [(iii)] Channels from $V\setminus W$ to $W$ switch to \init obliviously
  of the history of signals originating at nodes in $W$ and do not know the time
  $\tilde{t}^-$.
\end{itemize}
If $t^+\in \tilde{t}^-+\Omega(k R_1)$, $k\in \N$, then there is a
$W$-stabilization point during $[\tilde{t}^-,\tilde{t}^-+\BO(k R_1)]$ with
probability at least $1-2^{-\Omega(k)}$.
\end{corollary}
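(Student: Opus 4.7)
The strategy is to refine the argument behind \theoremref{theorem:resync} by exploiting the restricted adversary in each of (i)--(iii). Instead of settling for the constant fraction of good times proved in \lemmaref{lemma:good}---whose additive error of $\Theta(R_2)=\Theta(nR_1)$ forces the general analysis to wait $\Theta(nR_1)$ time per effective trial---I show that under the stronger hypotheses the rate at which $W$-resynchronization attempts get disrupted drops to $\BO(1/R_1)$, so the additive error in the good-volume bound is only $\BO(R_1)$. Each $R_3$-firing at a node in $W$ then lands in a good time with constant probability, and since $\Theta(k)$ such firings occur across $W$ within $\Theta(kR_1)$ time, a Chernoff bound yields the claimed $2^{-\Omega(k)}$ failure probability; \theoremref{theorem:stabilization} then converts the first good firing into a $W$-stabilization point within $R_1/\vartheta$ additional time.

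The first step is to check that the analysis of \sectionref{sec:analysis} applies from $\tilde{t}^-$ onward. Since $\tilde{t}^--t^-$ comfortably exceeds $2\hat{E}_3$, \lemmaref{lemma:counters} ensures that every $R_2$- or $R_3$-timeout at a node in $W$ has been freshly reset since $t^-$ and therefore carries its correct distribution. The additional $R_1+d$ slack guarantees that no residual $R_1$-timeout from a pre-$\tilde{t}^-$ switch to \srr\ still blocks a node at $\tilde{t}^-$, and, by the ``not delayed by more than $R_1$'' hypothesis on channels between nodes in $W$, that any spurious \init\ produced during $[t^-,\tilde{t}^-]$ has been consumed or dropped by $\tilde{t}^-$.

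The core step is rederiving \lemmaref{lemma:good} under the stronger hypothesis. Every switch to \srr\ by a node in $W$ requires $f+1$ of its incoming channels from $W$ to have recently reset a $(R_2,\supp~j)$-timeout due to an incoming \init. Counting sources: the $n-f$ nodes in $W$ themselves produce \init\ at total rate $\BO(1/R_1)$ by \equalityref{eq:R_3}; case~(i) bounds the faulty-node contribution by $f\cdot \BO(1/(R_1 n))=\BO(1/R_1)$; case~(ii) bounds the faulty-channel contribution by $\BO(n^2)\cdot \BO(1/(R_1 n^2))=\BO(1/R_1)$; in case~(iii) one fixes the oblivious adversarial \init\ trajectory on faulty channels first and averages only over the $R_3$ draws at nodes in $W$. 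In each case the number of \srr\ events per unit time becomes $\BO(1/R_1)$, so the cover $\bar{G}$ constructed in the proof of \lemmaref{lemma:good} has volume at most a constant fraction of $t^+-\tilde{t}^-$ plus an additive $\BO(R_1)$, leaving a constant fraction $\rho>0$ of good times in any subinterval of length $\Omega(R_1)$.

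Finally, over any window of length $T=\Theta(kR_1)\leq t^+-\tilde{t}^-$, since $R_3\in \BO(nR_1)$ by \equalityref{eq:R_3}, the $n-f$ nodes in $W$ collectively fire $R_3$ at least $\Theta(k)$ times. The coupling developed in the proof of \theoremref{theorem:resync} shows that, conditioned on the execution prior to any given firing, that firing lands in a good time with probability at least $\rho/2$ independently of earlier firings. Chernoff's inequality thus bounds the probability that no firing succeeds by $2^{-\Omega(k)}$, and \theoremref{theorem:stabilization} converts the first success into a $W$-stabilization point within $R_1/\vartheta$ additional time. The main obstacle is case~(iii), where the adversary may react to the execution so far but is oblivious to the internal $R_3$ draws and to the exact value of $\tilde{t}^-$: the coupling must fix an arbitrary adversarial \init\ trajectory up to the firing and only then integrate over the $R_3$ randomness, so that the conditional independence exploited in \theoremref{theorem:resync} carries through. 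A secondary subtlety is that the $\Theta(R_2)$ additive error in the standard statement of \lemmaref{lemma:good} is too weak at the $\Theta(R_1)$ time scale, which necessitates the sharper rate accounting sketched above.
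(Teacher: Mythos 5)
Your proposal follows essentially the same route as the paper's own (quite informal) argument: the hypotheses on the $R_2$/$R_3$ timeouts plus the $R_1+d$ slack substitute for coherency from $\tilde{t}^-$ on, interference from outside $W$ is controlled by rate counting under (i)/(ii) and by obliviousness under (iii), and the machinery of \lemmaref{lemma:good}, \theoremref{theorem:resync}, and \theoremref{theorem:stabilization} turns constant-probability good firings of $R_3$ into stabilization, amplified to $1-2^{-\Omega(k)}$. The only real difference in flavor is that you propose to sharpen \lemmaref{lemma:good} to an additive error of $\BO(R_1)$, whereas the paper argues for (i)/(ii) that with constant probability an $\BO(R_1)$-length span sees no \init\ from outside $W$ at all, and for (iii) appeals to the amortized density of good firing opportunities per $R_1$ time; these are minor variants of the same idea.

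Two of your intermediate claims are phrased deterministically but hold only probabilistically, and this matters because your final Chernoff step needs per-trial guarantees. First, ``the $n-f$ nodes in $W$ collectively fire $R_3$ at least $\Theta(k)$ times'' within $\Theta(kR_1)$ time does not follow from $R_3\in\BO(nR_1)$: the \emph{minimum} of $R_3$ is $\vartheta(R_2+3d)=\Theta(nR_1)$, so for $k\ll n$ a node need not fire at all in the window; the count is $\Theta(k)$ only in expectation (via the residual uniformity of the fresh $R_3$ draws at time $\tilde{t}^-$) and needs its own lower-tail concentration, which again costs only $2^{-\Omega(k)}$ and fits your framework. Second, ``a constant fraction $\rho>0$ of good times in any subinterval of length $\Omega(R_1)$'' is not a valid pointwise statement---a single blocking interval (each of length at least roughly $R_1$) can cover a particular $\Theta(R_1)$ subinterval entirely; what your Chernoff argument actually needs, and what the volume/coupling argument of \theoremref{theorem:resync} delivers, is that each firing is good with constant probability conditioned on the past. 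Both points are patchable without changing your approach, and the paper's own proof is no more precise exactly there, so I would classify them as gaps in the write-up rather than in the strategy.
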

\begin{proof}
In Theorems~\ref{theorem:stability} and~\ref{theorem:stabilization}, we showed
that stabilization is deterministic once a good resynchronization point occurs.
The notion of coherency essentially states that at non-faulty nodes, each
timeout expired at least once and has not been reset again because of incorrect
observations on other non-faulty nodes' states until the set is considered
coherent (cf.~\lemmaref{lemma:counters}). Subsequently, the respective nodes are
non-faulty and the channels connecting them correct. This is true by the
prerequisites of the corollary, which essentially state respective conditions on
timeouts $R_2$ and $R_3$ explicitly, while rephrasing the conditions for
coherency for the remaining timeouts (note that $R_1$ is the largest timeout
except for $R_2$ and $R_3$).

Moreover, the time span during which $R_2$ and $R_3$ behave and are observed
regularly is large enough for $R_3$ to expire twice and additional $R_2+3d$ time
to pass. This accounts for the fact that in the proof of
\theoremref{theorem:resync}, we essentially first wait until $R_3$ expires once
(so the adversary has no useful information on the timeout at the respective
node anymore) and then consider the subsequent time(s) when it expire(s). The
proof then exploits that non-faulty nodes timeout $R_3$ will expire at
roughly independently uniformly distributed points in time. Therefore, unless
faulty nodes or channels interfere, the statement of the corollary holds.

Hence, we need to show that for any of the three conditions, there is not too
much meddling from outside $W$. For Conditions~(i) or~(ii), we see that the
probability that there are no \init signals on channels from $V\setminus W$ to
$W$ at all for any time span of length $\BO(R_1)$ is at least constant,
regardless of the time interval considered. Regarding Condition~(iii), recall
that \theoremref{theorem:resync} essentially shows that whatever the strategy of
the adversary, the expected number of good $W$-resynchronization points during a
time interval (where $W$ is coherent) is linear in the size of the interval
divided by $R_1$ if the interval is sufficiently large. Since the adversary is
oblivious of the current time in relation to $\tilde{t}^+$ and the state of $W$,
the statement that for any strategy of the adversary the amortized number of
good stabilization points per $R_1$ time is constant yields the claim of the
lemma.
\end{proof}
We remark that this observation is particularly interesting as the core routine
of the algorithm is independent of the resynchronization routine after
stabilization. If at some time $W$ becomes subject to a large number of faults
resulting in loss of synchronization, however the resynchronization routine
still works properly, it is very likely that $W$ will recover within $\BO(1)$
time (provided $R_1\in \BO(1)$). On the other hand, if the resynchronization
routine fails in the sense that a majority of the nodes suffers from faulty
timeouts $R_2$ or $R_3$, or communication is faulty between too many nodes, this
will not affect the core routine unless too many components related to it fail
as well.

\section{The FATAL\texorpdfstring{$^+$}{+} Protocol}\label{sec:application}

The synchronized pulses established by the FATAL pulse synchronization algorithm
could in principle serve  as the local clock signals provided to the application
layer of the SoC.\footnote{In order to establish a consistent global tick
numbering (needed for establishing a global notion of time across different
clock domains) of arbitrarily large bounded clocks, a self-stabilizing digital
clock synchronization algorithm like the one from~\cite{HDD06:SSS} can be
employed. Implementing such algorithms in SoCs is part of our future work and
thus outside the scope of this paper, however.} However, just using the FATAL
protocol in this way would result in a very low clock frequency: Despite the
fact that the time between pulses is $\Theta(d)$ (if the timeouts are chosen
accordingly) and thus asymptotically optimal, the actual clock speed would be
several orders of magnitude below the upper bound resulting from
\cite{lundelius84}, due to the large implied constants. Moreover, the system
model introduced in \sectionref{sec:model} assumes that delays may vary
arbitrarily between $0$ and $d$, with $d$ also covering the fairly complex
implementation of communicating the main algorithm's states (see
\sectionref{sec:communication}). By contrast, pure \emph{wire} delays of the
communication channels between different nodes are much smaller and also vary
within a smaller range.

This section contains an extension of FATAL, termed FATAL$^+$,
which overcomes these limitations. In a nutshell, it consists of
adding a fast non-self-stabilizing, Byzantine-tolerant algorithm
termed \emph{quick cycle} to FATAL, which generates exactly
$M>1$ fast clock ticks between any two pulses at a correct node after
stabilization.

\subsection*{The Quick Cycle Algorithm}

Consider a system of $n$~nodes, each of which runs the FATAL pulse
synchronization protocol. Additionally, each node is equipped with an instance
of the \emph{quick cycle} state machine depicted in \figureref{fig:top_alg}. The
interface between the quick cycle algorithm and the underlying FATAL pulse
synchronization protocol is by means of two signals only, one for each direction
of the communication: (i) The quick cycle state machine generates the $\Next$
signal by which it (weakly) influences the time between two successive pulses
generated by FATAL, and (ii) it observes the state of the $(T_2^+,\acc)$ signal,
which signals the expiration of an additional timer added to the FATAL protocol.
The timer is coupled to the state $\acc$ of FATAL, in which the pulse
synchronization algorithm generates a new pulse. The signal's purpose is to
enforce a consistent reset of the quick cycle state machine once FATAL has
stabilized.

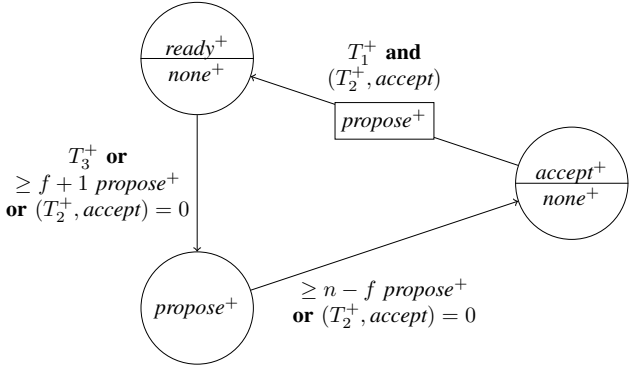
\begin{figure}[tb]
\centering
\scalebox{.83}{
\begin{tikzpicture}
\draw (6,2) node[circle,minimum width=1.8cm,draw,align=center] (ac) {$\acc^+$\\$\none^+$};
\draw (0,4) node[circle,minimum width=1.8cm,draw,align=center] (rd) {$\rdy^+$\\$\none^+$};
\draw (0,0) node[circle,minimum width=1.8cm,draw,align=center] (pr) {$\prop^+$};

\draw (ac.west) -- (ac.east);
\draw (rd.west) -- (rd.east);

\path[->] (ac) edge node[above=0.35cm,align=center] {$T^+_1$ \textbf{and}\\$(T_2^+,\acc)$} (rd);
\path[->] (rd) edge node[left,align=center] {$T^+_3$ \textbf{or}\\$\ge f+1$
$\prop^+$\\\textbf{or} $(T_2^+,\acc)=0$} (pr);
\path[->] (pr) edge node[below=0.4cm,align=center] {$\ge n-f$ $\prop^+$\\\textbf{or} $(T_2^+,\acc)=0$} (ac);

\draw (ac)+(-3,1) node[fill=white,draw=black] {$\prop^+$};
\end{tikzpicture}
}
\caption{The quick cycle of the FATAL$^+$ protocol.}\label{fig:top_alg}
\end{figure}

Essentially, the quick cycle state machine is a copy of the outer cycle of
\figureref{fig:main} that is stripped down to the minimum. However, an
additional mechanism is introduced in order to ensure stabilization, namely,
some coupling to the \acc\ state of the main algorithm: Whenever a pulse is
generated by FATAL, we require that all nodes switch to the \accp state
unless they already occupy that state. This is easily achieved by incorporating
the state of the expiration signal of the additional FATAL timer $(T_2^+,\acc)$
in the guards of \figureref{fig:top_alg}. Since pulses are synchronized up to
the skew $\Sigma$ of the pulse synchronization routine, it follows that all
nodes switch to \accp within a time window of $\Sigma+2d$. Subsequently, all
nodes will switch to state \rdyp before the first one switches to \propp
provided that $T_3^+$ is sufficiently large, and the condition that $f+1$ \propp
signals trigger switching to \propp guarantees that all nodes switch to \accp in
a tightly synchronized fashion.

One element that is not depicted explicitly in \figureref{fig:top_alg} is that
nodes increase an integer cycle counter by one whenever they switch to \accp.
The counter is reset to zero whenever $(T_2^+,\acc)$ expires, i.e., shortly
after a pulse generated by the underlying pulse synchronization algorithm. The
algorithm makes sure that, once the compound algorithm stabilized, these resets
never happen when the counter holds a non-zero value. The counter operates mod
$M\in \N$, where $M$ is large enough so that at least roughly $T_2+T_3$ and at
most $(T_2+T_4)/\vartheta$ time passed since the most recent pulse when it
reaches $M\equiv 0$ again. Whenever the counter is set to $0$, node $i\in V$
will set its \Next[_i] signal to $1$ and switch it back to $0$ at once (thus
raising the respective \Next[_i] memory flag of the main algorithm). Thus, by
actively triggering the next pulse, we ensure that a pulse does not occur at an
inconvenient point in time: When the system has stabilized, exactly $M$ switches
to \accp\ of the quick cycle algorithm occur between any two consecutive pulses
at a correct node. As these switches occur also synchronously at different
nodes, it is apparent that the quick cycle state machine in fact implements a
bounded-size synchronized clock.

To derive accurate bounds on the skew of the protocol, we need to state the
involved delays more carefully.
\begin{definition}[Refined Delay Bounds]
The state of the quick cycle algorithm is communicated via separate channels
$S_{i,j}^+$, with $i,j\in V$, whose delays vary within $d_{\min}^+$ and
$d_{\max}^+$ in order to be considered correct during $[t^-,t^+]$. State
transitions of the quick cycle state machine, resets of its timeouts, and
clearance of its memory flags take at most $d_{\max}^+$ time.
\end{definition}

Setting $\Sigma^+ := 2d_{\max}^+-d_{\min}^+$, we assume that the following
timing constraints hold:
\begin{eqnarray}
T_1^+ &\ge & \vartheta (T_2^++\Sigma^++3d+d_{\max}^+)\label{eq:T1+}\\
T_2^+ &\ge & \vartheta (3d+3d_{\max}^+)\label{eq:T2+}\\
T_3^+ &\ge &\vartheta (T_1^++d^+_{\max})\label{eq:T3+}\\
M &\in &
\scriptstyle\left[\frac{\vartheta(T_2+T_3+3d)+T_1^+-T_2^+}{T_1^++T_3^+},
\frac{T_2+T_4-3\vartheta d}
{T_1^++T_3^++\Sigma^++3d_{\max}^+}\right]\label{eq:M}
\end{eqnarray}
It follows from \lemmaref{lemma:constraints} that it is always possible to pick
appropriate values for the timeouts and~$M$. Note, however, that choosing $M\in
\omega(1)$ requires that $T_2+T_3\in \omega(1)$, resulting in a superlinear
stabilization time. More precisely, the stabilization time of FATAL$^+$ is,
given $M$ and minimizing the timeouts under this constraint, in $\Theta(M n)$.
As mentioned previously, this limitation can be overcome by employing a digital
clock synchronization such as~\cite{BDH08:podc}.

We now prove the correctness of the FATAL$^+$ protocol.

\begin{theorem}\label{theorem:quick}
Let $W\subseteq V$, where $\lvert W\rvert \ge n-f$, and define $T(k)$, for $k\in
\N$, as in \corollaryref{coro:stabilization}. Then, for any $k\in
\N$, the FATAL$^+$ protocol is a $(W,W^2)$-stabilizing pulse synchronization
protocol (where \accp is the ``pulse'' state) with skew
$\Sigma^+$ and accuracy bounds
$(T_1^++T_3^+)/\vartheta-\Sigma^+$ and
$T_1^++T_3^++\Sigma^++3d_{\max}^+$. It stabilizes
within time $T(k)+T_1^++T_3^++\Sigma^++3d+3d_{\max}^+$ with
probability at least $1-2^{-k(n-f)}$. Moreover, the cycle
counters increase by exactly one mod $M$ at each pulse, within a time window of
$\Sigma^+$, and both the quick cycle state machine and the cycle counters are
metastability-free once the protocol stabilized and remains fault-free in $W$.
\end{theorem}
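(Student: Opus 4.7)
\medskip
\noindent\textbf{Proof plan.}
The plan is to first invoke \corollaryref{coro:stabilization} for the underlying FATAL layer to obtain a $W$-stabilization point $t_0\le t^-+T(k)$ with probability at least $1-2^{-k(n-f)}$, and then to argue that from the first such pulse onward, the quick cycle state machines at all nodes in $W$ are driven into synchrony by the $(T_2^+,\acc)$ signal, after which they execute $M$ synchronized $\accp$--$\rdyp$--$\propp$ rounds between any two successive FATAL pulses. The key observation is that the guards of the quick cycle treat $(T_2^+,\acc)=0$ as an override: whenever this signal is low (i.e., between the local switch to \acc\ and the subsequent expiration of $T_2^+$), a node in $\rdyp$ is forced to $\propp$ and a node in $\propp$ is forced to $\accp$. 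Thus the first post-stabilization pulse deterministically flushes the quick cycle into state \accp\ at every node in $W$, within a time window bounded by $\Sigma+d_{\max}^+\le \Sigma^+$, and on this event the cycle counter is reset to $0$.

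Next I would establish, by induction over the cycle index $m\in\{0,\ldots,M-1\}$, the following invariant: between the $m$-th and the $(m{+}1)$-st switch to \accp\ of the nodes in $W$, (i) all such switches happen within a time window of size $\Sigma^+=2d_{\max}^+-d_{\min}^+$, (ii) the minimum separation between two successive switches at any node is $(T_1^++T_3^+)/\vartheta-\Sigma^+$, and (iii) the cycle counter increments by exactly one mod $M$. For the induction step, \inequalityref{eq:T1+} forces every node to remain in \accp\ for at least $T_2^+$ local time plus the observation delay, which exceeds $\Sigma^++d_{\max}^+$ and hence guarantees that every node observes $(T_2^+,\acc)$ high before the fastest node leaves \accp; \inequalityref{eq:T3+} ensures that $T_3^+$ does not expire before every node has entered \rdyp; and the threshold of $f+1$ \propp\ messages together with the standard Srikanth--Toueg argument yields tightness $\Sigma^+$ for the next switch to \accp. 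Timeout $(T_2^+,\acc)$ expires at most $d_{\max}^+$ after the node enters \acc\ and stays expired long enough, by \inequalityref{eq:T2+}, so that this forcing mechanism does not fire between two FATAL pulses.

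It then remains to bind the quick cycle's timing to the FATAL $\Next$ mechanism. After $M$ quick cycles, \lemmaref{lemma:constraints} and the lower endpoint of \inequalityref{eq:M} imply that at least $T_2+T_3$ reference time has elapsed since the previous pulse, so when the counter wraps to $0$ each node in $W$ raises $\Next_i$ while observing itself in \rdy, and by \theoremref{theorem:stability} the next FATAL pulse occurs within $(T_2+T_4)/\vartheta$ local time of the previous one. The upper endpoint of \inequalityref{eq:M} ensures the counter is indeed $0$ at the moment the next pulse is triggered, so $(T_2^+,\acc)$ is reset while every node sits in \accp\ with counter $0$; this closes the induction without the override path ever having to fire after stabilization. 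The stabilization time bound $T(k)+T_1^++T_3^++\Sigma^++3d+3d_{\max}^+$ accounts for the FATAL stabilization $T(k)$, the at most one full quick cycle needed for the $(T_2^+,\acc)$-driven flush, and the propagation and observation delays.

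Finally, metastability-freedom follows in the style of Statement~(iii) of \theoremref{theorem:stability}: once synchronized, every transition of the quick cycle state machine is triggered by a guard whose enabling event (either a timeout expiration with sufficient slack, or a threshold of \propp\ memory flags, or the $(T_2^+,\acc)$ edge) occurs at a reference time that is separated by more than $d_{\max}^+$ from the next enabling event, so the self-observation condition in \defref{def:metastability} is satisfied; the cycle counter, which is driven purely by synchronous $\accp$-entries and $(T_2^+,\acc)$-expirations, inherits this property. The main obstacle I anticipate is the bookkeeping around the first post-stabilization pulse: nodes in $W$ may start in arbitrary quick-cycle states with arbitrary counter values, and I need to verify carefully that the $(T_2^+,\acc)=0$ override drives each such configuration into a common \accp\ with counter $0$ strictly before $T_2^+$ expires anywhere, without causing the forbidden scenario of a node leaving \accp\ while the override is still pulling others into it. The constraints \eqref{eq:T1+}--\eqref{eq:M} are designed precisely for this, and checking the resulting case analysis (especially for nodes that, at the pulse, happen to be in \propp\ about to switch) is the technically delicate part of the argument.
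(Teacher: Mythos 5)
Your plan follows essentially the same route as the paper's proof: invoke \corollaryref{coro:stabilization} for the FATAL layer, use the $(T_2^+,\acc)$ override to flush all quick-cycle machines into \accp with counter $0$ right after the first post-stabilization pulse, then induct over quick-cycle pulses with the threshold argument giving skew $\Sigma^+$, the constraints \eqref{eq:T1+}--\eqref{eq:T3+} giving the accuracy bounds, and the two endpoints of \eqref{eq:M} ensuring the \Next-triggered FATAL pulse arrives exactly when the counter wraps to $0$, so the override and the counter reset never fire out of turn; the metastability claims are discharged in the same spirit. One small slip worth noting: the initial flush into \accp need not occur within $\Sigma^+$ (its spread is governed by the FATAL skew plus observation delays, roughly $3d+2d_{\max}^+$), but this is harmless because, exactly as in the paper, the skew bound is only claimed for the pulses generated after the flush, whose tightness comes from the $f+1$/$n-f$ threshold mechanism rather than from the flush itself.
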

\begin{proof}
Assume that nodes in $W\subseteq V$, where $|W|\geq n-f$, are non-faulty and
channels between them are correct during $[t^-,t^+]$, where $t^+\geq
t^-+T(k)+T_1^++T_3^++\Sigma^++3d+3d_{\max}^+$. According to
\corollaryref{coro:stabilization}, with probability at least $1-2^{-k(n-f)}$, there
exists a time $t_0\in [t^-,t^-+T(k))]$ such that all nodes in $W$ switch to
$\acc$ within $[t_0,t_0+2d)$, and they will continue to switch to \acc regularly
in a synchronized fashion until at least $t^+$. For the remainder of the proof,
we assume that such a time $t_0$ is given; from here we reason
deterministically.

The skew bound is shown by induction on the $k$-th consecutive quick cycle
pulse, where $k\in \N$, generated after the stabilization time $t_0$ of the
FATAL algorithm. Note that the time for which we are going to establish that the
compound algorithm stabilizes is $t_1>t_0$; here we denote for $k\ge 1$ by
$t_k$ the time when the first node from $W$ switches to \accp for the $k^{th}$
time after $t_0+3d$, i.e., the beginning of the $k^{th}$ pulse of FATAL$^+$ that
we prove correct. W.l.o.g.\ we assume that $t^+=\infty$; otherwise, all
statements will be satisfied until $t^+$ only (which is sufficient).

To prove the theorem, we are going to show by induction on $k\in \N$ that
\begin{itemize}
  \item [(i)] $t_1\in [t_0+(T_2^++T_3^+)/\vartheta,
  t_0+T_1^++T_3^++\Sigma^++3d+3d_{\max}^+]$,
  \item [(ii)] if $k\geq 2$, $t_k\leq
  t_{k-1}+T_1^++T_3^++\Sigma^++3d_{\max}^+$,
  
  \item [(iii)] if $ k\geq 2$, $t_k\geq t_{k-1}+(T_1^++T_3^+)/\vartheta$,
  
  \item [(iv)] $\forall i\in W: i$ switches to \accp for the
  $k^{th}$ time after $t_0+3d$ during $[t_k,t_k+\Sigma^+)$,
  
  \item [(v)] if $ k\geq M$, for $l:=\lfloor k/M\rfloor$, $\forall i\in W: i$ switches
  to \acc for the $l^{th}$ time after $t_0+3d$ during
  $[t_{Ml},t_{Ml}+\Sigma^++2d)$,
  
  \item [(vi)] $\forall i\in W:$ node $i$'s cycle counter switches from
  $k-1\operatorname{mod} M$ to $k\operatorname{mod} M$ at some time from
  $[t_k,t_k+\Sigma^+)$, and
  
  \item [(vii)] if $k\geq 2$, $\forall i\in W:$ node $i$'s cycle counter
  changes its state exactly once during $[t_{k-1},t_k)$.
\end{itemize}
In particular, the protocol is a pulse synchronization protocol with the claimed
bounds on skew, accuracy, and stabilization time. Proving these properties will
also reveal that quick cycle is metastability-free after time $t_1$.

To anchor the induction at $k=1$, we need to establish Statement (i) as well as
Statements (iv) and (vi) for $k=1$; the remaining
statements are empty for $k=1$.

Recall that any node $i\in W$ switches to \acc during $[t_0,t_0+2d)$. Hence,
during
\begin{equation*}
\left[t_0+3d,t_0+\frac{T_2^+}{\vartheta}\right)\sr{eq:T2+}{\subseteq}
[t_0+3d,t_0+3d+3d_{\max}^+),
\end{equation*}
at no node in $W$, $(T_2^+,\acc)$ is expired, implying that all nodes in $W$ are
in state \accp\ during $[t_0+3d+2d_{\max}^+,t_0+3d+3d_{\max}^+)$. Note that each
node will reset its cycle counter to $0$ when $(T_2^+,\acc)$ expires, i.e.,
after having completed its transition to \accp.

The above bound shows that at the minimal time after $t_0+3d$ when a node in $W$
switches to \rdyp, it is guaranteed that no node is observed in \propp until the
minimal time $t_p\geq t_0+3d$ when a node in $W$ switches to \propp. Moreover,
at any node switching to state \rdyp\ timeout $(T_2^+,\acc)$ must be expired,
implying that the node may not switch from \rdyp to \propp due to this signal
until it switches to \acc again. Recall that nodes set their \Next signals to
$1$ only briefly when their cycle counters are set to $0$. Hence, for each such
node in $W$, this signal is observed in state $0$ from the time when
$(T_2^+,\acc)$ expires until (a) at least time $t_M$ or (b) the time the node is
forced by a switch to \acc to set its counter to $0$, whatever is earlier.
Examining the main state machine, it thus can be easily verified that no node in
$W$ may switch from \rdyp to \propp because $(T_2^+,\acc)=0$ before (a) time
$t_M$ or (b) time
\begin{equation}
t_0+\frac{T_2+T_4}{\vartheta}\sr{eq:M}{>} t_0+M(T_1^++T_3^++3d_{\max}^+)+3d
\label{eq:no_forced}
\end{equation}
is reached. We obtain:

\smallskip

(P1) No node in $W$ observes $(T_2^+,\acc)(t)=0$ at some time
$t\in [t_0+3d,\min\{t_M,t_0+M(T_1^++T_3^++3d_{\max}^+)+3d\}]$ when it is not
in state \accp.

\smallskip

Considering that any node $i\in W$ will switch to \rdyp once both $T_1^+$ and
$T_2^+$ expired and subsequently to \propp at the latest when $T_3^+$ expires
(provided that it does not switch back to \accp first), it follows that by time
\begin{eqnarray}
&&t_0+3d+\max\{T_1^++d_{\max}^+,T_2^+\}+T_3^++2d_{\max}^+\nonumber\\
&\sr{eq:T1+}{=}&t_0+T_1^++T_3^++3d_{\max}^++3d\label{eq:t_1_eq}\\
&\sr{eq:no_forced}{<}&t_0+\frac{T_2+T_4}{\vartheta},\label{eq:t_1_upper}
\end{eqnarray}
each node in $W$ must have been observed in \propp at least once. On the other
hand, as we established that nodes do not observe nodes in $W$ in state \propp
when switching to \rdyp at or after time $t+3d$ before the first node in $W$
switches to \propp, it follows that until time
\begin{equation}
t_0+\frac{T_2^++T_3^+}{\vartheta}\sr{eq:T3+}{\geq}
t_0+3d+T_1^++2d_{\max}^+,
\end{equation}
nodes in $W$ will have at most $|V\setminus W|\leq f$ of their \propp
flags in state $1$, and their timeout $T_3^+$ did not expire yet.
Thus, by (P1), the first node in $W$ that switches to \propp after $t_0+3d$
must do so at time $t_p \ge t_0+3d+T_1^++2d_{\max}^+$.

Recall that $t_1$ is the minimal time larger than $t_0+3d$ when a node
in $W$ switches to state \accp.
By \eqref{eq:t_1_eq} and since $|W|\geq n-f$, we have that each node
in $W$ observes at least $n-f$ nodes in $\propp$ by time
$t_0+T_1^++T_3^++3d+3d_{\max}^+$, and thus
\begin{equation}
t_1\leq t_0+T_1^++T_3^++3d+3d_{\max}^+\enspace.\label{eq:t1_lower}
\end{equation}
Moreover, we can trivially bound
\begin{equation}
t_1\geq t_p\geq t_0+(T_2^++T_3^+)/\vartheta\enspace.\label{eq:t1_upper}
\end{equation}
From \eqref{eq:t1_lower} and \eqref{eq:t1_upper} it follows that $t_1$
satisfies Statement~(i) of the claim.

Since at time $t_1$ there is a node $i\in W$ switching from \propp to \accp,
(P1) implies that it must memorize at least $n-2f\geq f+1$ nodes in $W$ in state
\propp, which must have switched to this state during $[t_p,t_1-d_{\min}^+]$. By
the above considerations regarding the reset of the \propp flags, this yields
that all nodes in $W$ will memorize at least $f+1$ nodes in state \propp by time
$t_1+d_{\max}^+-d_{\min}^+$ and thus switch to \propp (if they have not done so
yet). It follows that by time $t_1+2d_{\max}^+-d_{\min}^+=t_1+\Sigma^+$, all
nodes in $W$ memorize at least $|W|\geq n-f$ nodes in \propp and therefore
switched to \accp. Hence, we successfully established Statement (iv) of the
claim for $k=1$. Statement (vi) follows for $k=1$, as the cycle counters have
been reset to zero at the expiration of $(T_2^+,\acc)$ and are increased upon
the subsequent state transition to \accp. Note that Statements (ii), (iii), (v),
and (vii) trivially hold.

\medskip

We now perform the induction step from $k\in \N$ to $k+1$. Assume that
Statements (ii) to (vii) hold for all values smaller or equal to $k$; Statement
(i) only applies to $k=1$ and was already shown. Define $l:=\lfloor k/M\rfloor
\ge 0$. Thus, if we can show Statement (ii) for $k+1$, we may infer that
\begin{eqnarray}
\!\!&&t_{k+1}\notag\\
\!\!&\stackrel{\!\!\!\!\!(i),(ii)\!\!\!\!\!\!\!}{\leq}&\!\!
t_{Ml}+(k+1-Ml)(T_1^++T_3^++\Sigma^++3d_{\max}^+)+3d\notag\\
\!\!&\leq & \!\!
t_{Ml}+M(T_1^++T_3^++\Sigma^++3d_{\max}^+)+3d\label{eq:k1_kp_pre}\\
\!\!&\sr{eq:M}{\leq}&\!\!
t_{Ml}+\frac{T_2+T_4}{\vartheta}.\label{eq:k1_kp}
\end{eqnarray}
In case $l = 0$, it holds that $k < M$ and we may deduce (P1) by the
same arguments as in the induction basis.

In case $l \ge 1$, we use Statement (v) for value $k$, and, by
analogous arguments as in the induction basis, deduce that at no
node in $W$, $(T^+_2,\acc)$ is expired during
$[t_{Ml}+3d,t_{Ml}+3d+2d^+_{\max})$, implying that all nodes in
$W$ are in $\accp$ during that time.
Repeating the reasoning of the induction basis before (P1) with $t_0$
replaced by $t_{Ml}$, $t_1$ replaced by $t_{k}$, and $t_M$
replaced by $t_{Ml+M}$ shows that:

\smallskip

(P1') No node in $W$ observes $(T_2^+,\acc)(t)=0$ at some time
$t\in [t_{Ml}+3d,\min\{t_{Ml+M},t_{Ml}+M(T_1^++T_3^++3d_{\max}^+)+3d\}]$ when
it is not in state \accp.

\smallskip


Since further $t_{Ml+M}\geq t_{k+1}$ by definition of $l$, we obtain
from (P1') that no node $i\in W$ will memorize $\Next[_i]=1$
earlier than time $\min\{t_{k+1},t_{Ml}+M(T_1^++T_3^++3d_{\max}^+)+3d\}$ (again
by reasoning analogously to the induction base).

By Statement (iv) for the value $k$, we know that each node $i\in W$
switches to \accp during $[t_k,t_k+\Sigma^+)$.
In particular, $i$ will increase its cycle counter at the respective
time, i.e., Statement (vi) for $k+1$ follows at once if we
establish Statement (vii) for $k+1$.
As Statement (iv) for the value $k$ together with Statement (ii) for
value $k+1$ imply that each node switches to \accp exactly once
during $[t_k,t_{k+1})$, Statement (vii) for $k+1$ follows,
provided that we can exclude that the counter is reset to $0$,
due to $(T_2^+,\acc)$ expiring, at a time when it holds a non-zero
value.

We now show that this never happens. By Statement (v) for value $k$ each node
$i\in W$ switches to \acc during
\begin{equation}\label{eq:acc_sw}
[t_{Ml},t_{Ml}+\Sigma^++2d)
\end{equation}
and this time is unique during $[t_{Ml},t_{k+1})$ due to
\eqref{eq:k1_kp_pre}.

Because of \eqref{eq:acc_sw} a node in $W$ will reset its timeout
$(T^+_2,\acc)$ during
\begin{equation}
[t_{Ml},t_{Ml}+\Sigma^++3d)\enspace,\label{eq:acc_re}
\end{equation}
and $(T_2^+,\acc)$ will expire within
\begin{eqnarray*}
&&\left[t_{Ml}+\frac{T_2^+}{\vartheta},
t_{Ml}+T_2^++\Sigma^++3d+d_{\max}^+\right)\\
&\sr{eq:T1+}{\subseteq}&\left[t_{Ml}+\frac{T_2^+}{\vartheta},
t_{Ml}+\frac{T_1^+}{\vartheta}\right)\\
&\sr{eq:T2+}{\subseteq} & [t_{Ml}+3d+2d_{\max}^+,t_{Ml+1}) \enspace.
\end{eqnarray*}

Thus, no node in $W$ leaves state \accp after switching there for the
$(Ml)^{th}$ time after $t_0+3d$ before observing that
$(T_2^+,\acc)$ is reset and expires again.
In particular, this shows that the counters are only reset to $0$ at
times when they are $0$ anyway. Granted that Statement (ii) holds for
$k+1$, Statement (vii) for $k+1$ follows.

\medskip

Next, we establish Statements (ii) to (iv) for $k+1$.
We reason analogously to the case of $k=1$, except that we have to
revisit the conditions under which state \accp is left.
As we have just seen, nodes switch from \accp to \rdyp upon $T_1^+$
expiring. Thus, as all nodes in $W$ switch to \accp during $[t_k,t_k+\Sigma^+)$,
they switch to $\rdyp$ within the time window
$[t_k+T_1^+/\vartheta, t_k+T_1^++\Sigma^++d_{\max}^+)$.
By time
\begin{equation*}
t_k+\frac{T_1^+}{\vartheta}\sr{eq:T1+}{\geq} t_k+\Sigma^++d_{\max}^+,
\end{equation*}
all nodes in $W$ will be observed in \accp (and therefore not
in \propp), together with (P1') preventing that the first node in $W$ that
(directly) switches from \rdyp to \propp afterwards does so without
$T_3^+$ expiring first.

More precisely, according to (P1') no node in $W$ observes $(T_2^+,\acc)$ to be
zero until time $\min\{t_{k+1},t_{Ml}+M(T_1^++T_3^++3d^+_{\max})+3d\}$. As
showing Statement (ii) for $k+1$ will imply \inequalityref{eq:k1_kp_pre}, we can
w.l.o.g.\ disregard the case that $t_{Ml}+M(T_1^++T_3^++3d^+_{\max})+3d<t_{k+1}$
in the following. Thus, each node $i\in W$ will be observed in state \propp no
later than time $t_k+T_1^++T_3^++\Sigma^++3d_{\max}^+$. As argued for $k=1$, it
follows that indeed $t_{k+1}\leq t_k+T_1^++T_3^++\Sigma^++3d_{\max}^+$, i.e.,
Statement (ii) for $k+1$ holds. Further each node $i$ switches to \accp for the
$(k+1)^{th}$ time after $t_0+3d$ during time $[t_{k+1},t_{k+1}+\Sigma^+)$, i.e.,
Statements (iv) for $k+1$ holds. Statement (iii) for $k+1$ is deduced from the
fact that it takes at least $(T_1^++T_3^+)/\vartheta$ time until the first node
from $W$ switching to \propp after $t_k+\Sigma^+$ does so, since timeouts
$T_1^+$ and $T_3^+$ need to be reset and expire first, one after the other.

\medskip

Finally, we need to establish Statement (v) for $k+1$. If $M$ does not divide
$k+1$, Statement (v) for $k+1$ follows from Statement~(v) for $k$. Otherwise $M$
does divide $k+1$ and we can bound\footnote{Note that we already build on
Statement (iii) for $k+1$ here.}
\begin{eqnarray*}
&&t_{k+1}+\Sigma^++d\\
&\stackrel{(i),(iii)}{\geq}&
t_{Ml}+\frac{M(T_1^++T_3^+)-T_1^++T_2^+}{\vartheta}+\Sigma^++d\\
&\sr{eq:M}{\geq}& t_{Ml}+T_2+T_3+\Sigma^++4d\enspace.
\end{eqnarray*}
As by Statement (v) for $k$ all nodes in $W$ switched to \acc during
$[t_{Ml},t_{Ml}+\Sigma^++2d)$, we conclude\footnote{This
statement relies on the constraints on the main state machines'
timeouts, which require that $T_2$ expiring is the critical
condition for switching to \rdy.} from the main state machines'
description that all nodes in $W$ are observed in state \rdy with
timeout $T_3$ being expired (or already switched to \prop or even
\acc) by time $t_{Ml}+T_2+T_3+\Sigma^++4d$.
Because all their \Next signals switch to one during
$[t_{k+1},t_{k+1}+\Sigma^+)$, all nodes in $W$ must therefore
have switched to \prop by time $t_{k+1}+\Sigma^++d$.
Consequently,\footnote{For details, we refer to the analysis of the
FATAL algorithm in the full paper.} as we stated that w.l.o.g.\
they do not switch to \acc again before time $t_{k+1}$, they do
so at times in $[t_{k+1},t_{k+1}+\Sigma^++2d)$ as claimed.

\medskip

This completes the induction. According to Statement (i), $t_1$ satisfies the
claimed bound on the stabilization time. With respect to this time, Statement
(iv) provides the skew bound, and combining it with Statements (ii) and (iii),
respectively, yields the stated accuracy bounds. Statements (vi) and (vii) show
the properties of the counters. Metastability-freedom of the state machine is
trivially guaranteed by the fact that each state has a unique successor state.
For the counter, we can infer metastability-freedom after stabilization from
the observation made in the proof that for times $t\geq t_1$, $(T_2,\acc)(t)=0$
at a non-faulty node implies that it is in state \accp with its cycle counter
equal to zero. This completes the proof.
\end{proof}

For some applications, one might require an even higher operational frequency
than provided by the quick cycle state machine. It turns out that there is a
simple solution to this issue.

\subsection*{Increasing the Frequency Further}
Given any pulse synchronization protocol, one can derive clocks operating at an
arbitrarily large frequency as follows. Whenever a pulse is triggered locally,
the nodes start to increase a local integer counter modulo some value $m\in \N$
at a speed of $\phi\in \R^+$ times that of a local clock, starting from~$0$.
Denote by $T^-$ the accuracy lower bound of the protocol and suppose that the
local clock controlling the counter runs at a speed between $1$ and $\rho\in
(1,\vartheta]$, i.e., its maximum drift is $\rho-1$.\footnote{We introduce
$\rho$ since one might want to invest into a single, more accurate clock source
per node in order to obtain smaller skews.} Once the counter reaches the value
$m-1$, it is halted until the next pulse. We demand that
\begin{equation}
m \leq \phi T^-. \label{eq:m}
\end{equation}
This approach is similar to the one presented in \cite{DD06}, enriched by
addressing the problem of metastability.

In the context of the FATAL$^+$ protocol, we get the following result.
\begin{corollary}\label{coro:clock}
Adding a counter as described above to the FATAL$^+$ protocol and concatenating
the counter values of the two counters at node $i\in V$ yields a bounded logical
clock $L_i\in \{0,\ldots,mM-1\}$. At any time $t$ when the protocol has
stabilized on some set $W$ (according to \theoremref{theorem:quick}), it holds
for any two nodes $i,j\in W$ that
\begin{equation*}
|L_i(t)-L_j(t) \operatorname{mod} mM|\leq \left\lceil
\phi\Sigma^++\left(1-\frac{1}{\rho}\right)m \right\rceil.
\end{equation*}
Once stabilized, these clocks do not ``jump'', i.e., they always increase by
exactly one mod $mM$, with at least $1/\rho$ time between any two
consecutive ``ticks''.

The amortized clock frequency is within the bounds
$m/(T_1^++T_3^++\Sigma^++3d_{\max}^+)$ and $\vartheta
m/(T_1^++T_3^+-\Sigma^+)$. Viewed as a state machine in our model, the clocks
$L_i$, where $i\in W$, are metastability-free after stabilization.
\end{corollary}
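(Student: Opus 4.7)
The plan is to layer this on top of \theoremref{theorem:quick}, which already provides skew $\Sigma^+$ between $\accp$ transitions at different nodes in $W$, accuracy bounds $(T_1^++T_3^+)/\vartheta - \Sigma^+$ and $T_1^++T_3^++\Sigma^++3d_{\max}^+$ for the inter-pulse interval, and metastability-freedom of the quick cycle state machine and its associated mod-$M$ counter after stabilization. The new element is the mod-$m$ counter driven by $\phi$ times a local clock, whose behaviour I analyse separately and then combine.

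First I would verify the ``well-defined'' and ``no jumps'' claims. Writing $L_i = (\text{quick cycle counter at }i)\cdot m + (\text{mod-}m\text{ counter at }i)$, the range $\{0,\ldots,mM-1\}$ is immediate. Between two consecutive $\accp$ transitions at $i$, the quick cycle counter is constant and the mod-$m$ counter rises from $0$ toward $m-1$ at speed $\phi$ per unit local time. By \inequalityref{eq:m} we have $m \leq \phi T^-$, hence the mod-$m$ counter reaches $m-1$ strictly before the next $\accp$ transition, then halts. At the next pulse, the mod-$m$ counter is reset from $m-1$ to $0$ while the quick cycle counter steps by one mod $M$; modulo $mM$ this combined event is a single increment of $L_i$. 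All other $L_i$ transitions are single increments of the mod-$m$ counter. The minimum reference-time separation of consecutive counter ticks is governed by one $\phi^{-1}$ local-time tick traversed at local rate at most $\rho$, which gives the stated spacing.

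Next I would dispatch the amortized frequency bound. Over one quick cycle interval at~$i$, $L_i$ advances by exactly $m$: the mod-$m$ counter produces $m-1$ ticks and the pulse-triggered reset-plus-carry supplies one more. Combining with the accuracy bounds for consecutive $\accp$ transitions at~$i$ from \theoremref{theorem:quick} yields the claimed frequency bounds.

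The main obstacle is the skew bound, which I would establish by a case analysis at any reference time $t$ after stabilization. Let $t_i^{(k)}$ denote the $k$-th $\accp$ time at $i$; by \theoremref{theorem:quick}, $|t_i^{(k)}-t_j^{(k)}|\leq \Sigma^+$ for all $k$ and all $i,j\in W$. Fix $t$ and let $k$ be the largest integer with $t_i^{(k)}\leq t$. \textbf{Case A (both nodes in the same cycle index):} if $t_j^{(k)}\leq t < t_j^{(k+1)}$, the quick cycle counters agree and $L_i(t)-L_j(t)$ is the difference of the mod-$m$ counters. Writing $\tilde c_\ell(t):=\phi(C_\ell(t)-C_\ell(t_\ell^{(k)}))$ for the idealized real-valued counter capped at $m-1$, the worst case is obtained when one clock runs at rate $\rho$ and the other at rate $1$ with pulse offset $\Sigma^+$; this yields $|\tilde c_i(t)-\tilde c_j(t)|\leq \phi\Sigma^+ + (1-1/\rho)m$, the $(1-1/\rho)m$ term being the drift-accumulated difference right before the faster counter saturates, after which it can only decrease. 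Discretization adds at most one, absorbed by the ceiling. \textbf{Case B ($i$ has pulsed, $j$ has not):} then $t_j^{(k)}\leq t_i^{(k+1)}\leq t < t_j^{(k+1)}$, the quick cycle counters differ by one mod $M$, and, by the analysis of Step~1, $j$'s mod-$m$ counter has already saturated at $m-1$ while $i$'s is small; computing $L_i(t)-L_j(t)\bmod mM$ shows the residue is again bounded by the same expression. The symmetric subcase is analogous. The ``no jumps'' property together with this pointwise bound gives the stated skew for all $t$.

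Finally, metastability-freedom after stabilization follows from two observations: the quick cycle state machine and its mod-$M$ counter are metastability-free once stabilized by \theoremref{theorem:quick}, while the mod-$m$ counter is driven exclusively by node $i$'s local clock with no cross-node input, so its increments occur at well-separated local times; the reset-on-pulse event is gated by an internal $\accp$ transition that is itself metastability-free after stabilization.
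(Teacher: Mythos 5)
Your proof is correct and follows essentially the same route as the paper's: bound the difference of the mod-$m$ counters by the worst-case fast/slow comparison $\phi\Sigma^+ +(1-1/\rho)m$ given pulse offset $\Sigma^+$, use the constraint $m\le\phi T^-$ to guarantee the counter halts at $m-1$ before the next pulse (which also yields the no-jump, tick-spacing, and metastability claims), and read the amortized frequency off the accuracy bounds of \theoremref{theorem:quick}. Your write-up is simply more explicit (the same-cycle/straddling-pulse case split and the ceiling/discretization step) than the paper's two-sentence version of the identical argument.
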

\begin{proof}
Observe that it takes at least $m/(\phi\rho)$ time for one of the new counters
to increase from $0$ to $m$. Since the counters are restarted at pulses, which
are triggered locally at most $\Sigma^+$ time apart, at the time when a ``fast''
node arrives at the value $m$, a ``slow'' node will have increased its clock by
at least $\lfloor m/\rho-\phi\Sigma^+\rfloor$. According to \inequalityref{eq:m},
slow nodes will be able to increase their counters to $m$ before the next pulse.
The claimed bound on the clock skew and the facts that clock increases are one
by one and at most every $1/\rho$ time follow.

The bound on the amortized clock frequency follows by considering the minimal
and maximal times $M$ iterations of the quick cycle may require.

The metastability-freedom of the clock is deduced from the metastability-freedom
of the individual counters. For the new counter this is guaranteed by
\inequalityref{eq:m}, since the counter is always halted at $0$ before it is
reset due to a new quick cycle pulse.
\end{proof}

We remark that in an implementation, one would probably utilize the better clock
source, if available, to drive $T_1^+$ and $T_3^+$ as well.\footnote{Since the
new counter is started together with $T_1^+$, this does not incur metastability.
Special handling is required for $T_3^+$ on the $M^{th}$ pulse of the quick
cycle, though.} Maximizing $m$ with respect to \inequalityref{eq:m} and choosing
$T_1^++T_3^+$ sufficiently large will thus result in clocks whose amortized
drift is arbitrarily close to $\rho$, the drift of the underlying local clock
source.

\newcommand{\LOW}{{LOW}}
\newcommand{\HIGH}{{HIGH}}
\newcommand{\true}{{true}}
\newcommand{\false}{{false}}
\newcommand{\Tr}{\mbox{\emph{Tr}}}
\newcommand{\TSMClock}{\emph{TSMClock}}
\newcommand{\TSMCStop}{\emph{TSMCStop}}
\newcommand{\InCLK}{\emph{InCLK}}

\section{Implementation}
\label{sec:implementation}

In this section, we provide an overview of our FPGA prototype implementation of
the FATAL$^+$ protocol. The purposes of this implementation are (i)
to serve as a proof of concept, (ii) to validate the predictions of 
the theoretical analysis, and (iii) to form a basis for the future development of 
protocol variants and engineering improvements. Rather than striving for
optimizing performance, area or power efficiency, our primary goal is hence
to essentially provide a direct mapping of the algorithmic description to 
hardware, and to evaluate its properties in various operating scenarios.

Our implementation does not follow the usual design practice, for several
reasons:

\emph{Asynchrony:} Targeting ultra-reliable clock generation in SoCs, the
implementation of FATAL$^+$ itself cannot rely on the availability of a
synchronous clock. Moreover, some performance-critical guards, like the one of
the transition from \prop to \acc in \figureref{fig:main}, are purely
asynchronous and should hence not be synchronized to a local clock. Even worse,
testing for activated guards synchronized to a local clock source bears the risk
to generate metastability, as remote signals originate in different clock
domains. On the other hand, conventional \emph{asynchronous state machines}
(ASM) are not well-suited for implementing
\figureref{fig:main}--\figureref{fig:top_alg} due to the possibility of choice
of successor states and continuously enabled (i.e., non-alternating) guards. Our
prototype relies on \emph{hybrid state machines} (HSM) that combine an ASM with
synchronous \emph{transition state machines} (TSM) that are started on demand
only.

\emph{Fault tolerance:} The presence of Byzantine faulty nodes
forced us to abandon the classic ``wait for all'' paradigm 
traditionally used for enforcing the indication principle 
in asynchronous designs: Failures may easily inhibit 
the completion of the request/acknowledge cycles typically used for
transition-based flow control. Timing constraints, established 
by our theoretical analysis, in conjunction with state-based communication
are resorted to in order to establish event ordering and synchronized 
executions in FATAL$^+$.

\emph{Self-Stabilization:} In sharp contrast to non-stabilizing algorithms,
which can always assume that there is a (substantial) number of non-faulty
nodes that run approximately synchronously and hence adhere to certain 
timing constraints, self-stabilizing algorithms cannot even assume this. 
Although FATAL$^+$ guarantees that non-faulty nodes will eventually
execute synchronously, even when started from an arbitrary state, 
the violation of timing constraints and hence metastability 
\cite{Mar81} cannot be avoided during stabilization. 
For example, state \acc in \figureref{fig:main} 
has two successors \slp and \rec, the guards of which could become true arbitrarily 
close to each other in certain stabilizing scenarios.
This is acceptable, though, as long as such
problematic events are neither systematic nor frequent, which is ensured
by the design and implementation of FATAL$^+$ (see \sectionref{sec:meta}).

\medskip

Inspecting \figureref{fig:main}--\figureref{fig:top_alg} reveals that the
state transitions of the FATAL$^+$ state machines are triggered by AND/OR combinations
of the following different types of conditions:
\begin{itemize}
  \item[(1)] A watchdog timer expires [``$(T_2,\acc)$''].
  \item[(2)] The state machines of a certain number ($1$, $\geq f+1$, or $\geq
  n-f$) of nodes reached a particular (subset of) state(s) at least once since
  the reset of the corresponding memory flags [``$\geq n-f~\acc$''].
  \item[(3)] The state machines of a certain number ($1$, $\geq f+1$, or $\geq
  n-f$) of nodes are currently in (one of) a particular (subset of) state(s)
  [``in \res''].
  \item[(4)] Always [``\true''].
\end{itemize}

The above requirements reveal the need for the following major building blocks:
\begin{itemize}
  \item Concurrent HSMs, implementing the states and transitions specified in the protocol.
  \item Communication infrastructure between those state machines, continuously conveying the state information.
  \item Watchdog timers (also with random timeouts) for implementing type (1) guards.
  \item Threshold modules and memory flags for implementing type (2) and type (3) guards.
\end{itemize}
Obviously, all these building blocks require implementations that match the assumptions
of the formal model in \sectionref{sec:model}. Apart from maintaining timing assumptions like
an end-to-end communication delay bound $t-\tau_{i,j}^{-1}(t) < d$, this also
includes the need to implement all stateful components in a self-stabilizing
way: They must be able to eventually recover from an arbitrary erroneous
internal state, including metastability, when operating in the specified
environment.

Before we proceed with a description of the implementations of these
components, we discuss how FATAL$^+$ deals with the threat of
metastability arising from our extreme fault scenarios.

\subsection{Metastability issues}
\label{sec:meta}

Reducing the potential for both metastability generation and metastability
propagation are important goals in the design and implementation of FATAL$^+$.
Although it is impossible to completely rule out metastability generation in the
presence of Byzantine faulty nodes (which may issue signal transitions at
arbitrary times) and during self-stabilization (where all nodes may be
completely asynchronous), we nevertheless achieved the following properties:
\begin{itemize}
  \item[(I)] Guaranteed metastability-freedom in fault-free executions after
  stabilization.
  \item[(II)] Non-faulty nodes are safeguarded against ``attacks'' by faulty
  nodes that aim at inducing metastability, in particular once the system has
  stabilized.
  \item[(III)] Metastable upsets at non-faulty nodes are rare during
  stabilization, therefore delaying stabilization as little as possible.
  \item[(IV)] Very small windows of vulnerability and the possibility to
  incorporate additional measures for decreasing the upset probability further.
\end{itemize}

The following approaches have been used in FATAL$^+$ to accomplish these goals
(additional details will be given in the subsequent sections):

(I) is guaranteed by our proofs of metastability-freedom, which exploit the fact
that all non-faulty nodes run approximately synchronously after stabilization.
It is hence relatively straightforward to ensure, via timing constraints, that
some data from remote ASMs does not change while it is used.

(II) is accomplished by several means, which make it very difficult (albeit not 
impossible) for a faulty node to generate/propagate metastability. Besides
avoiding any explicit control flow between ASMs by communicating states only,
which greatly reduces the dependency of a non-faulty receiver node from a
faulty sender, several forms of logical masking of metastability are employed.
One example is the combination of memory flags and threshold gates, which ensure
that possibly upset memory flags are always overruled quickly by correct ones at the
threshold output. A different form of logical masking occurs due to the fact
that, after stabilization, all non-faulty nodes execute the outer cycle of
the main state machine (\figureref{fig:main}) only: Since the outer cycle does
not involve any type (3) guard once stabilization is achieved, any metastability
originating from the (less metastability-safe) resynchronization algorithm
(\figureref{fig:resync}) and its extension (\figureref{fig:extended}) is
completely masked.

To accomplish (III), the measures outlined in (2) are complemented by adding time masking using 
randomization: The resynchronization routine (\figureref{fig:resync}) 
tries to initialize recovery from arbitrary states at random, sufficiently 
sparse points in time. It is hence very unlikely that non-faulty nodes are kept from
stabilizing due to metastable upsets. Moreover, if at the beginning of the
stabilization process $f'<f<n/3$ nodes are faulty, up to $f-f'$ metastable upsets can be
tolerated without keeping the remaining nodes from stabilizing; the nodes that
became subject to newly arising transient faults will stabilize quickly once
$n-f$ nodes established synchronization (cf.~\theoremref{theorem:constant}).

Finally, (IV) is achieved by implementing all building blocks that are
susceptible to metastable upsets, like memory flags, in a way that minimizes
the window of vulnerability. Moreover,  elastic
pipelines acting as metastability filters \cite{FFS09:ASYNC09}
or synchronizers can be added easily to further protect such elements.

\subsection{State machine communication}
\label{sec:communication}

According to our system model, an HSM must be able to continuously communicate
its current state system-wide: It is requested
that every receiver is informed of the sender's current state within $d$ time
(resp.\ within $d_{\min}^+$ and $d_{\max}^+$ for the quick cycle algorithm).
For simplicity, we use parallel communication, by means of a suitably sized data
bus, in our implementation.\footnote{It is, however, reasonably easy to
replace parallel communication by serial communication, e.g., by extending the 
(synchronous) TSM appropriately.} Since a node treats itself like any other
node in type (2) and type (3) guards with thresholds, it comprises a complete
receiver as described below for every node in the system (including itself).

\figureref{fig:SendRec} shows the circuitry
used for communicating the current state of the main algorithm in
\figureref{fig:main}.
\begin{figure}
\centering
\includegraphics[width=\linewidth]{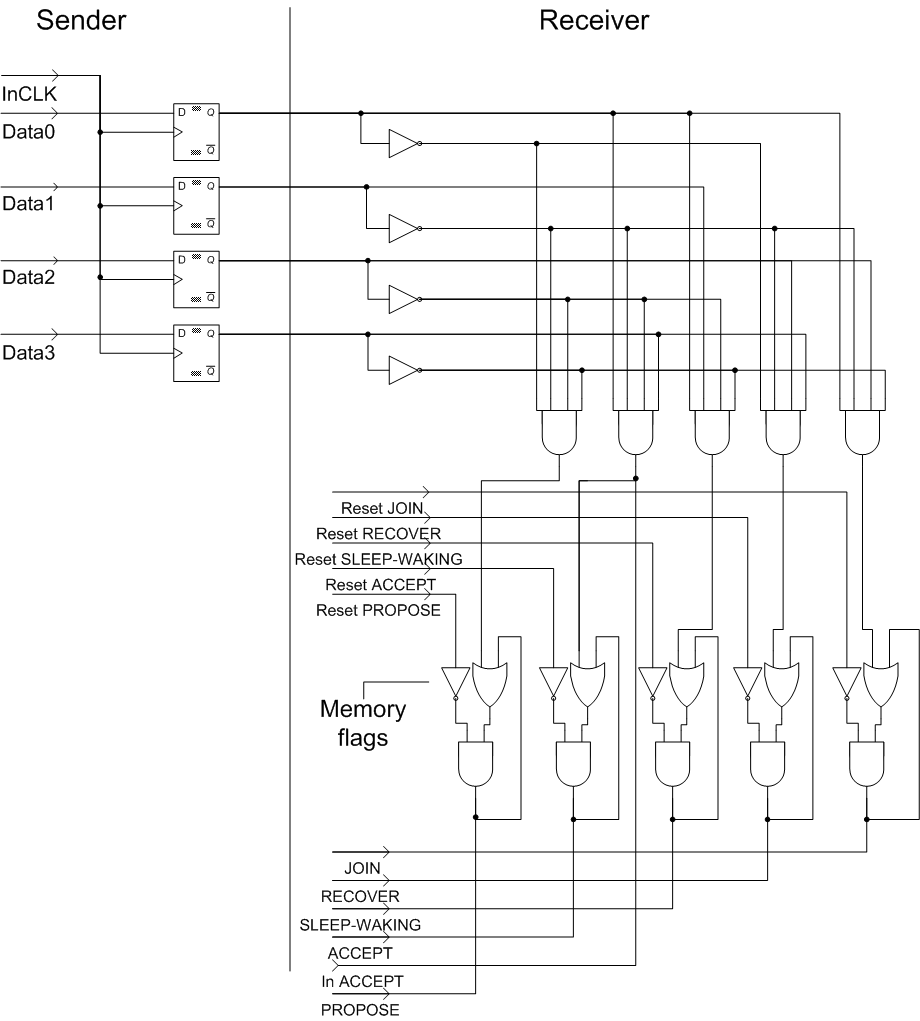}
\caption{Sender and single receiver (including memory flags) for the ASM of the main algorithm.}
\label{fig:SendRec}
\end{figure}
The sender consists of a simple array of flip-flops, which drive the parallel
data bus that thus continuously reflects the current state of the sender's HSM. 
In sharp contrast to handshake-based communication, reading at the receiver 
occurs without any coupling to the sender here. As argued in \sectionref{sec:meta},
the synchrony between non-faulty nodes guaranteed by the FATAL$^+$ protocol guarantees 
that the sender state data will always be stable when read after stabilization. For the
stabilization phase, we cannot give such a guarantee but take some (acceptable) 
risk of metastability. 

To avoid the unacceptable risk of reading and capturing false intermediate sender states
due to different delays on the wires of the data bus, delay-insensitive \cite{Ver88} state coding must be used. We have chosen the following encoding for the main state machine in \figureref{fig:main}:
\begin{center}
\begin{tabular}{llp{1cm}ll}
propose & 0000 &&
accept & 1001\\
sleep & 1011 &&
sleep $\to$ waking & 0011\\
waking & 0101 &&
ready & 0110\\
recover & 1100 &&
join & 1010\\
\end{tabular}
\end{center}
For the other state machines making up FATAL$^+$, it suffices to communicate only a
single bit of state information (\supp or \none in \figureref{fig:extended}, \init or \emph{wait}
in \figureref{fig:resync}, and \propp or \emph{none}$^+$ in \figureref{fig:top_alg}). Hence, every
bus consists of a single wire here, and the decoder in the receiver becomes trivial.

The receiver consists of a simple combinational decoder consisting of AND gates,
which generate a 1-out-of-$m$ encoding of the binary representation of the state
communicated via the data bus. The decoded signals correspond to a single sender
state each. This information is directly used for type (3) guards, and fed into
memory flags for type (2) guards. Every memory flag is just an SR-latch with
dominant reset, whose functional equivalents are also included in \figureref{fig:SendRec}.
Note that a memory flag is set depending on the state communicated by the sender, but 
(dominantly) cleared under the receiver's control. 

A memory flag may become metastable when the inputs change during stabilization
of its feedback loop, which can occur due to (a) input glitches and/or (b)
simultaneous falling transitions on both inputs. However, for correct receivers,
(a) can only occur in case of a faulty sender, and (b) is again only
possible during stabilization: Once non-faulty nodes execute the outer cycle of \figureref{fig:main}, it is guaranteed that e.g.\ all non-faulty nodes enter \acc\ before 
the first one leaves. The probability of an upset is thus very
small, and could be further reduced by means of an elastic pipeline acting as
metastability filter (which must be accounted for in the delay bounds). 

The most straightforward implementation of the threshold modules used for
generating the $\geq f+1$ and $\geq n-f$ thresholds in type (2) and type (3)
guards is a simple sum-of-product
network, which just builds the OR of all AND combinations of
$f+1$ resp.\ $n-f$ inputs. In our FPGA implementation, a threshold module
is built by means of lookup-tables (LUT); some dedicated experiments
confirmed that they work glitch-free for monotonic inputs (as provided
by the memory flags).

\subsection{Hybrid state machines}
\label{sec:SM}

Our prototype implementation of FATAL$^+$ relies on \emph{hybrid state machines} (HSM):
An ASM is used for determining, by asynchronously evaluating the guards, the points 
in time when a state transition shall occur. Our ASMs have been built by
deriving a \emph{state transition graph} (STG) specification directly\footnote{Note 
that the STG specification had to be extended slightly in order to transform our 
possibly non-alternating guards (which might be continuously enabled in some cycle, 
in particular during stabilization) into strictly alternating ones.} from 
\figurerefs{fig:main}--\ref{fig:top_alg} and generating the delay-insensitive
implementation via Petrify \cite{CKKLY02}. 
The actual state transition of an HSM is governed by an underlying synchronous 
\emph{transition state machine} (TSM). The TSM resolves a possibly 
non-deterministic
choice of the successor state and then performs the required transition actions:
\begin{enumerate}
\item Reset of memory flags and watchdog timers
\item Communication of the new state
\item Actual transition to the new state (i.e., enabling of further transitions
of the ASM)
\end{enumerate}
The TSM is driven by a pausible clock (see \sectionref{sec:osc}), which is started 
dynamically by the ASM before the transition. Note that this avoids the need for
synchronization with a free-running clock and hence preserves the ASMs
continuous time scale. 

\begin{figure}[t!]
\centering
\begin{tikzpicture}
\draw (0,0) node[circle,minimum width=1.5cm,draw] (s) {$A$};
\draw (1.5,0) node[circle,minimum width=1cm,draw] (s1) {\emph{Syn}};
\draw (3,0) node[circle,minimum width=1cm,draw] (res) {\emph{Cmt}};
\draw (4.5,0) node[circle,minimum width=1cm,draw] (s2) {\emph{Trm}};
\draw (6,0) node[circle,minimum width=1.5cm,draw] (sp) {$B$};
\draw (3,0) node[minimum width=4.0cm,minimum height=1.1cm,draw] (box) {};

\draw (-0.2,2) node {\TSMClock};
\draw (0.9,1.5) node[circle,minimum width=0.5cm,draw] (clk1) {\clock};
\draw (2.25,1.5) node[circle,minimum width=0.5cm,draw] (clk2) {\clock};
\draw (3.75,1.5) node[circle,minimum width=0.5cm,draw] (clk3) {\clock};

\path[->] (s) edge node[below=0.5cm] {$G'$} (s1);
\path[->] (s1) edge (res);
\path[->] (res) edge (s2);
\path[->] (s2) edge node[below=0.5cm] {$\true$} (sp);
\path[->] (s) edge  node[left] {{\footnotesize $\neg$\TSMCStop}}(clk1);
\path[->] (s2) edge  node[right] {{\footnotesize \TSMCStop}}(clk3);
\path[->] (clk1) edge (clk2);
\path[->] (clk2) edge (clk3);
\path[->] (clk1) edge[dotted] ($(clk1.south)+(0,-0.8cm)$) ;
\path[->] (clk2) edge[dotted] ($(clk2.south)+(0,-0.8cm)$);
\path[->] (clk3) edge[dotted] ($(clk3.south)+(0,-0.8cm)$);
\end{tikzpicture}
\caption{Example state transition, including the corresponding TSM.}\label{fig:TSM}
\end{figure}
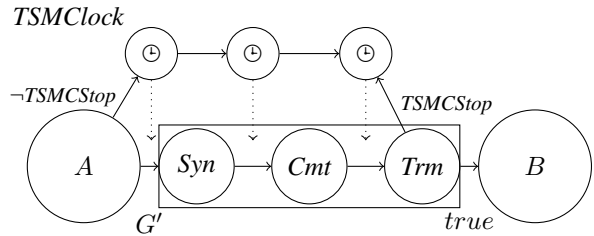

The TSM works as follows (see \figureref{fig:TSM}): Assume that the ASM is in
state $A$, and that the guard $G$ for the transition from $A$ to $B$ becomes
true. In the absence of an inhibit signal (indicating that another transition is
currently being taken, see below), the TSM clock is started. With every rising
edge of $\TSMClock$, the TSM unconditionally moves through a sequence of three
states: \emph{synchronize} (\emph{Syn}), \emph{commit} (\emph{Cmt}), and
\emph{terminate} (\emph{Trm}) shown in the rectangular box in
\figureref{fig:TSM}. In \emph{Syn}, the inhibit signal is activated to prevent
other choices from being executed in case of more than one guard becoming true.
Whereas any ambiguity can easily be resolved via some priority rule,
metastability due to (a) enabled guards that become immediately disabled again
or (b) new guards that are enabled close to transition time cannot be ruled out
in general here. However, as argued in \sectionref{sec:meta}, (a) could only do
harm to FATAL$^+$ during stabilization, due to type (3) guards; recall that type
(1) and type (2) guards are always monotonic, with the reset (of watchdog timers
and memory flags) being under the control of the local state machine. Similarly,
our proofs reveal that upsets due to (b) are fully masked after stabilization.
Thus, after stabilization, metastability of the TSM can only occur due to
unstable inputs, i.e., upsets in memory flags. Given the small window of
vulnerability of the synchronizing stage for \emph{Syn}, the resulting very low
probability of a metastable upset is considered acceptable.

Once the TSM has reached \emph{Syn}, it has decided to actually take the
transition to $B$ and hence moves on to state \emph{Cmt}. Here the watchdog
timer associated with $B$ and possibly some memory flags are cleared according
to the FATAL$^+$ state machine, and the new state B is captured by the output
flip-flops driving the state communication data bus (recall
\sectionref{sec:communication}). Note that the resulting delay must be accounted
for in the communication delay bounds $d$, $d_{\max}^+$ and $d_{\min}^+$.
Finally, the TSM moves on to state \emph{Trm}, in which the reset signals are
inactivated again and the TSM clock is halted. The inhibit signal is also
cleared here, which effectively moves the ASM to state $B$. It is only now that
guards pertaining to state $B$ may become true.

\subsection{Pausible oscillator}
\label{sec:osc}

The TSM clock is an asynchronously startable and synchronously
stoppable ring oscillator, which provides a clock signal \TSMClock\ that is \LOW\
when the clock is stopped via an input signal \TSMCStop. Note that
copies of this oscillator are used for driving the watchdog timers 
presented in \sectionref{sec:timers}.

\begin{figure}
\centering
\includegraphics[width=0.8\linewidth]{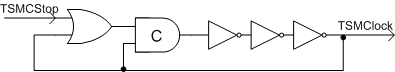}
\caption{Pausible ring oscillator implementing the TSM clock.}
\label{fig:osc}
\end{figure}

The operation of the TSM clock circuit shown in \figureref{fig:osc} is
straightforward: In its initial state, \TSMCStop=\HIGH\ and the Muller C-gate
has \HIGH\ at its output, such that \TSMClock=\LOW. Note that the circuit also
stabilizes to this state if the Muller C-gate was erroneously initialized to \LOW,
as the ring oscillator would eventually generate \TSMClock=\HIGH, enforcing
the correct initial value \HIGH\ of the C-gate.

When the ASM requests a state transition, at some arbitrary time when a
transition guard became true, it just sets \TSMCStop=\LOW. This
starts the TSM clock and produces the first rising edge of \TSMClock\ half a clock 
cycle time later. As long as \TSMCStop\ remains \LOW, the ring oscillator runs freely. 

The frequency of the ring oscillator is primarily determined by the (odd)
number of inverters in the feedback loop.\footnote{In our FPGA implementation,
the oscillator frequency is so high that we also employ a frequency divider
at the output.} It varies heavily with the operating conditions, in
particular with supply voltage and temperature: The resulting (two-sided)
clock drift $\xi$ is typically in the
range of $7\% \dots 9\%$ for uncompensated ring oscillators like ours;
in ASICs, it could be lowered down to $1\% \dots 2\%$ by special compensation
techniques \cite{SAA06}. Note that the two-sided clock drifts map to
$\vartheta = (1+\xi)/(1-\xi)$ bounds
of $1.15\dots 1.19$ and $1.02\dots 1.04$, respectively.

The stopping of \TSMClock\ is regularly initiated by the TSM itself:
With the rising edge of \TSMClock\ that moves
the TSM into \emph{Trm}, \TSMCStop\ is set to \HIGH. Since \TSMClock\
is also \HIGH\ after the rising edge,\footnote{Obviously, we only have to take 
care in the timing analysis that setting \TSMCStop=\HIGH\ occurs well within 
the first half period.} the C-gate output is also forced to \HIGH. Hence,
after having finished the half period of this final clock cycle, the feedback
loop is frozen and \TSMClock\ remains \LOW.

For metastability-free operation of the C-gate in \figureref{fig:osc}, (a) the
falling transition of \TSMCStop\ must not occur simultaneously with a rising
edge of \TSMClock, and (b) the rising transition of \TSMCStop\ must not occur
simultaneously with the falling edge of \TSMClock. (a) is guaranteed by stopping
the clock in state \emph{Trm} of the TSM, since the output of the C-gate is
permanently forced to \HIGH\ on this occasion; \TSMClock\ cannot hence generate
a rising transition before \TSMCStop\ goes to \LOW\ again. Whereas this
synchronous stopping normally also ensures (b), we cannot always rule out the
possibility of getting \TSMCStop=\HIGH\ close to the \emph{first} rising edge of
\TSMClock: (b) could thus occur due to prematurely disabled type (3) guards,
which we discussed already with respect to their potential to create
metastability in the TSM, recall \sectionref{sec:SM}. Besides being a rare
event, this can only do harm during stabilization, however.

\subsection{Watchdog Timers}
\label{sec:timers}

Recall that every ASM state, except for \acc\ in \figureref{fig:main},
is associated with at most one watchdog timer required for type (1) 
guards; \acc\ is associated with
three timers (for $T_1$ and $T_2$ as well as for $T_2^+$ in
\figureref{fig:top_alg}). A timer is reset by the TSM when its
associated state is entered.

According to \figureref{fig:timer}, every watchdog timer consists of a
synchronous
resettable up-counter that is clocked by some oscillator, and a timeout register that holds the timeout value. A
comparator raises an output signal if the counter value is
greater or equal to the register value. An SR latch with dominant reset memorizes 
this \emph{expired} condition until the timer is re-triggered.

\begin{figure}[ht!]
\centering
\includegraphics[width=\linewidth]{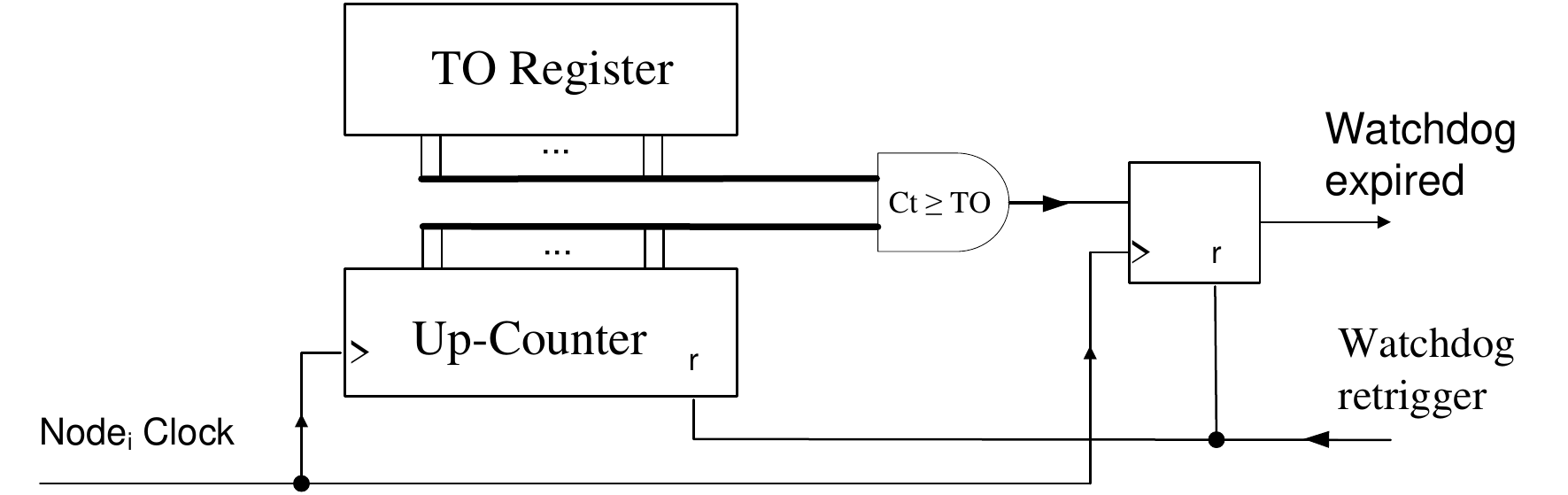}
\caption{Implementation principle of watchdog timers.}
\label{fig:timer}
\end{figure}

Like the TSM, timers are driven by pausible oscillators, which are started by the
TSM after resetting the timer and stopped synchronously upon timer expiration. 
Note that every
timer (except for the multiple \acc\ timers, which share a common
oscillator that is stopped when the largest timeout expires) 
is provided with a dedicated oscillator in our implementation for simplicity.
This not only
avoids quantization errors in the continuous timing of the ASM state transitions, but is
also mandatory for avoiding the potential of metastability due to timer resets
colliding with the transitions of a free-running clock. In our implementation, the 
timer reset takes place in TSM state \emph{Cmt}, while the oscillator is started 
in state \emph{Trm}. This well ordered sequence rules out all metastability issues.

As for the watchdog timer with random timeout~$R_3$ in \figureref{fig:resync},
our implementation uses an \emph{linear feedback shift register}
(LFSR) clocked by a dedicated oscillator: A uniformly
distributed random value, sampled from the LFSR, is loaded into the
timeout register whenever the watchdog timer is re-triggered. Note that for many
settings, it is reasonable to assume that the new random value remains a secret
until the timeout expires, as it is not read or in any other way considered by
the node until then. As our prototype implementation is not meant for studying
security issues, this simple implementation is thus sufficient.

\section{Experimental Evaluation}
\label{sec:experiments}

Our prototype implementation has been written in VHDL and compiled
for an Altera Cyclone IV FPGA using the Quartus tool. 

Apart from standard functional and timing verification via Modelsim, we conducted
some preliminary experiments for verifying the assumed properties (glitch-freeness, monotonicity, etc.)
of the synthetisized implementations of our core building blocks: Since
FPGAs do not natively provide the basic elements required for asynchronous designs, and we have no control 
over the actual mapping of functions to the available LUTs (e.g.\ our threshold modules are implemented via LUT
instead of the intended combinational AND-OR networks), we had
to make sure that properties that hold naturally in ``real'' asynchronous
implementations also hold here. Backed up by the (positive) results of
these experiments, a complete system consisting of $n=4$ resp.\ $n=8$ nodes 
(tolerating at most $f=1$ resp.\ $f=2$
Byzantine faulty nodes) has been built and verified to work as
expected; overall, they consume 23000 resp.\ 55000 logic blocks.
Note however, that both designs include the test environments which
makes up a significant part of the designs.

To facilitate systematic experiments, we also developed a custom test bench 
 that provides the following functionality:
\begin{enumerate}
\item[(1)] Measurement of pulse frequency and skew at different nodes.
\item[(2)] Continuous monitoring of the potential of non-deter\-ministic HSM
state transitions.
\item[(3)] Starting the entire system from an arbitrary state (including memory
flags and timers), both deterministically and randomly chosen.
\item[(4)] Resetting a single node to some initial state, at arbitrary times.
\item[(5)] Varying the clock frequency of any oscillator, at arbitrary times.
\item[(6)] Varying the communication delay between any pairs of sender and
receiver, at arbitrary times.
\end{enumerate}
All these experiments can be done with and without up to $f$ (actually, $f+1$ to also include
excessively many) Byzantine nodes. To this end, the HSMs of
at most $f+1=3$ nodes can be replaced by special devices that allow to (possibly inconsistently)
communicate, via the communication data buses, any HSM state to any receiver HSM at any time.

(1) is accomplished using standard measurement equipment (logic analyzer, oscilloscope, 
frequency counter) attached to the appropriate signals routed via output pins. (2) is
implemented by memorizing any event where more than one guard
is enabled when the TSM performs its first state transition, in a flag that can be externally monitored.

(3) is realized by adding a scan-chain to the implementation, which allows
to serially shift-in arbitrary initial system states at run-time. Repeated
random experiments are controlled via a Python script executed at a PC workstation,
which is connected via USB to an ATMega 16 microcontroller (uC) that acts as a scan-controller
towards the FPGA: The uC takes a bit-stream representing an initial 
configuration, sends it to the FPGA via the serial scan-chain interface, 
and signals the FPGA to start execution of FATAL$^+$. When
the system has stabilized, the
uC informs the Python script which records the stabilization time and proceeds with sending the
next initial configuration.

To enable (4)--(6), the testbench provides a global high-resolution clock that can
be used for triggering mode changes. To
ensure its synchrony w.r.t.\ the various node clocks, all start/stoppable ring oscillators
are replaced by start/stoppable oscillators that derive their output from the global
clock signal. (4) is achieved by just forcing a node to reset to its
initial state for this run at any time during the current execution.
In order to facilitate (5), dividers combined with clock multipliers (PLLs) are 
used: For any oscillator, it is possible to choose
one of five different frequencies (0, excessively slow, slow, fast, excessively fast) at
any time.
For (6), a variable delay line implemented as a synchronous shift 
register of length $X\in[0,15]$, driven by the global clock, can be inserted in any 
data bus connecting different HSMs individually. 

In order to exercise also complex test scenarios in a reproducible way, a dedicated
\emph{testbed execution state machine} (TESM), driven by the global clock, is used
to control the times and nodes when and where clock speeds, transmission delays and 
communicated fault states are changed and when a single node is reset
throughout an execution of the system. 
Transition guards may involve global time and any combinatorial expression
of signals used in the implementation of FATAL$^+$, i.e., any predicate on the
current system state.\footnote{To decrease the experiment setup time
(after all, changing the TESM requires recompilation of the entire system),
the TESM is gradually changed to also incorporate additional parameters and configuration
information downloaded at run-time via the uC.}

\medskip

Using our testbench, it was not too difficult to get our FATAL$^+$ up and 
running. As expected, we spotted several hidden design errors that showed
up during our experiments, but also some errors (like a missing factor
of $\vartheta$ in one of our timeouts due to a typo) in the initial version of
our theoretical analysis, which caused deviations of the measured w.r.t.\
the predicted performance.

Finally, using the implementation parameters $\vartheta=1.3$, $d=13T$,
     $d_{\min}^+ = d_{\max}^+ = 3T$, where $T$ is the experimental
     clock period~$T = 400$ns, and minimal timeouts according to the
     constraints, we conducted the following experiments, observing
     the behavior of both, the FATAL$^+$ as well as the underlying
     FATAL system:  

\emph{(A)} Maximum skew scenarios, including effects of excessively
     small/fast  clocks and message delays: The experimental results
     confirmed the analytic predictions as being tight: As shown in
     \figureref{fig:screenshot1}, pulses of the 8 node FATAL resp.\
     FATAL$^+$ system occur at a frequency of about 62Hz resp.\ 10kHz.
Note that the quite low values for the frequency stem from the fact
     that we were intentionally slowing down the system in order to
     carry out our worst-case experiments.
     
The figure further clearly demonstrates the capability of FATAL$^+$ to
     generate pulses with significantly less skew ($1\mu s$) on top of
     the FATAL pulses.

\begin{figure}[bt]
  \centering
  \includegraphics[width=0.9\columnwidth]{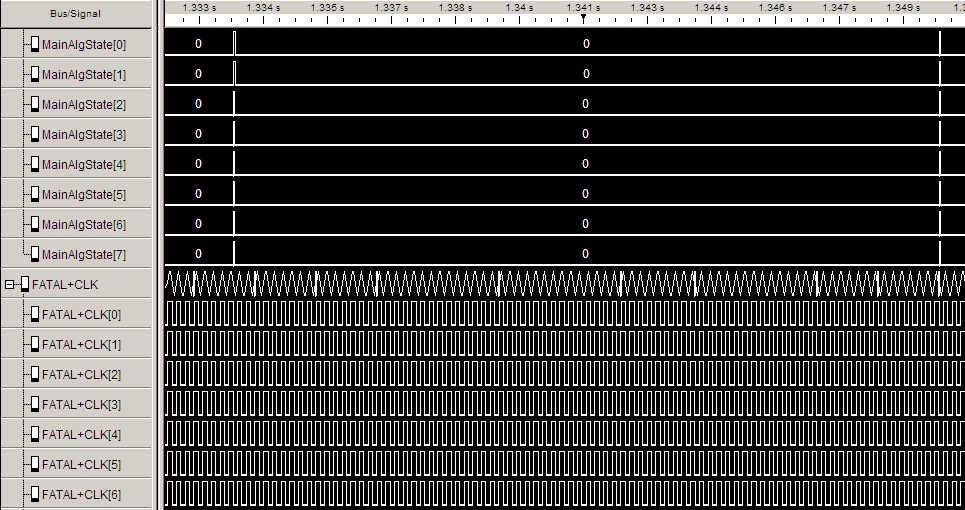}
  \caption{FATAL and FATAL$^+$ clocks: {\tt MainAlgState[i]}=1 iff $i$ is in \acc,
           and {\tt FATAL+CLK[i]} is $i$'s FATAL+
           signal.}\label{fig:screenshot1}
\end{figure}


Further experiments, involving $f=2$ Byzantine nodes, were used to
     produce a worst-case scenario for the FATAL skew ($6\mu s$).
The resulting waveform is depicted in \figureref{fig:screenshot2}.

\begin{figure}[bt]
  \centering
  \includegraphics[width=0.9\columnwidth]{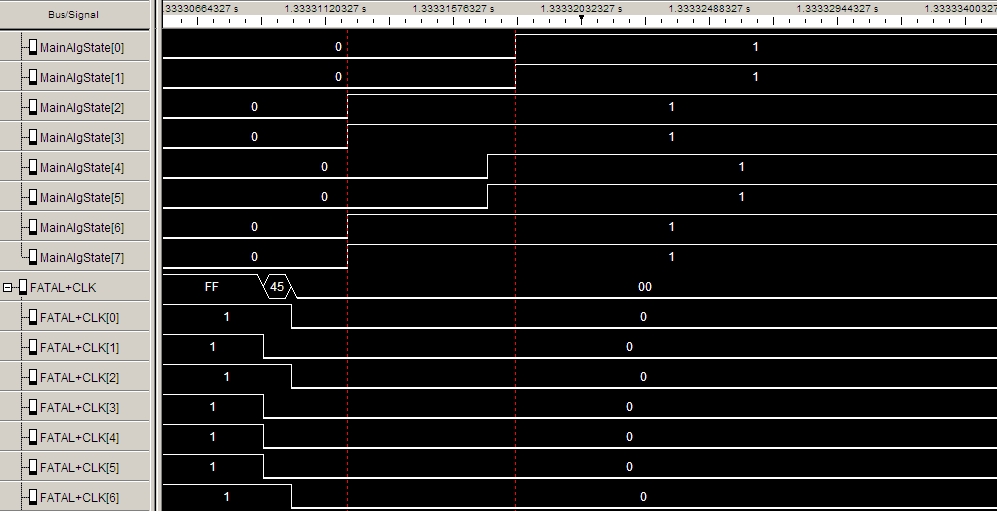}
  \caption{Worst-case skew scenario for FATAL clocks.}\label{fig:screenshot2}
\end{figure}

\emph{(B)} Scenarios leading to the potential of non-deterministic HSM state
     transitions in the absence of Byzantine nodes (which would invalidate our
     proof of metastability-freedom if happening after stabilization): We run
     17000 experiments, in each of which the 8 node system was set
     up with randomly chosen message delays between nodes and random
     clock speeds and stabilized from random initial states. Within 10
     seconds from stabilization on, not a single upset was encountered in any
     instance.

\emph{(C)} Stabilization of an 8-node system from random initial states, with
randomly chosen clock speeds and message delays (without Byzantine
nodes). Over 4000 runs have been performed. A considerable fraction of the setups stabilizes
within less than 0.035 seconds, which can be credited to the fast stabilization
mechanism intended for individual nodes resynchronizing to a running system (see
\figureref{fig:e1} and \figureref{fig:e2}). The remaining runs stabilize, supported by the
resynchronization routine, in less than 10 seconds, which is less than
the system's upper bound on $T(1)$. Note that the stabilization time is
inversely proportional to the frequency, i.e., in a system that is not slowed
down stabilization is orders of magnitude faster.

\begin{figure}[bt]
  \begin{minipage}[t]{0.48\columnwidth}
    \centering
    \includegraphics[width=0.95\columnwidth]{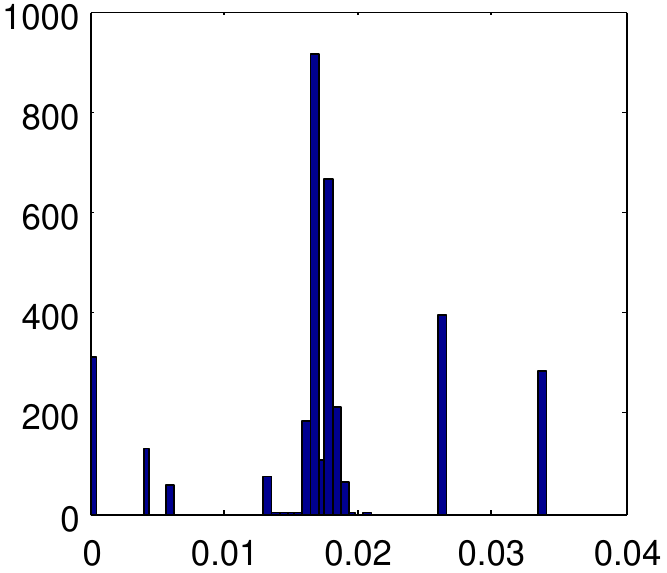}
    \caption{Head of distribution of stabilization times (in s) for
      over 6500 randomly initialized 8-node instances.}\label{fig:e1}
  \end{minipage}
  \hfill
  \begin{minipage}[t]{0.48\columnwidth}
    \centering
    \includegraphics[width=0.87\columnwidth]{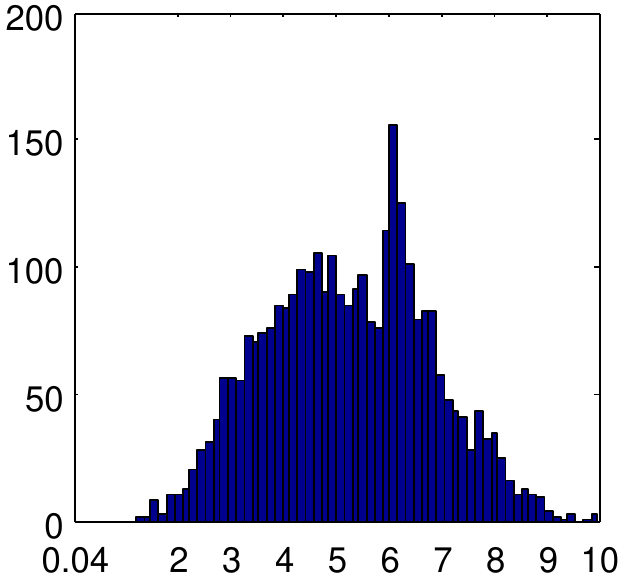}
    \caption{Tail of distribution of stabilization times (in s) for
      over 6500 randomly initialized 8-node instances.}\label{fig:e2}
  \end{minipage}
\end{figure}

\section{Conclusions}\label{sec:conclusions}

We conclude with a few considerations regarding the asymptotic complexity of
implementations of FATAL+ and future work. The algorithm has the favorable
property that nodes broadcast a constant number of bits in constant time, which
clearly is optimal. While it would be beneficial to reduce node degrees, this
must come at the price of reducing the resilience to faults~\cite{PSL80,DHS86}.
In terms of the number of Byzantine faults the algorithm can sustain in relation
to node degrees, the algorithm is asymptotically optimal as well. It is subject
to future work to extend the algorithm to be applicable to networks of lower
degree in a way preserving resilience to a (local) number of faults that is
optimal in terms of connectivity.

Furthermore, it is not difficult to see that except for the threshold modules,
each node comprises a number of basic components that is linear in $n$
(cf.~\cite{FSFK06:edcc}, where similar building blocks were used). In an ASIC
implementation, one could implement the threshold modules by sorting networks,
resulting in a latency of $\BO(\log n)$ and a gate complexity of $\BO(n \log
n)$~\cite{ajtai83}. Clearly, it is necessary to have conditions involving more
than $f$ nodes in order to overcome $f$ Byzantine faults. Hence, assuming
constant fan-in of the gates, both the current and envisioned solutions are
asymptotically optimal with respect to latency. Optimality of an
implementation relying on sorting networks with respect to gate complexity is
not immediate, however there is at most a logarithmic gap to the trivial lower
bound of $\Omega(n)$.

\bibliographystyle{IEEEtran}

\bibliography{pulse}

\end{document}